\documentclass[12pt,regno]{amsart}
\usepackage{hyperref}
\usepackage{latexsym}
\usepackage{amssymb,amsfonts,amsmath}
\usepackage{graphicx}
\usepackage{indentfirst}
\usepackage{amsmath}
\usepackage{amsfonts}
\usepackage{amssymb}%
\setcounter{MaxMatrixCols}{30}
\usepackage{comment}
\usepackage{mathtools}
\usepackage{amsthm}
\usepackage{enumerate}
\usepackage{extarrows}
\usepackage{relsize}

\usepackage[cmtip,all]{xy}
\usepackage{verbatim}
\usepackage{tikz-cd}
\usepackage{mathtools}
\usepackage{mathrsfs}
\usepackage{geometry}
\geometry{verbose,a4paper,tmargin=30mm,bmargin=30mm,lmargin=18mm,rmargin=18mm}

\DeclareMathAlphabet{\mathpzc}{OT1}{pzc}{m}{it}

\newcommand{\beq}{\begin{equation}}
\newcommand{\eeq}{\end{equation}}
\newcommand{\bear}{\begin{eqnarray}}
\newcommand{\eear}{\end{eqnarray}}

\newcommand*{\defeq}{\mathrel{\vcenter{\baselineskip0.5ex \lineskiplimit0pt
                     \hbox{\scriptsize.}\hbox{\scriptsize.}}}%
                     =}

\ifx\pdfoutput\undefined
\else
\fi
\hypersetup{colorlinks=false,bookmarksopen,bookmarksnumbered,citecolor=blue,
pdfstartview=FitH}

\parskip=\medskipamount
\arraycolsep 2pt

\def\a{\alpha}

\def\b{\beta}

\newcommand{\be}{\begin{equation}}
\newcommand{\ee}{\end{equation}}
\newcommand{\bea}{\begin{eqnarray}}
\newcommand{\eea}{\end{eqnarray}}

\newcommand{\ba}{\begin{array}}
\newcommand{\ea}{\end{array}}

\def\double #1{#1{\hbox{\kern-2pt $#1$}}}

\newcommand{\bsubeq}{\begin{subequations}}
\newcommand{\esubeq}{\end{subequations}}

\newcommand{\virgolette}{``}

\theoremstyle{definition}
\newtheorem{defn}{Definition}
\theoremstyle{plain}
\newtheorem{thm}{Theorem}
\newtheorem{prop}[thm]{Proposition}
\theoremstyle{remark}
\newtheorem*{remark}{Remark}

\setlength{\parindent}{0pt}

\begin{document}

\begin{flushright}
\par\end{flushright}
\vspace{1cm}

\title{Generalised Cocycles and Super $p$-Branes}
\date{}

\author{C.\ A.\ Cremonini}
\author{P.\ A.\ Grassi}

\email{carlo.alberto.cremonini@gmail.com, pietro.grassi@uniupo.it}

\address{Faculty of Mathematics and Physics, Mathematical Institute, Charles University Prague, Sokolovska 49/83, 186 75 Prague} 
\address{Dipartimento di Scienze e Innovazione Tecnologica (DiSIT), Universit\`a del Piemonte Orientale, viale T. Michel, 11, 15121 Alessandria, Italy}
\address{INFN, Sezione di Torino, via P. Giuria 1, 10125 Torino, Italy}
\address{Arnold Regge Center, Via P. Giuria 1, 10125, Torino, Italy}

\begin{abstract}

We study the cohomology (cocycles) of Lie superalgebras for the generalised complex of forms: superforms, pseudoforms and integral forms. We argue that these cocycles might be interpreted in the light of a new brane scan as generators of new higher-WZW terms and might provide new sources for supergravity. We use the technique of spectral sequences to abstractly compute the Chevalley-Eilenberg cohomology. We first focus on the superalgebra $\mathfrak{osp}(2|2)$ and show that there exist non-empty cohomology spaces among pseudoforms related to sub-superalgebras. We then extend some classical theorems by Koszul to include pseudoforms and integral forms. Further, we conjecture that the Poincar\'e duality extends to Lie superalgebras, as long as all the complexes of forms are taken into account and we prove that this holds for $\mathfrak{osp}(2|2)$. We finally construct the cohomology representatives explicitly by using a distributional realisation of pseudoforms and integral forms. On one hand, these results show that the cohomology of Lie superalgebras is larger than expected; on the other hand, we show the emergence of completely new cohomology classes represented by pseudoforms. These classes represent integral form classes of sub-superstructures.

\end{abstract}

\maketitle
\setcounter{footnote}{0}
\tableofcontents

\vfill
\eject
\section*{Introduction} \setcounter{equation}{0}

The advent of Lie superalgebras \cite{Kac:1977em} was impacting not only for new mathematical studies but also 
in the physical realm of supersymmetric theories. Since then, several applications and discoveries have used the natural 
generalization of the Lie algebra framework to graded Lie algebras, and in particular, several studies were carried out to classify all possible interesting examples 
of Lie superalgebras \cite{Grozman:1997xn,Kac:1977em,Leites:1982ia,Manin:1988ds}. As for Lie algebras, one of these classification techniques involves 
algebraic invariants and the corresponding Chevalley-Eilenberg cohomology \cite{CE,Koszul}. Therefore, it 
appeared straightforward to translate the Chevalley-Eilenberg analysis to the context of Lie superalgebras, as in the pioneering work \cite{Fuks}. 

The main difference between Lie superalgebras and Lie algebras is the presence of anti-commuting (odd) generators.
For a complete dictionary on superalgebras, we refer to \cite{Frappat}. 
Associated with those anti-commuting generators, one has the dual Maurer-Cartan forms which appear as commuting differential forms, in contrast 
to the usual anti-commuting 1-forms associated with the even generators. Using the Maurer-Cartan equations,  the Chevalley-Eilenberg cohomology is easily defined by following \cite{CE,Koszul} on the 
space of differential forms, namely on \emph{superforms}. General theorems for Lie algebras (see, e.g., \cite{GHV}) point out that there is a correspondence between the algebraic invariants and the cohomology classes. 
For Lie superalgebras, the crucial point 
is that the full cohomology does not coincide with the CE-cohomology of superforms only, but rather it embraces new 
types of forms which are known, in the super-geometric context, as {\it integral forms} and {\it pseudoforms} \cite{lei,CGN,Manin:1988ds,OP,Pen,Witten}. The 
former have been introduced on supermanifolds to implement a meaningful integration theory \cite{Manin:1988ds,lei} 
and their formal properties were studied in the seminal papers \cite{BS,Belopolsky:1996cy,Belopolsky:1997bg,Belopolsky:1997jz,FMS,VV}, using a distributional realisation. These new forms define a 
new complex on which the CE differential can be defined. This construction has been explored in \cite{CCGN2} where relevant details are discussed. 

The superform cohomology has been well known since the '80s (see, e.g., \cite{Fuks}) and, recently, has been explored in several works (see, e.g., \cite{boe,Lebedev:2004mq,Scheunert:1997ks,pro,Su:2019znq}) and the integral form cohomology was introduced in \cite{CCGN2}. The exploration of the cohomology in pseudoform complexes, in the way we will present in the paper, has never been carried out and it is the aim of the present work. A presentation of pseudoform cohomology (more in general of cohomology in any complex) was given in \cite{Su:2019znq}, where the authors introduced \virgolette mixed Fock spaces" from \virgolette mixing sets", computing the cohomology with respect to a BRST-like operator. This point of view of introducing generalised forms is inherited from \cite{Witten}. We will use as a guiding example the superalgebra $\mathfrak{osp}(2|2)$. We will give the definition of pseudoforms in terms of infinite-dimensional modules constructed from sub-(super)algebras, then we will define a notion of cohomology of such complexes.

The introduction of pseudoform modules is far from trivial since despite their recent use in physical contexts lead to unexpected results (see, e.g., \cite{CGN,CGN2,CCGN,CG,CG2}), their full mathematical understanding is still lacking. In this paper, we want to emphasise, at least in an algebraic context, how pseudoforms are strictly related to the existence of sub-(super)structures. In a geometric setting, this corresponds to (e.g., in \cite{Witten}) introducing sub-supermanifolds with non-trivial odd codimension. In \cite{Witten}, the author considers as a toy example $\mathbb{R}^{(0|*2)}$, locally parametrised by the two odd coordinates $\theta^1$ and $\theta^2$; he then uses the \virgolette odd constraint" $a \theta^1 + b \theta^2 = 0 , a,b \in \mathbb{C}$, to define a submanifold with odd codimension 1, and shows that the correct object to be integrated on such submanifold is actually a pseudoform (see also \cite{Wit2}). In the algebraic setting, the sub-(super)structures are already built-in: they are sub-superalgebras. We will leverage on these structures to show how pseudoforms emerge naturally. In particular, we will show that sub-(super)algebras play the analogous role to the constraints and that pseudoforms are related to the integral forms defined on sub-(super)algebras and (super)cosets. We will show that these constructions are independent of the formal distributional realisation, nonetheless, we will emphasise that it can be used, once again, as a powerful tool for inspiring calculations and suggesting results.

In the first section, we will collect some motivations that lead us to the systematic study of Lie superalgebra cohomology in all form sectors. In the second section, we will introduce some basic material to make the paper self-contained: we will introduce the Chevalley-Eilenberg cochain complex, relative cohomology, Koszul-Hochschild-Serre spectral sequence, integral forms in supergeometry and on Lie (super)algebras. We will then introduce pseudoform modules as infinite-dimensional modules constructed from sub-(super)algebras and then CE cohomology on pseudoforms. We will then comment on the meaning of the \virgolette picture number" that appears in supergeometry (see, e.g., \cite{Witten}) in terms of such modules. In the third section, we will consider the specific example $\mathfrak{osp}(2|2)$ to construct pseudoforms. We will use Koszul spectral sequence \cite{Koszul} (i.e., Hochschild-Serre spectral sequence \cite{HocSer} with trivial module) and, in particular, we will infer pseudoforms as integral forms of the (super)coset $\mathfrak{osp} \left( 2|2 \right) / \mathfrak{osp} \left( 1|2 \right)$ and of the sub-(super)algebra $\mathfrak{osp} \left( 1|2 \right)$. In the fourth section, we will introduce some general constructions, starting from the definition of a generalised spectral sequence for Lie superalgebras based on two filtrations. We will show that these two filtrations induce two inequivalent sectors of the first page of the spectral sequence when considering Lie sub-(super)algebras, but reduce to a single sector when considering purely even Lie sub-algebras. We will conclude the section with the extension of classical theorems for Lie algebras (see \cite{Koszul,HocSer}) to the super setting, for any form complex. In the fifth section, we will show how pseudoforms of the main example $\mathfrak{osp}(2|2)$ emerge by using the explicit distributional realisation introduced in the first section, when considering the purely odd coset $\mathfrak{osp}(2|2) / \left( \mathfrak{sp} (2) \times \mathfrak{so}(2) \right)$ and we will then construct invariant pseudoforms explicitly. We will support explicit calculations with Poincar\'e polynomials, whose use in this algebraic setting is briefly reviewed.

\section{Cocycles and Brane Scan}

The reason to study CE cohomology for Lie superalgebras is rooted in the application 
to supergravity, superstring and super Dp-branes. It has been worldwide recognized that several supergravity 
backgrounds are constructed in terms of Lie superalgebras and their cosets. As an example, let us consider the case of flat supergravity background, the underlying manifold is a supermanifold (either as 
the worldvolume for N=1,2,4 Ramond-Neveu-Schwarz superstrings or Dp-branes \cite{GSW}, or as 
a target space, as in type IIA/B for d=10 supergravity) $\mathcal{SM}^{(m|n)}$ where $m$ is the bosonic (even) dimension and $n$ is the fermionic (odd) dimension. Locally, the supermanifold is described by an atlas with charts 
$({\mathbb R}^{(m|n)}, x^i, \theta^\alpha)$ where $x^i$ and $\theta^\alpha$ are the even and odd coordinates, respectively. As usual, we can build the cotangent bundle starting from the (parity changed) 1-forms $(V^i, \psi^\alpha)$ (which are conveniently expressed in terms of fundamental 1-forms $(dx^i, d\theta^\alpha)$). For superstrings and Dp-branes 
the latter represent dynamical fields with values in the cotangent bundle of the supermanifold, in supergravity $ (V^i, \psi^\alpha)$ are the dynamical super vielbein. To construct the action or the observables for superstrings or Dp-branes one needs to uncover all possible cocycles $\omega^p(V,\psi)$ in order to establish the existence of $\kappa$-invariant action functionals. For that one needs the pullback of those cocycles on the worldvolume respecting the isometries of the manifold (see for example the textbook \cite{GSW} or \cite{ACHU,POL,DUFF1}). The analysis of these cocycles corresponds to the computation of the CE cohomology groups of super translations, representing graded non-semi simple Lie algebras. In the case of supergravity, the fundamental ingredient is the action functional whose Euler-Lagrangian equations provide the complete set of constraints and equations of motion. That action functional is constructed in terms of the cocycles associated 
with the super isometries of the underlying supermanifold. This framework, inspired by the early work \cite{SUL}, is also known as Free Differential Algebra leading to 
the complete spectrum of a supergravity model \cite{CDF}. Therefore, cocycles are pivotal in both frameworks, but 
the real challenge is represented by models on non-trivial backgrounds (such as superstrings on $AdS_5 \times S^5$). In several 
cases, the supermanifold is the group-manifold associated with a super Lie algebra, or a coset space (as 
$PSU(2,2|4)/SO(4,1) \times SO(5)$ for  $AdS_5 \times S^5$, $OSp(8|4)/SO(7)\times Sp(4)$ for $AdS_4 \times S^7$, 
or $OSp(6|4)/ U(3)\times Sp(4)$ for $AdS_4 \times \mathbb{CP}^3$) and the associated cocycles 
have to be computed in terms of the CE cohomology of the corresponding Lie algebras or cosets. Some of the cocycles for several models have been already computed (see, e.g., \cite{DUFF2}), but a systematic analysis is missing (except for super translations, wherein \cite{FSS} the authors use FDAs and their dual version of $L_\infty$-algebras to classify higher cocycles).

In the present work, we find that there are additional cocycles associated with pseudoforms, which are forms living in a different complex w.r.t. usual superforms, and therefore it would be interesting, in the future, to study the associated Dp-brane interpretation or supergravity solutions. In particular, given a cohomology class $\omega^{(p|q)} \in H^{(p|q)} \left( \mathfrak{g} \right)$, by following the FDA algorithm we can introduce a \virgolette potential" $\eta^{(p-1|q)}$ s.t.
\begin{equation}\label{AQGSBA}
	d \eta^{(p-1|q)} = \omega^{(p|q)} \ .
\end{equation}
It will be our interest to study these new generators in the contexts described above, to see if pseudoforms can give rise to new, unexplored, classes of branes and supergravity couplings.

\section{Preliminaries}\label{Preliminaries}

In this section, we recall some basic definitions of Lie (super)algebras, Chevalley-Eilenberg complexes, Lie (super)algebra cohomology and spectral sequences, to make the paper self-contained. We refer the reader to \cite{CE,Koszul,HocSer} for more detailed expositions in the standard Lie-algebraic setting and to \cite{Fuks} for the Lie-superalgebraic setting. Then, we introduce the notions of \virgolette pseudoforms" and \virgolette picture number". In particular, pseudoforms will be the main point of interest in the present paper.

\begin{defn}[Chevalley-Eilenberg chain complex]
	Given a (finite dimensional) Lie (super)algebra $\mathfrak{g}$ over the field $\mathbb{K}$ of characteristic zero \footnote{Throughout the paper we will systematically assume $\mathbb{K} = \mathbb{R}, \mathbb{C}$.} and a $\mathfrak{g}$-module $V$, we define \emph{p-chains} of $\mathfrak{g}$ valued in $V$ as (graded)alternating $\mathbb{K}$-linear products
	\begin{equation}\label{PA}
		C_p \left( \mathfrak{g} , V \right) \defeq \wedge^p \mathfrak{g} \otimes V = \bigoplus_{r=1}^p  \left( \wedge^r \mathfrak{g}_0 \otimes S^{p-r} \mathfrak{g}_1 \right) \otimes V \ .
	\end{equation}
	Throughout the paper we will denote with a $\mathfrak{g}$ basis $\left\lbrace \mathcal{Y}_a \right\rbrace = \left\lbrace X_i , \xi_\alpha \right\rbrace$, where $i=1,\ldots, \dim \mathfrak{g}_0$ and $\alpha = 1,\ldots,\dim \mathfrak{g}_1$. We can lift \eqref{PA} to a \emph{chain complex} by introducing the differential $\partial: C_p \left( \mathfrak{g} , V \right) \to C_{p-1} \left( \mathfrak{g} , V \right)$ acting on $f \otimes \left( \mathcal{Y}_{a_1} \wedge \ldots \wedge \mathcal{Y}_{a_p} \right) \in C_p \left( \mathfrak{g} , V \right)$ as
	\begin{eqnarray}
		\nonumber \partial \left[ f \otimes \left( \mathcal{Y}_{a_1} \wedge \ldots \wedge \mathcal{Y}_{a_p} \right) \right] &=& \sum_{i=1}^p (-1)^{\delta_i} \left( \mathcal{Y}_{a_i} f \right) \otimes \left( \mathcal{Y}_{a_1} \wedge \ldots \wedge \hat{\mathcal{Y}}_{a_i} \wedge \ldots \wedge \mathcal{Y}_{a_p} \right) + \\
		\label{PB} &+& \sum_{i<j}^p (-1)^{\delta_{i,j}} f \otimes \left( \left[ \mathcal{Y}_{a_i} , \mathcal{Y}_{a_j} \right] \wedge \mathcal{Y}_{a_1} \wedge \ldots \wedge \hat{\mathcal{Y}}_{a_i} \wedge \ldots \wedge \hat{\mathcal{Y}}_{a_j} \wedge \ldots \wedge \mathcal{Y}_{a_p} \right) \ ,
	\end{eqnarray}
	where $\delta_i = \left| \pi \mathcal{Y}_{a_i} \right| \left( \sum_{p=1}^{i-1} \left| \pi \mathcal{Y}_{a_p} \right| + \left| f \right| \right)$, $\delta_{i,j} = \left| \pi \mathcal{Y}_{a_i} \right| \sum_{p=1}^{i-1} \left| \pi \mathcal{Y}_{a_p} \right| + \left| \pi \mathcal{Y}_{a_j} \right| \sum_{p=1}^{j-1} \left| \pi \mathcal{Y}_{a_p} \right| - \left| \pi \mathcal{Y}_{a_i} \right| \left| \pi \mathcal{Y}_{a_j} \right|$ and where with the symbol $\hat{\mathcal{Y}}$ we indicate that the vector is omitted; the nilpotence of the operator $\partial$ is a consequence of the (graded) Jacobi identities. The \emph{Chevalley-Eilenberg chain complex} is then denoted as $\left( C_p \left( \mathfrak{g} , V \right) , \partial \right)$.
\end{defn}

In the following sections we will be interested in the dual concept, namely the Chevalley-Eilenberg cochain complex:
\begin{defn}[Chevalley-Eilenberg cochain complex]
	We define Chevalley-Eilenberg \emph{p-cochains} of $\mathfrak{g}$ valued in $V$ as (graded)alternating $\mathbb{K}$-linear maps
	\begin{equation}\label{PC}
		C^p \left( \mathfrak{g} , V \right) \defeq Hom_{\mathbb{K}} \left( \wedge^p \mathfrak{g} , V \right) = \wedge^p \mathfrak{g}^* \otimes V = \bigoplus_{r=1}^p \left( \wedge^r \mathfrak{g}_0^* \otimes S^{q-r} \mathfrak{g}_1^* \right) \otimes V \ .
	\end{equation}
	We can lift \eqref{PC} to a \emph{cochain complex} by introducing the differential $d: C^p \left( \mathfrak{g} , V \right) \to C^{p+1} \left( \mathfrak{g} , V \right)$ defined on $\omega \in C^p \left( \mathfrak{g} , V \right)$ by
	\begin{eqnarray}
		\nonumber d \omega \left( \mathcal{Y}_{a_1} , \ldots , \mathcal{Y}_{a_{p+1}} \right) &=& \sum_{i=1}^{p+1} (-1)^{\delta_i} \mathcal{Y}_{a_i} \omega \left( \mathcal{Y}_{a_1} \wedge \ldots \wedge \hat{\mathcal{Y}}_{a_i} \wedge \ldots \wedge \mathcal{Y}_{a_{p+1}} \right) + \\
		\label{PD} &+& \sum_{i<j} (-1)^{\delta_{i,j}} \omega \left( \left[ \mathcal{Y}_{a_i} , \mathcal{Y}_{a_j} \right] \wedge \mathcal{Y}_{a_1} \wedge \ldots \wedge \hat{\mathcal{Y}}_{a_i} \wedge \ldots \wedge \hat{\mathcal{Y}}_{a_j} \wedge \ldots \wedge \mathcal{Y}_{a_{p+1}} \right) \ ,
	\end{eqnarray}
	where $\delta_i$ and $\delta_{i,j}$ are defined as above; again, the nilpotence of the operator $d$ is a consequence of the (graded) Jacobi identities. The \emph{Chevalley-Eilenberg cochain complex} is then denoted as $\left( C^p \left( \mathfrak{g} , V \right) , d \right)$.
\end{defn}

In the following sections, we will deal mainly with the case of the trivial module $V = \mathbb{K}$ and we will adopt the notation
\begin{equation}\label{PDA}
	C^p \left( \mathfrak{g} , \mathbb{K} \right) \equiv C^p \left( \mathfrak{g} \right) \equiv \Omega^p \left( \mathfrak{g} , \mathbb{K} \right) \equiv \Omega^p \left( \mathfrak{g} \right) \ .
\end{equation}

\begin{defn}[Lie (super)algebra cohomology]
	Given a Chevalley-Eilenberg cochain complex of a Lie (super)algebra $\mathfrak{g}$ and a $\mathfrak{g}$-module $V$, we define the space of \emph{p-cocycles} (or \emph{closed forms})
	\begin{equation}\label{PE}
		Z^p \left( \mathfrak{g} , V \right) \defeq \left\lbrace \omega \in C^p \left( \mathfrak{g} , V \right) : d \omega = 0 \right\rbrace \ .
	\end{equation}
	The space of \emph{p-coboundaries} (or \emph{exact forms}) is defined as
	\begin{equation}\label{PF}
		B^p \left( \mathfrak{g} , V \right) \defeq \left\lbrace \omega \in C^p \left( \mathfrak{g} , V \right) : \exists \rho \in C^{p-1} \left( \mathfrak{g} , V \right) : d \rho = \omega \right\rbrace \ .
	\end{equation}
	Because of the nilpotence of the operator $d$, we can consistently define the \emph{p-cohomology group as}
	\begin{equation}\label{PG}
		H^p \left( \mathfrak{g} , V \right) \defeq \frac{Z^p \left( \mathfrak{g} , V \right)}{B^p \left( \mathfrak{g} , V \right)} \ .
	\end{equation}
\end{defn}

\begin{defn}[Relative Lie (super)algebra cohomology]\label{Relative Lie (super)algebra cohomology}
	Given a Lie (super)algebra $\mathfrak{g}$ and a Lie sub-(super)algebra $\mathfrak{h}$, we define the space of \emph{horizontal p-cochains} with values in a module $V$ as
	\begin{equation}\label{PH}
		C^p_{hor} \left( \mathfrak{g} , \mathfrak{h} , V \right) \equiv C^p \left( \mathfrak{k} , V \right) \defeq \left\lbrace \omega \in C^p \left( \mathfrak{g} , V \right) : \omega \left( \mathcal{Y}_a , \mathcal{Y}_{a_1} , \ldots , \mathcal{Y}_{a_{p-1}} \right) = 0 , \forall \mathcal{Y}_a \in \mathfrak{h}, \mathcal{Y}_{a_1} , \ldots , \mathcal{Y}_{a_{p-1}} \in \mathfrak{g} \right\rbrace \ .
	\end{equation}
	We define the space of \emph{$\mathfrak{h}$-invariant p-cochains} with values in $V$ as
	\begin{equation}\label{PI}
		\left( C^p \left( \mathfrak{g} , V \right) \right)^{\mathfrak{h}} \defeq \left\lbrace \omega \in C^p \left( \mathfrak{g} , V \right) : \sum_{j=1}^p (-1)^{\delta_{i,j}} \omega \left( \mathcal{Y}_{a_1} , \ldots , \left[ \mathcal{Y}_a , \mathcal{Y}_{a_j} \right] , \ldots , \mathcal{Y}_{a_p} \right)  = 0 , \forall \mathcal{Y}_a \in \mathfrak{h}, \mathcal{Y}_{a_1} , \ldots , \mathcal{Y}_{a_p} \in \mathfrak{g} \right\rbrace \ ,
	\end{equation}
	where $\delta_{i,j} = \left| \pi \mathcal{Y}_{a} \right| \sum_{r=1}^{j-1} \left| \pi \mathcal{Y}_{a_r} \right|$. The forms which are both horizontal and $\mathfrak{h}$-invariant are called \emph{basic} and we will denote them as (we denote $\mathfrak{k}=\mathfrak{g}/\mathfrak{h}$ as a quotient of vector spaces, since $\mathfrak{h}$ in general is not an ideal, or analogously $\mathfrak{g} = \mathfrak{h} \oplus \mathfrak{k}$)
	\begin{equation}\label{PJ}
		\left( C^p \left( \mathfrak{k} , V \right) \right)^{\mathfrak{h}} \equiv \left( C^p \left( \mathfrak{g}/\mathfrak{h} , V \right) \right)^{\mathfrak{h}} \ .
	\end{equation}
	Cocycles and coboundaries are defined by taking the restriction of \eqref{PE} and \eqref{PF} to \eqref{PJ}; analogously, we can restrict the differential \eqref{PD} to basic forms, denoting it $\left. d \right|_{basic} \equiv \nabla_{\mathfrak{k}}$, define closed and exact forms and finally the \emph{cohomology of $\mathfrak{g}$ relative to $\mathfrak{h}$}:
	\begin{equation}\label{PK}
		H^\bullet \left( \mathfrak{g}, \mathfrak{h}, V \right) \defeq \frac{\left\lbrace \omega \in \left( C^\bullet \left( \mathfrak{k} , V \right) \right)^{\mathfrak{h}} : \nabla_{\mathfrak{k}} \omega = 0 \right\rbrace}{\left\lbrace \omega \in \left( C^\bullet \left( \mathfrak{k} , V \right) \right)^{\mathfrak{h}} : \exists  \eta \in \left( C^\bullet \left( \mathfrak{k} , V \right) \right)^{\mathfrak{h}} , \omega = \nabla_{\mathfrak{k}} \eta \right\rbrace} \ .
	\end{equation}
	Again, when $V = \mathbb{K}$, we will denote the cohomology cochain complex as
	\begin{equation}\label{PL}
		H^\bullet \left( \mathfrak{g}, \mathfrak{h}, \mathbb{K} ; \nabla_{\mathfrak{k}} \right) \equiv H^\bullet \left( \mathfrak{g}, \mathfrak{h}, \mathbb{K} \right) \equiv H^\bullet \left( \mathfrak{g}, \mathfrak{h} \right) \ .
	\end{equation}
\end{defn}
We can adopt a more modern notation for horizontal and $\mathfrak{h}$-invariant forms: \eqref{PH} and \eqref{PI} can be rewritten as
\begin{equation*}
	C^p \left( \mathfrak{k} , V \right) \defeq \left\lbrace \omega \in C^p \left( \mathfrak{g} , V \right) : \iota_{\mathcal{Y}_a} \omega = 0 , \forall \mathcal{Y}_a \in \mathfrak{h} \right\rbrace \ ,
\end{equation*}
\begin{equation}
	\left( C^p \left( \mathfrak{g} , V \right) \right)^{\mathfrak{h}} \defeq \left\lbrace \omega \in C^p \left( \mathfrak{g} , V \right) : \mathcal{L}_{\mathcal{Y}_a} \omega = 0 , \forall \mathcal{Y}_a \in \mathfrak{h} \right\rbrace \ .
\end{equation}
respectively, where $\iota_{\mathcal{Y}_a}$ denotes the contraction along the vector $\mathcal{Y}_a$ and $\mathcal{L}_{\mathcal{Y}_a}$ denotes the Lie derivative along the vector $\mathcal{Y}_a$.

In the case $V = \mathbb{K}$, we can use a more convenient form for the differential \eqref{PD}; in particular, the first term drops, and, given a $\mathfrak{g}$ basis $\left\lbrace \mathcal{Y}_a \right\rbrace = \left\lbrace X_i , \xi_\alpha \right\rbrace$ and the (parity changed) dual $\Pi \mathfrak{g}^*$ basis $\left\lbrace \mathcal{Y}^{*a} \right\rbrace = \left\lbrace V^i , \psi^\alpha \right\rbrace$, $i=1 , \ldots , \dim \mathfrak{g}_0$, $\alpha = 1 , \ldots , \dim \mathfrak{g}_1$, the differential reads
\begin{equation}
	d = \sum_{a , b , c} f^a_{b c} \mathcal{Y}^{*b} \wedge \mathcal{Y}^{*c} \iota_{\mathcal{Y}_a} \ ,
\end{equation}
where $f^a_{b c}$ are the structure constants of the Lie (super)algebra $\mathfrak{g}$.

\subsection{Koszul-Hochschild-Serre Spectral Sequence}

Here we will briefly review the main method of constructing the cohomology of Lie (super)algebras we will use in the forthcoming sections. For more details, we refer to the seminal papers \cite{Koszul} (where the author considers the cohomology of Lie algebras with values in a trivial module) and \cite{HocSer} (where the authors generalise to general modules) and to the vast book \cite{Fuks} (where the author extend many results to the super-algebraic setting).

\begin{defn}[Koszul-Hochschild-Serre Spectral Sequence]\label{Koszul-Hochschild-Serre Spectral Sequence}
	Given a Lie (super) algebra $\mathfrak{g}$, a sub-(super)algebra $\mathfrak{h}$ and a $\mathfrak{g}$-module $V$, we define the filtration
	\begin{equation}\label{KHSSSA}
		F^p C^q \left( \mathfrak{g} , V \right) = \left\lbrace \omega \in C^q \left( \mathfrak{g} , V \right) : \forall \mathcal{Y}_a \in \mathfrak{h} , \iota_{\mathcal{Y}_{a_1}} \ldots \iota_{\mathcal{Y}_{a_{q+1-p}}} \omega = 0 \right\rbrace \ , p,q \in \mathbb{Z} \ ,
	\end{equation}
	where we denote $C^{q<0} \left( \mathfrak{g} , V \right) = 0$. In particular, for $F^p C^q \left( \mathfrak{g} , V \right)$ we have
	\begin{equation}\label{KHSSSB}
		F^{q+n+1} C^q \left( \mathfrak{g} , V \right) = 0 \ , \ F^{p+1} C^q \left( \mathfrak{g} , V \right) \subseteq F^p C^q \left( \mathfrak{g} , V \right) \ , $$ $$ d F^p C^q \left( \mathfrak{g} , V \right) \subseteq F^p C^{q+1} \left( \mathfrak{g} , V \right) \ , \forall n \in \mathbb{N} \cup \left\lbrace 0 \right\rbrace \ , \ \forall p,q \in \mathbb{Z} \ .
	\end{equation}
	Associated to the filtration \eqref{KHSSSA}, there exist a spectral sequence $\left( E_s^{\bullet , \bullet} , d_s \right)_{s \in \mathbb{N}\cup \left\lbrace 0 \right\rbrace}$ that converges to $H^\bullet \left( \mathfrak{g} , V \right)$. More explicitly, this means that we define \emph{page zero} of the spectral sequence as
	\begin{equation}\label{KHSSSC}
		E_0^{p,q} \defeq F^p C^{p+q} \left( \mathfrak{g} , V \right) / F^{p+1} C^{p+q} \left( \mathfrak{g} , V \right) \ .
	\end{equation}
	The (nilpotent) differentials $d_s$ are induced by the Chevalley-Eilenberg differential \eqref{PD} and each page of the spectral sequence is defined as the cohomology of the previous one:
	\begin{equation}\label{KHSSSD}
		d_s : E_s^{p,q} \to E_s^{p+s,q+1-s} \ , \ E_{s+1}^{\bullet , \bullet} \defeq \left( E_{s}^{\bullet , \bullet} , d_s \right) \ , \ E_\infty^{\bullet,\bullet} \cong H^\bullet \left( \mathfrak{g} , V \right) \ .
	\end{equation}
	If $\exists n \in \mathbb{N}\cup \left\lbrace 0 \right\rbrace : d_s = 0 , \forall s \geq n$, then we say that the spectral sequence \emph{converges at page} $n$ and we denote
	\begin{equation}\label{KHSSSE}
		E_{n}^{\bullet , \bullet} = E_{n+1}^{\bullet , \bullet} = \ldots = E_{\infty}^{\bullet , \bullet} \ .
	\end{equation}
\end{defn}

\begin{thm}[Hochschild-Serre]\label{KHStheorem}
	Given a Lie algebra $\mathfrak{g}$ over a (characteristic zero) field $\mathbb{K}$, a sub-algebra $\mathfrak{h}$ reductive in the ambient Lie algebra and a finite-dimensional $\mathfrak{g}$-module $V$, the first pages of the spectral sequence read
	\begin{eqnarray}
		\label{KHSSSF} E_1^{p,q} &\cong& H^q \left( \mathfrak{h} , \mathbb{K} \right) \otimes C^p \left( \mathfrak{g}/\mathfrak{h} , V \right)^{\mathfrak{h}} \ , \\
		\label{KHSSSG} E_2^{p,q} &\cong& H^q \left( \mathfrak{h} , \mathbb{K} \right) \otimes H^p \left( \mathfrak{g} , \mathfrak{h} , V \right) \ .
	\end{eqnarray}
\end{thm}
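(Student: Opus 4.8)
The plan is to read the two pages off directly from the filtration \eqref{KHSSSA}, invoking the reductivity of $\mathfrak{h}$ in $\mathfrak{g}$ at exactly two points: once to split the relevant $\mathfrak{h}$-modules into isotypic summands, and once to choose an $\mathfrak{h}$-invariant linear complement of $\mathfrak{h}$ in $\mathfrak{g}$. First I fix a vector-space decomposition $\mathfrak{g}=\mathfrak{h}\oplus\mathfrak{k}$ with $\mathfrak{k}$ an $\mathfrak{h}$-submodule (possible since $\mathfrak{h}$ reductive in $\mathfrak{g}$ means the adjoint action of $\mathfrak{h}$ on $\mathfrak{g}$ is semisimple), so that $[\mathfrak{h},\mathfrak{k}]\subseteq\mathfrak{k}$ and $\mathfrak{k}\cong\mathfrak{g}/\mathfrak{h}$ as $\mathfrak{h}$-modules; dually $\mathfrak{g}^{*}=\mathfrak{h}^{*}\oplus\mathfrak{k}^{*}$ with $\mathfrak{k}^{*}$ the annihilator of $\mathfrak{h}$. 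A cochain in $F^{p}C^{p+q}(\mathfrak{g},V)$ vanishes on every tuple containing $q+1$ legs in $\mathfrak{h}$, and modulo $F^{p+1}$ it is determined by its restriction to tuples with exactly $q$ legs in $\mathfrak{h}$ and $p$ legs in $\mathfrak{k}$; this is the linear identification $E_{0}^{p,q}\cong C^{q}\big(\mathfrak{h},\wedge^{p}\mathfrak{k}^{*}\otimes V\big)$. Inspecting the two sums in \eqref{PD} on such a representative, one checks that every term which changes the number of $\mathfrak{k}$-legs falls into $F^{p+1}$, so that the induced differential $d_{0}$ is precisely the Chevalley--Eilenberg differential of $\mathfrak{h}$ valued in the $\mathfrak{h}$-module $\wedge^{p}\mathfrak{k}^{*}\otimes V$ (the $[\mathfrak{h},\mathfrak{k}]\to\mathfrak{k}$ piece supplying the module action on the $\mathfrak{k}^{*}$ factor). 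Hence $E_{1}^{p,q}\cong H^{q}\big(\mathfrak{h},\wedge^{p}(\mathfrak{g}/\mathfrak{h})^{*}\otimes V\big)$.

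Next I would identify this with \eqref{KHSSSF}. Since $\mathfrak{g}/\mathfrak{h}$ is a semisimple $\mathfrak{h}$-module, so is $M\defeq\wedge^{p}(\mathfrak{g}/\mathfrak{h})^{*}\otimes V$ (for the trivial module $V=\mathbb{K}$ this is automatic; in general one takes $V$ semisimple over $\mathfrak{h}$), and $\mathfrak{h}$ is itself a reductive Lie algebra. For a reductive Lie algebra and a finite-dimensional semisimple module the standard Whitehead--Casimir argument applies — the Casimir of the semisimple part of $\mathfrak{h}$ acts invertibly, and the centre by a nonzero character, on every nontrivial isotypic summand, forcing the cohomology of that summand to vanish — and yields $H^{q}(\mathfrak{h},M)\cong H^{q}(\mathfrak{h},\mathbb{K})\otimes M^{\mathfrak{h}}$. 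Since $M^{\mathfrak{h}}=\big(\wedge^{p}(\mathfrak{g}/\mathfrak{h})^{*}\otimes V\big)^{\mathfrak{h}}$ is, by Definition \ref{Relative Lie (super)algebra cohomology}, the space $C^{p}(\mathfrak{g}/\mathfrak{h},V)^{\mathfrak{h}}$ of basic $p$-cochains, this gives \eqref{KHSSSF}.

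Finally I would pin down $d_{1}$ and pass to $E_{2}$. Using reductivity once more, $H^{q}(\mathfrak{h},\mathbb{K})$ is represented by $\mathfrak{h}$-invariant elements $a\in(\wedge^{q}\mathfrak{h}^{*})^{\mathfrak{h}}$, on which the Chevalley--Eilenberg differential of $\mathfrak{h}$ vanishes; so, under \eqref{KHSSSF}, represent a class of $E_{1}^{p,q}$ by $a\wedge b$ with $b\in C^{p}(\mathfrak{g}/\mathfrak{h},V)^{\mathfrak{h}}$, regarded via the splitting as an element of $F^{p}C^{p+q}(\mathfrak{g},V)$. Then $d(a\wedge b)=(da)\wedge b+(-1)^{q}a\wedge(db)$: because $\mathfrak{k}$ is $\mathfrak{h}$-invariant, $da$ carries no component with a single $\mathfrak{k}$-leg, so $(da)\wedge b\in F^{p+2}$ and drops out in $E_{1}^{p+1,q}$; likewise the $\mathfrak{h}$-invariance and horizontality of $b$ force $db$ to equal its purely-$\mathfrak{k}$ part, which is $\nabla_{\mathfrak{k}}b$. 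Hence $d_{1}=(-1)^{q}\,\mathrm{id}_{H^{q}(\mathfrak{h})}\otimes\nabla_{\mathfrak{k}}$, and taking cohomology yields \eqref{KHSSSG}.

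I expect the main obstacle to be this last step — showing that $d_{1}$ respects the tensor decomposition \eqref{KHSSSF}. This is exactly where the $\mathfrak{h}$-invariant choice of complement $\mathfrak{k}$, together with the vanishing of $d$ on $\mathfrak{h}$-invariant cochains of $\mathfrak{h}$, is indispensable: without reductivity there is in general a nonzero transgression term linking the two factors and $E_{2}$ fails to split as a tensor product. The leg-counting in the first step is routine but still needs care to verify that the signs $\delta_{i},\delta_{i,j}$ of \eqref{PD} reassemble into the genuine Chevalley--Eilenberg differential of $\mathfrak{h}$ with values in $\wedge^{p}(\mathfrak{g}/\mathfrak{h})^{*}\otimes V$.
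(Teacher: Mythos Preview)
Your proof is correct and follows the classical Hochschild--Serre line. The paper does not prove this theorem in its own right --- it is quoted in the preliminaries as a known result from \cite{Koszul,HocSer} --- but the paper does reproduce the argument in the worked example of Sec.~\ref{Cohomology via Spectral Sequences} (equations \eqref{TSSL}--\eqref{TSSV}) and in the proof of Prop.~\ref{KSHtheoremextended}, so there is something to compare. Both approaches fix an $\mathfrak{h}$-invariant complement $\mathfrak{k}$, identify $E_0^{p,q}\cong C^p(\mathfrak{k})\otimes C^q(\mathfrak{h})$, and sort the Chevalley--Eilenberg differential by $\mathfrak{h}$-leg count. The difference is in how the tensor splitting of $E_1$ is obtained: the paper, working only with the trivial module $V=\mathbb{K}$, writes $d_0$ schematically as $\mathcal{Y}^*_{\mathfrak{h}}\mathcal{Y}^*_{\mathfrak{h}}\iota_{\mathcal{Y}_{\mathfrak{h}}}+\mathcal{Y}^*_{\mathfrak{k}}\mathcal{Y}^*_{\mathfrak{h}}\iota_{\mathcal{Y}_{\mathfrak{k}}}$ and treats the two pieces separately --- one yielding $H^q(\mathfrak{h})$, the other imposing $\mathfrak{h}$-invariance on the $\mathfrak{k}$-factor --- to arrive at the product form directly. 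You instead package both pieces as the single $\mathfrak{h}$-differential with coefficients in $\wedge^p\mathfrak{k}^*\otimes V$, obtain $E_1^{p,q}\cong H^q(\mathfrak{h},\wedge^p\mathfrak{k}^*\otimes V)$, and then invoke the Whitehead--Casimir argument to extract the trivial isotypic summand. Your route is the more general one, handling nontrivial finite-dimensional $V$ cleanly (which the theorem asserts but the paper's own argument, confined to $V=\mathbb{K}$, does not need); the paper's route is more computational and better suited to the explicit pseudoform calculations that follow, but its separation of the two halves of $d_0$ tacitly relies on the same semisimplicity that Whitehead--Casimir makes explicit. The identification of $d_1$ with $\nabla_{\mathfrak{k}}$ is the same in both.
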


\subsection{Integral Forms on Lie Superalgebras}

Here, we briefly recall some definitions and results regarding \emph{integral forms} on supermanifolds and specialise them to Lie superalgebras. For a clear account of integral forms see, e.g., \cite{Manin:1988ds}, for their introduction in the (super)algebraic context see \cite{CCGN2}.

\begin{defn}[Integral Forms]
	Given a supermanifold $\mathcal{SM} = \left( \left| \mathcal{SM} \right| , \mathcal{O} \right)$, i.e., a topological space $\left| \mathcal{SM} \right|$ equipped with a sheaf of supercommutative rings $\mathcal{O}$, we define its \emph{Berezinian sheaf} as
	\begin{equation}\label{IFLSA}
		\mathpzc{B}er \left( \mathcal{SM} \right) \defeq \left( \mathpzc{B}er \Omega^1_{odd} \left( \mathcal{SM} \right) \right)^* \ .
	\end{equation}
	This sheaf is locally generated by taking the product of all the odd generators of the tangent and cotangent spaces of $\mathcal{SM}$ (we denote by $z^a = \left\lbrace x^i , \theta^\alpha \right\rbrace$ a local system of coordinates):
	\begin{equation}\label{IFLSB}
		\mathpzc{B}er \left( \mathcal{SM} \right) \cong \mathcal{O} \cdot \left[ \bigwedge_{i=1}^{\dim \mathcal{SM}_0} dx^i \otimes \bigwedge_{\alpha=1}^{\dim \mathcal{SM}_1} \partial_{\theta^\alpha} \right] \ .
	\end{equation}
	We can define the \emph{integral forms}, which will be denoted as $C^\bullet_{int} \left( \mathcal{SM} \right) \equiv \Omega^{\bullet|\dim \mathcal{SM}_1} \left( \mathcal{SM} \right)$, as (graded) symmetric powers of parity changed vector fields with values in the Berezinian bundle:
	\begin{equation}\label{IFLSC}
		C^\bullet_{int} \left( \mathcal{SM} \right) \defeq \mathpzc{B}er \left( \mathcal{SM} \right) \otimes_{\mathcal{O}} S^{\dim \mathcal{SM}_0 - \bullet} \Pi T \mathcal{SM}  \ .
	\end{equation}
	We can lift integral forms to a cochain complex by introducing the differential
	\begin{equation}\label{IFLSD}
		\delta : C^p_{int} \left( \mathcal{SM} \right) \to C^{p+1}_{int} \left( \mathcal{SM} \right) $$ $$
		\delta = \sum_a \mathcal{L}^R_{\frac{\partial}{\partial z^a}} \otimes_{\mathbb{K}} \frac{\partial}{\partial \left( \pi \frac{\partial}{\partial z^a} \right)} (-1)^{\left| z^a \right|} \ ,
	\end{equation}
	where $\mathcal{L}^R$ denotes the (right) Lie derivative and it is possible to show that the definition of $\delta$ is independent on the coordinate choice.
\end{defn}

By following \cite{Belopolsky:1996cy,Belopolsky:1997bg,Belopolsky:1997jz} and \cite{Witten}, we can give a different realisation of integral forms, inspired by string theory (see \cite{FMS}), in terms of \emph{Dirac distributions} (see also \cite{CG,CG2} for the use of this realisation in Chern-Simons theory and BV): we locally realise the Berezinian sheaf as
\begin{equation}\label{IFLSE}
	\mathpzc{B}er \left( \mathcal{SM} \right) \equiv \omega^{(\dim \mathcal{SM}_0 | \dim \mathcal{SM}_1)} \cong \mathcal{O} \cdot \left[ \bigwedge_{i=1}^{\dim \mathcal{SM}_0} dx^i \wedge \bigwedge_{\alpha=1}^{\dim \mathcal{SM}_1} \delta \left( d \theta^\alpha \right) \right] \ ,
\end{equation}
where $\delta \left( d \theta \right)$ are (formal) Dirac delta distributions. The symbol  $\delta \left( d \theta \right)$ satisfies the following distributional identities
\begin{eqnarray}\label{IFLSF}
	 d \theta \delta \left( d \theta \right) = 0 \ , \ \delta \left( \lambda d \theta \right) = \frac{1}{\lambda} \delta \left( d \theta \right) \ , \ d \theta \iota^{(p)} \delta \left( d \theta \right) = - p \iota^{p-1} \delta \left( d \theta \right) \equiv - p \delta^{(p-1)} \left( d \theta \right) \ , \\
	 \nonumber \delta \left( d \theta^\alpha \right) \wedge \delta \left( d \theta^\beta \right) = - \delta \left( d \theta^\beta \right) \wedge \delta \left( d \theta^\alpha \right) \ , \ dx \wedge \delta \left( d \theta \right) = - \delta \left( d \theta \right) \wedge d x \ , \ d \delta \left( d \theta \right) = d \left( d \theta \right) \wedge \iota \delta \left( d \theta \right) \ ,
	 \end{eqnarray}
indicating that actually these are not conventional distributions, but rather \emph{de Rham currents} (see \cite{Witten}). Integral forms are constructed exactly as in \eqref{IFLSC}, but now the parity changed vector fields are represented as contractions acting on \eqref{IFLSE}. Then, a generic integral form can be locally expressed as
\begin{equation}\label{IFLSG}
	\omega^{(p|\dim \mathcal{SM}_1)} = \omega^{[i_1 \ldots i_r](\alpha_1 \ldots \alpha_s)} \left( x , \theta \right) \iota_{i_1} \ldots \iota_{i_{r}} \iota_{\alpha_1} \ldots \iota_{\alpha_s} \left[ \bigwedge_{i=1}^{\dim \mathcal{SM}_0} dx^i \wedge \bigwedge_{\alpha=1}^{\dim \mathcal{SM}_1} \delta \left( d \theta^\alpha \right) \right] \ ,
\end{equation}
where $p=\dim \mathcal{SM}_0 - r - s$ counts the form number and $\dim \mathcal{SM}_1$ counts the \emph{picture number}, i.e., the number of $\delta$'s appearing in $\omega^{(p|\dim \mathcal{SM}_1)}$. This realisation of integral forms suggests the (local) construction of forms with non-maximal and non-zero number of delta's: the \emph{pseudoforms}. A general pseudoform with $q$ Dirac delta's (i.e., picture number $q$) is locally given by
\begin{equation}\label{IFLSH}
	\omega^{(p|q)} = \omega_{[a_1 \ldots a_r](\alpha_1 \ldots \alpha_s)[\beta_1 \ldots \beta_q]} \left( x , \theta \right) dx^{a_1} \wedge \ldots \wedge dx^{a_r} \wedge d \theta^{\alpha_1} \wedge \ldots \wedge d \theta^{\alpha_s} \wedge \delta^{(t_1)} \left( d \theta^{\beta_1} \right) \wedge \ldots \wedge \delta^{(t_q)} \left( d \theta^{\beta_q} \right) \ ,
\end{equation}
where $\delta^{(i)} \left( d \theta \right) \equiv \left( \iota \right)^i \delta \left( d \theta \right)$. The form number of \eqref{IFLSH} is obtained as 
\begin{equation}\label{IFLSI}
	p = r + s - \sum_{i=1}^q t_i \ ,
\end{equation}
since the contractions carry negative form number. The two numbers $p$ and $q$ in eq. \eqref{IFLSH}, corresponding to the form number and the picture number, respectively, range as $-\infty < p < +\infty$ and $0 \leq q \leq n$. If $q=0$, we have superforms, if $q=n$ we have integral forms, if $0<q<n$ we have pseudoforms. One of the advantages of the Dirac delta forms realisation consists in the use of a single differential for superforms, integral forms and pseudoforms: we can use the de Rham differential $d$ on every complex for any picture number, by supporting it with the formal properties in \eqref{IFLSF}.

Integral forms are essential on supermanifolds, to implement a consistent theory of integration; in the algebraic context of this paper, there is no notion of integration, but we will show that we can use them to introduce new algebraic invariants that are not kept into account when considering standard differential forms. Pseudoforms are only locally defined on supermanifolds (since they do not transform as tensors under superdiffeomorphisms) but they are the main topic of the present paper: since we are dealing with algebras, we are not concerned with changes of coordinates systems, hence their non-tensorial nature will not be an issue. Moreover, we will show how to introduce pseudoforms as integral forms of sub-structures (e.g., sub-(super)algebras), hence we will overcome their definition in terms of formal properties of Dirac delta distributions. We now give the definition of \emph{Haar Berezinian} (as it can be related to the measure of a supergroup related to the superalgebra $\mathfrak{g}$) and integral forms for the algebraic context.
\begin{defn}[Integral Forms]
	Given a Lie superalgebra $\mathfrak{g}$, with dimension $\dim \mathfrak{g} = (m|n)$, a basis $\mathcal{Y}_a = \left\lbrace X_i , \xi_\alpha \right\rbrace$, its (parity changed) dual $\Pi \mathfrak{g}^*$, with a basis $\mathcal{Y}^{*a} = \left\lbrace V^i , \psi^\alpha \right\rbrace$, $a=1,\ldots,m$ , $\alpha=1, \ldots, n$, and a $\mathfrak{g}$-module $V$, we define the \emph{Haar Berezinian} of $\mathfrak{g}$ with values  in $V$ as
	\begin{equation}\label{IFLSJ}
		\mathpzc{B}er \left( \mathfrak{g} \right) \defeq V \cdot \left[ \bigwedge_{i=1}^{m} V^i \otimes \bigwedge_{\alpha=1}^{n} \xi_\alpha \right] \equiv V \cdot \mathcal{D} \ .
	\end{equation}
	We define the integral form cochain complex as the sequence of spaces
	\begin{equation}\label{IFLSK}
		C^p_{int} \left( \mathfrak{g} , V \right) \defeq \mathpzc{B}er \left( \mathfrak{g} \right) \otimes S^{m-p} \Pi \mathfrak{g} = \mathcal{D} \otimes \left[ V \otimes S^{m-p} \Pi \mathfrak{g} \right] \ ,
	\end{equation}
	equipped with the differential defined in terms of \eqref{PB}
	\begin{eqnarray}
		\label{IFLSL} \delta: C^p_{int} \left( \mathfrak{g} , V \right) &\to& C^{p+1}_{int} \left( \mathfrak{g} , V \right) \\
		\nonumber \mathcal{D} \otimes \left[ f \otimes \mathcal{Y}_1 \wedge \ldots \wedge \mathcal{Y}_{m-p} \right] &\mapsto& \delta \left( \mathcal{D} \otimes \left[ f \otimes \mathcal{Y}_1 \wedge \ldots \wedge \mathcal{Y}_{m-p} \right] \right) = \mathcal{D} \otimes \partial \left[ f \otimes \mathcal{Y}_1 \wedge \ldots \wedge \mathcal{Y}_{m-p} \right] \ ,
	\end{eqnarray}
\end{defn}
which is clearly nilpotent. Then one can define the Chevalley-Eilenberg cohomology on integral forms in the obvious way. As already stated, in the present paper we will deal with the trivial module case only: $V= \mathbb{K}$. Moreover, we will be interested in the \virgolette basic classical Lie superalgebras", i.e., those classical Lie superalgebras admitting a non-degenerate bilinear form (see \cite{Frappat}). From these definitions we have that the integral Chevalley-Eilenberg cohomology $H^p_{int} (\mathfrak{g} , \mathbb{K})$ becomes a \emph{twist} of the $(n-p)$-th homology group by $\mathcal{D}$. Then the following proposition:
\begin{prop}[Berezinian Complement Quasi-Isomorphism]\label{Berezinian Complement Quasi-Isomorphism}
	Given a basic classical Lie (super)algebra $\mathfrak{g}$ of dimension $\dim \mathfrak{g} = (m|n)$, the map
	\begin{eqnarray}
		\label{IFLSM} \star : C^p (\mathfrak{g} , \mathbb{K}) &\to& \Pi^{m+n} C^{m-p}_{int} (\mathfrak{g}, \mathbb{K}) \\
		\nonumber \mathcal{Y}^{* a_1} \wedge \ldots \wedge \mathcal{Y}^{*a_p} &\mapsto& \mathcal{D} \otimes \mathcal{Y}_{a_1} \wedge \ldots \wedge \mathcal{Y}_{a_p} \ ,
	\end{eqnarray}
	such that $\displaystyle \left( \mathcal{D} \otimes \mathcal{Y}_{a_1} \wedge \ldots \wedge \mathcal{Y}_{a_p} \right) \left( \mathcal{Y}^{* b_1} \wedge \ldots \wedge \mathcal{Y}^{*b_p} \right) \propto_{\mathbb{K}} \mathcal{D}$, induces a cohomology isomorphism:
	\begin{equation}\label{IFLSN}
		\star : H^{\bullet} \left( \mathfrak{g} ,  \mathbb{K} \right) \overset{\cong}{\underset{}{\longrightarrow}} \Pi^{m+n} H^{m-\bullet}_{int} \left( \mathfrak{g} , \mathbb{K} \right) \ .
	\end{equation}
\end{prop}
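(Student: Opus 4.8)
The plan is to show directly that the linear map $\star$ defined on generators in \eqref{IFLSM} is an isomorphism of complexes up to a shift of parity and degree, i.e.\ that it intertwines the Chevalley--Eilenberg differential $d$ on superforms with the integral-form differential $\delta$, and then invoke the fact that a chain isomorphism induces an isomorphism on cohomology. First I would make the definition of $\star$ unambiguous: the map sends $\mathcal{Y}^{*a_1}\wedge\cdots\wedge\mathcal{Y}^{*a_p}$ to $\mathcal{D}\otimes\mathcal{Y}_{b_1}\wedge\cdots\wedge\mathcal{Y}_{b_{m+?-p}}$, where the $b$'s are the \emph{complementary} indices (those not among the $a$'s), with a sign/coefficient fixed by the pairing condition $\left(\mathcal{D}\otimes\mathcal{Y}_{a_1}\wedge\cdots\wedge\mathcal{Y}_{a_p}\right)\!\left(\mathcal{Y}^{*b_1}\wedge\cdots\wedge\mathcal{Y}^{*b_p}\right)\propto_{\mathbb{K}}\mathcal{D}$. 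Because $\mathfrak g$ is finite dimensional, counting the even part by $\wedge$ and the odd part by $S$ shows that $\wedge^p\mathfrak g^*$ and $S^{m-p}\Pi\mathfrak g$ (tensored with $\mathcal D$) have the same dimension in each internal degree, so $\star$ is a bijection on the cochain groups; the parity shift $\Pi^{m+n}$ and the degree shift $p\mapsto m-p$ are exactly what bookkeeping of the Berezinian generator $\mathcal D=\bigwedge V^i\otimes\bigwedge\xi_\alpha$ forces.

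The heart of the argument is the commutativity (up to sign) of
\begin{equation}
\begin{array}{ccc}
C^p(\mathfrak g,\mathbb K) & \xrightarrow{\ d\ } & C^{p+1}(\mathfrak g,\mathbb K) \\[2pt]
\big\downarrow{\scriptstyle\star} & & \big\downarrow{\scriptstyle\star} \\[2pt]
\Pi^{m+n}C^{m-p}_{int}(\mathfrak g,\mathbb K) & \xrightarrow{\ \delta\ } & \Pi^{m+n}C^{m-p-1}_{int}(\mathfrak g,\mathbb K) \ .
\end{array}
\end{equation}
Here I would use the explicit form of the CE differential recalled in the excerpt, $d=\sum f^a_{bc}\,\mathcal Y^{*b}\wedge\mathcal Y^{*c}\,\iota_{\mathcal Y_a}$, and the homological differential $\partial$ in \eqref{PB}, which on $S^{m-p}\Pi\mathfrak g$ reduces (trivial module) to the bracket term $\sum_{i<j}\pm f^c_{ab}\mathcal Y_c\wedge(\cdots)$. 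The key observation is that $\star$ converts wedging by $\mathcal Y^{*b}$ on superforms into contraction against $\mathcal Y_b$ on integral forms, and contraction $\iota_{\mathcal Y_a}$ into wedging by $\mathcal Y_a$ — this is the usual "Hodge-star swaps creation and annihilation'' phenomenon, here realized algebraically via the Berezinian pairing. Under this dictionary the operator $\sum f^a_{bc}\,\mathcal Y^{*b}\wedge\mathcal Y^{*c}\,\iota_{\mathcal Y_a}$ transported through $\star$ becomes, term by term, $\sum f^a_{bc}\,(\text{wedge }\mathcal Y_b)(\text{wedge }\mathcal Y_c)(\text{contract against }\mathcal Y^{*a})$, which is precisely $\delta=\mathcal D\otimes\partial$ acting on $S^{m-p}\Pi\mathfrak g$ after one fixes conventions; the "basic classical'' hypothesis enters to guarantee the non-degenerate invariant bilinear form that identifies $\mathfrak g\cong\mathfrak g^*$ compatibly and makes the structure constants $f^{abc}$ graded-symmetric in the appropriate slots, so that the signs match.

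I expect the main obstacle to be exactly this sign/parity bookkeeping: the odd generators contribute commuting dual 1-forms, so the Koszul signs in \eqref{PB}--\eqref{PD}, the $(-1)^{|z^a|}$ in the definition of $\delta$, and the parity of $\mathcal D$ (which is $(-1)^{n}$ times the product of parities, hence the $\Pi^{m+n}$) all have to be tracked simultaneously, and an error in any one of them would spoil nilpotence-compatibility. I would handle this by checking the commuting square on a single monomial generator $\mathcal Y^{*a_1}\wedge\cdots\wedge\mathcal Y^{*a_p}$, splitting into the cases where the contracted index $a$ is or is not already present, and comparing the two resulting complementary monomials; once the square commutes on generators it commutes by $\mathbb K$-linearity. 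Having established that $\star$ is an isomorphism of cochain complexes (with the stated shifts), the induced map on cohomology \eqref{IFLSN} is an isomorphism automatically, since $\star$ takes $d$-closed to $\delta$-closed and $d$-exact to $\delta$-exact bijectively. Finally I would remark that this is the algebraic shadow of Poincar\'e--Serre duality and that a completely parallel argument — contracting only the $n-q$ "missing'' odd directions — will later yield the pseudoform version, which is why the present proposition is stated first.
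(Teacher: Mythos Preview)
Your commuting square has a genuine type error that breaks the argument. In the paper's conventions the integral differential \emph{raises} degree, $\delta:C^{q}_{int}(\mathfrak g)\to C^{q+1}_{int}(\mathfrak g)$ (see \eqref{IFLSL}), whereas you have written it as $C^{m-p}_{int}\to C^{m-p-1}_{int}$. With the correct direction, $\star\circ d$ lands in $C^{m-p-1}_{int}$ while $\delta\circ\star$ lands in $C^{m-p+1}_{int}$, so there is no square to commute: $\star$ simply is not a chain map between $(C^\bullet,d)$ and $(C^\bullet_{int},\delta)$, and your ``creation $\leftrightarrow$ annihilation'' dictionary, while morally right, cannot be packaged that way. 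Relatedly, your attempt to reinterpret $\star$ via \emph{complementary} indices is a misreading: the map keeps the \emph{same} indices, $\mathcal Y^{*a_1}\wedge\cdots\wedge\mathcal Y^{*a_p}\mapsto\mathcal D\otimes\mathcal Y_{a_1}\wedge\cdots\wedge\mathcal Y_{a_p}$, because $C^{m-p}_{int}=\mathcal D\otimes S^{p}\Pi\mathfrak g$.

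The paper does not spell out a proof here; it only records the key observation preceding the proposition, namely that by construction $\delta=\mathcal D\otimes\partial$, so $H^{m-p}_{int}(\mathfrak g,\mathbb K)\cong\mathcal D\otimes H_{p}(\mathfrak g,\mathbb K)$, i.e.\ integral cohomology is a $\mathcal D$-twist of Lie-algebra \emph{homology} (the details are deferred to \cite{CCGN2}). The content of \eqref{IFLSN} is then the identification $H^{p}(\mathfrak g,\mathbb K)\cong H_{p}(\mathfrak g,\mathbb K)$, and this is exactly where the ``basic classical'' hypothesis is used: the non-degenerate invariant bilinear form gives a $\mathfrak g$-module isomorphism $\mathfrak g\cong\mathfrak g^{*}$, hence $C_p(\mathfrak g)\cong C^p(\mathfrak g)$ as $\mathfrak g$-modules, from which the cohomology/homology identification follows. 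If you want to salvage your strategy, the honest statement is that the $B$-transport of $d$ to $C_\bullet$ is a degree-\emph{raising} operator whose cohomology coincides with the homology of $\partial$ (total graded antisymmetry of $f_{abc}$ is what makes this work); it is not the equality $\delta\star=\star d$ you wrote.
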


These definitions and results translate to the distributional realisation with the dictionary described above. The action of the CE differential on Dirac delta's is simply given by
\begin{equation}\label{IFLSO}
	d \delta \left( \psi^\alpha \right) = \sum_{a,b,c} f^a_{bc} \mathcal{Y}^{*b} \wedge \mathcal{Y}^{*c} \wedge \iota_{\mathcal{Y}_a} \delta \left( \psi^\alpha \right) = \sum_{\alpha,b,c} f^\alpha_{bc} \mathcal{Y}^{*b} \wedge \mathcal{Y}^{*c} \wedge \iota_{\xi_\alpha} \delta \left( \psi^\alpha \right) \ .
\end{equation}
This definition, equipped with the rules listed in \eqref{IFLSF} can be analogously be interpreted using the last identity of \eqref{IFLSF}, namely, the chain rule:
\begin{equation}\label{IFLSP}
	d \delta \left( \psi^\alpha \right) = \left( d \psi^\alpha \right) \wedge \iota_{\xi_\alpha} \delta \left( \psi^\alpha \right) = \left( \sum_{a,b,c} f^a_{bc} \mathcal{Y}^{*b} \wedge \mathcal{Y}^{*c} \iota_{\mathcal{Y}_a} \psi^\alpha \right) \wedge \iota_{\xi_\alpha} \delta \left( \psi^\alpha \right) = $$ $$ = \sum_{b,c} f^\alpha_{bc} \mathcal{Y}^{*b} \wedge \mathcal{Y}^{*c} \wedge \iota_{\xi_\alpha} \delta \left( \psi^\alpha \right) \ .
\end{equation}

\subsection{Pseudoforms as Infinite-Dimensional Representations}\label{Pseudoforms as Infinite-Dimensional Representations}

In the previous section, we introduced pseudoforms as formal objects containing a non-maximal and non-zero number of Dirac deltas. In this section, we anticipate some constructions of the paper to introduce some pseudoforms \emph{independently} of the distributional realisation. In particular, we will introduce them as infinite-dimensional modules associated with sub-superalgebras or super-cosets. We will use the distributional realisation to define a $\mathfrak{g}$-module (and in particular the $\mathfrak{g}$-action) and a nilpotent differential (actually, it was already defined in \eqref{IFLSP}); then, the dictionary introduced in the previous section will allow translating the $\mathfrak{g}$-module structure and the definition of the differential in the usual polyvector realisation of Berezinians.

We start from a Lie (super)algebra $\mathfrak{g}$, $\dim \mathfrak{g} = (m|n)$, a sub-(super)algebra $\mathfrak{h}$, $\dim \mathfrak{h} = (p|q)$ and denote their (super)coset $\mathfrak{k} = \mathfrak{g} / \mathfrak{h}$, $\dim \mathfrak{k} = (m-p|n-q)$. Associated with $\mathfrak{h}$ and $\mathfrak{k}$ we can define the Berezinian spaces $\mathpzc{B}er \left( \mathfrak{h} \right)$ and $\mathpzc{B}er \left( \mathfrak{k} \right)$, which can be locally realised as
\begin{eqnarray*}
	\mathpzc{B}er \left( \mathfrak{h} \right) &\defeq& \mathbb{K} \cdot \left[ \bigwedge_{i=1}^{p} V^i \otimes \bigwedge_{\alpha=1}^{q} \xi_\alpha \right] \equiv \mathbb{K} \cdot \mathcal{D}_{\mathfrak{h}} \ \leftrightsquigarrow \ \mathbb{K} \cdot \left[ \bigwedge_{i=1}^{p} V^i \wedge \bigwedge_{\alpha=1}^{q} \delta \left( \psi^\alpha \right) \right] \ , \ V^i, \psi^\alpha \in \Pi \mathfrak{h}^* \ , \ \xi_\alpha \in \mathfrak{h} \ , \\
	\mathpzc{B}er \left( \mathfrak{k} \right) &\defeq& \mathbb{K} \cdot \left[ \bigwedge_{\hat{i}=1}^{m-p} V^{\hat{i}} \otimes \bigwedge_{\hat{\alpha}=1}^{n-q} \xi_{\hat{\alpha}} \right] \equiv \mathbb{K} \cdot \mathcal{D}_{\mathfrak{k}}  \ \leftrightsquigarrow \ \mathbb{K} \cdot \left[ \bigwedge_{\hat{i}=1}^{m-p} V^{\hat{i}} \wedge \bigwedge_{\hat{\alpha}=1}^{n-q} \delta \left( \psi^{\hat{\alpha}} \right) \right]\ , \ V^{\hat{i}}, \psi^{\hat{\alpha}} \in \Pi \mathfrak{k}^* \ , \ \xi_{\hat{\alpha}} \in \mathfrak{k} \ .
\end{eqnarray*}
We can now define a $\mathfrak{g}$-action on the two spaces above and we will see that $\mathpzc{B}er \left( \mathfrak{h} \right)$ and $\mathpzc{B}er \left( \mathfrak{k} \right)$ \emph{are not $\mathfrak{g}$-modules}. Nonetheless, we will infer how we should complement these two spaces to be (infinite-dimensional) $\mathfrak{g}$-modules. For that, we use the distributional realisation, the translation to the polyvector realisation is straightforward, given the dictionary described above. In particular, given $\mathcal{Y}_a \in \mathfrak{g}$, we define the $\mathfrak{g}$-action on $\mathpzc{B}er \left( \bullet \right) $ as a (graded) Lie derivative:
\begin{eqnarray}
	\nonumber \mathcal{L} : \mathfrak{g} \times \mathpzc{B}er \left( \bullet \right) &\to& X \\
	\label{PIDRA} \left( \mathcal{Y}_a , \mathcal{D}_{\bullet} \right) &\mapsto& \mathcal{L} \left( \mathcal{Y}_a \right) \mathcal{D}_\bullet \equiv \mathcal{L}_{\mathcal{Y}_a} \mathcal{D}_{\bullet} = \left( d \iota_{\mathcal{Y}_a} + \left( -1 \right)^{\left| \mathcal{Y}_a \right|} \iota_{\mathcal{Y}_a} d \right) \mathcal{D}_{\bullet} \ ,
\end{eqnarray}
where we have left the target space $X$ unspecified on purpose. By directly inspecting the action in \eqref{PIDRA}, we have
\begin{eqnarray*}
	\mathcal{Y}_a \leftrightsquigarrow X_i \ &\implies& \ \mathcal{L}_{X_i} \mathcal{D}_{\bullet} = \left( d \iota_{X_i} + \iota_{X_i} d \right) \mathcal{D}_{\bullet} = \sum_{j,k} 2 f^j_{ik} V^k \iota_{X_j} \mathcal{D}_{\bullet} + \sum_{\alpha, \beta} f^\alpha_{i \beta} \psi^\beta \iota_{\xi_\alpha} \mathcal{D}_{\bullet} \ , \\
	\mathcal{Y}_a \leftrightsquigarrow \xi_\alpha \ &\implies& \ \mathcal{L}_{\xi_\alpha} \mathcal{D}_{\bullet} = \left( d \iota_{\xi_\alpha} - \iota_{\xi_\alpha} d \right) \mathcal{D}_{\bullet} = \sum_{i,\beta} 2 f^i_{\alpha \beta} \psi^\beta \iota_{X_i} \mathcal{D}_{\bullet} + \sum_{i, \beta} f^\beta_{i \alpha} V^i \iota_{\xi_\beta} \mathcal{D}_{\bullet} \ .
\end{eqnarray*}
First of all, note that in general the action of the differential on $\mathcal{D}_\bullet$ is not trivial, as opposed to what happens when considering integral forms as in \eqref{IFLSL}. Secondly, note that that the Lie derivative produces not only objects in $\mathpzc{B}er \left( \bullet \right)$, but also objects in $\Pi \mathfrak{g} \otimes \mathpzc{B}er \left( \bullet \right) \otimes \Pi \mathfrak{g}^*$. More explicitly, we have determined the target space $X$ as
\begin{eqnarray*}
	\mathcal{L} : \mathfrak{g} \times \mathpzc{B}er \left( \mathfrak{h} \right) &\to& \mathpzc{B}er \left( \mathfrak{h} \right) \oplus \left[ \left( \Pi \mathfrak{h} \otimes \mathpzc{B}er \left( \mathfrak{h} \right) \right) \otimes \Pi \mathfrak{k}^* \right] \ , \\
	\mathcal{L} : \mathfrak{g} \times \mathpzc{B}er \left( \mathfrak{k} \right) &\to& \mathpzc{B}er \left( \mathfrak{k} \right) \oplus \left[ \left( \Pi \mathfrak{k} \otimes \mathpzc{B}er \left( \mathfrak{k} \right) \right) \otimes \Pi \mathfrak{h}^* \right] \ .
\end{eqnarray*}
We can define two $\mathfrak{g}$-modules, which we call the modules of $(p|q)$- and $(m-p|n-q)$- pseudoforms as follows.
\begin{defn}[Pseudoform Modules Relative to a Lie sub-(super)algebra]

	Given a Lie superalgebra $\mathfrak{g}$ of dimension $\dim \left( \mathfrak{g} \right) = (m|n)$, a Lie sub-(super)algebra $\mathfrak{h}$ of dimension $\dim \left( \mathfrak{h} \right) = (p|q)$, and denoting $\mathfrak{k} = \mathfrak{g}/\mathfrak{h}$, we define the \emph{modules of $(p|q)$- and $(m-p|n-q)$- pseudoforms} as
	\begin{eqnarray}
		\label{PIDRB} V_{\mathfrak{h}}^{(p|q)} &\defeq& \bigoplus_{i=0}^\infty \left( S^i \Pi \mathfrak{h} \otimes \mathpzc{B}er \left( \mathfrak{h} \right) \right) \otimes S^i \Pi \mathfrak{k}^* = \bigoplus_{i=0}^\infty C^{m-i}_{int} \left( \mathfrak{h} \right) \otimes C^i \left( \mathfrak{k} \right) \ , \\
		\label{PIDRC} V_{\mathfrak{k}}^{(m-p|n-q)} &\defeq& \bigoplus_{i=0}^\infty \left( S^i \Pi \mathfrak{k} \otimes \mathpzc{B}er \left( \mathfrak{k} \right) \right) \otimes S^i \Pi \mathfrak{h}^* = \bigoplus_{i=0}^\infty C^{m-p-i}_{int} \left( \mathfrak{k} \right) \otimes C^i \left( \mathfrak{h} \right) \ .
	\end{eqnarray}
	In particular, pseudoforms are constructed as integral forms of $\mathfrak{h}$ (resp., $\mathfrak{k}$) tensored with superforms of $\mathfrak{k}$ (resp., $\mathfrak{h}$). \eqref{PIDRB} and \eqref{PIDRC} define $\mathfrak{g}$-modules, where the $\mathfrak{g}$-action is given by
	\begin{eqnarray}
	\nonumber \mathcal{L} : \mathfrak{g} \times V_{\bullet}^{(\bullet|\bullet)} &\to& V_{\bullet}^{(\bullet|\bullet)} \\
	\label{PIDRD} \left( \mathcal{Y}_a , \mathcal{D}_{\bullet} \right) &\mapsto& \mathcal{L} \left( \mathcal{Y}_a \right) \mathcal{D}_\bullet \equiv \mathcal{L}_{\mathcal{Y}_a} \mathcal{D}_{\bullet} = \left( d \iota_{\mathcal{Y}_a} + \left( -1 \right)^{\left| \mathcal{Y}_a \right|} \iota_{\mathcal{Y}_a} d \right) \mathcal{D}_{\bullet} \ ,
	\end{eqnarray}
	where $\mathcal{L} : \mathfrak{g} \to End \left( V_{\bullet}^{(\bullet|\bullet)} \right)$ by construction satisfies
	\begin{equation}\label{PIDRE}
		\mathcal{L} \left( \left[ \mathcal{Y}_a , \mathcal{Y}_b \right] \right) = \mathcal{L} \left( \mathcal{Y}_a \right) \mathcal{L} \left( \mathcal{Y}_b \right) - (-1)^{\left| \mathcal{Y}_a \right| \left| \mathcal{Y}_b \right|} \mathcal{L} \left( \mathcal{Y}_b \right) \mathcal{L} \left( \mathcal{Y}_a \right) \ .
	\end{equation}
\end{defn}

Starting from the two spaces \eqref{PIDRB} and \eqref{PIDRC}, we can now define \emph{pseudoform cochains} and \emph{pseudoform cohomology}, with respect to a given sub-(super)algebra:
\begin{defn}[Pseudoform Cohomology Relative to a sub-(super)algebra]\label{Pseudoform Cohomology Relative to a sub-(super)algebra}
	
	Given a Lie superalgebra $\mathfrak{g}$ of dimension $\dim \left( \mathfrak{g} \right) = (m|n)$, a Lie sub-(super)algebra $\mathfrak{h}$ of dimension $\dim \left( \mathfrak{h} \right) = (p|q)$, and denoting $\mathfrak{k} = \mathfrak{g}/\mathfrak{h}$, and given $s \in \mathbb{N} \cup \left\lbrace 0 \right\rbrace$, we define the spaces of \emph{$(p+s|q)$-, $(p-s|q)$-, $(m-p+s|n-q)$- and $(m-p-s|n-q)$- cochains as}
	\begin{eqnarray}
		\label{PIDRF} C^{p+s} \left( \mathfrak{g}, V_{\mathfrak{h}}^{(p|q)} \right) &\defeq& C^s \left( \mathfrak{k} \right) \otimes V_{\mathfrak{h}}^{(p|q)} = \bigoplus_{i=0}^\infty C^{p-i}_{int} \left( \mathfrak{h} \right) \otimes C^{i+s} \left( \mathfrak{k} \right) \ , \\
		\label{PIDRG} C^{p-s} \left( \mathfrak{g}, V_{\mathfrak{h}}^{(p|q)} \right) &\defeq& C_s \left( \mathfrak{h} \right) \otimes V_{\mathfrak{h}}^{(p|q)} = \bigoplus_{i=0}^\infty C^{p-s-i}_{int} \left( \mathfrak{h} \right) \otimes C^{i} \left( \mathfrak{k} \right) \ , \\
		\label{PIDRH} C^{m-p+s} \left( \mathfrak{g}, V_{\mathfrak{k}}^{(m-p|n-q)} \right) &\defeq& C^s \left( \mathfrak{h} \right) \otimes V_{\mathfrak{k}}^{(m-p|n-q)} = \bigoplus_{i=0}^\infty C^{m-p-i}_{int} \left( \mathfrak{k} \right) \otimes C^{i+s} \left( \mathfrak{h} \right) \ , \\
		\label{PIDRI} C^{m-p-s} \left( \mathfrak{g}, V_{\mathfrak{k}}^{(m-p|n-q)} \right) &\defeq& C_s \left( \mathfrak{k} \right) \otimes V_{\mathfrak{k}}^{(m-p|n-q)} = \bigoplus_{i=0}^\infty C^{m-p-s-i}_{int} \left( \mathfrak{k} \right) \otimes C^{i} \left( \mathfrak{h} \right) \ .
	\end{eqnarray}
	We will adopt the shorter notation
	\begin{equation}\label{PIDRJ}
		C^{p \pm s} \left( \mathfrak{g}, V_{\mathfrak{h}}^{(p|q)} \right) \equiv C^{(p \pm s|q)} \left( \mathfrak{g} \right) \ , \ C^{m -p \pm s} \left( \mathfrak{g}, V_{\mathfrak{k}}^{(m-p|n-q)} \right) \equiv C^{(m-p \pm s|n-q)} \left( \mathfrak{g} \right) \ ,
	\end{equation}
	where the second number in the apex (which we called \emph{picture number} in the previous section), indicates on which module they take values. We can lift $C^{(s|q)} \left( \mathfrak{g} \right)$ and $C^{(s|n-q)} \left( \mathfrak{g} \right)$ to cochain complexes by introducing a Chevalley-Eilenberg differential; in the distributional realisation, this is simply given by
	\begin{eqnarray}
		\label{PIDRK} d : C^{(s|\bullet)} \left( \mathfrak{g} \right) &\to& C^{(s+1|\bullet)} \left( \mathfrak{g} \right) \\
		\nonumber \omega &\mapsto& d \omega = \sum_{a,b,c} f^a_{bc} \mathcal{Y}^{*b} \wedge \mathcal{Y}^{* c} \wedge \iota_{\mathcal{Y}_c} \omega \ .
	\end{eqnarray}
	The translation to polyvectors realisation is again straightforward, but one should use different differentials when acting on forms or vectors, thus making the expressions more cumbersome.
	
	We can now define closed and exact pseudoforms as in \eqref{PE} and \eqref{PF}, where we take as module $V$ either $V_{\mathfrak{h}}^{(p|q)}$ or $V_{\mathfrak{k}}^{(m-p|n-q)}$ defined in \eqref{PIDRB} and \eqref{PIDRC}, respectively:
	\begin{eqnarray}
		\label{PIDRL} H^{(\bullet|q)} \left( \mathfrak{g} \right) &\defeq& H^\bullet \left( \mathfrak{g} , V_{\mathfrak{h}}^{(p|q)} \right) \ , \\
		\label{PIDRM} H^{(\bullet|n-q)} \left( \mathfrak{g} \right) &\defeq& H^\bullet \left( \mathfrak{g} , V_{\mathfrak{k}}^{(m-p|n-q)} \right) \ .
	\end{eqnarray}
	The previous definitions can be naturally extended if we consider also a $\mathfrak{g}$-module $V$: we will define the pseudoform cochains with values in the module $V$ as
	\begin{equation}\label{PIDRN}
		C^{(\bullet|q)} \left( \mathfrak{g} , V \right) \defeq C^\bullet \left( \mathfrak{g} , V_{\mathfrak{h}}^{(p|q)} \otimes V \right) \ , \ C^{(\bullet|n-q)} \left( \mathfrak{g} , V \right) \defeq C^\bullet \left( \mathfrak{g} , V_{\mathfrak{k}}^{(m-p|n-q)} \otimes V \right) \ ,
	\end{equation}
	and the cohomology will be defined as usual with respect to the modules $V_{\mathfrak{h}}^{(p|q)} \otimes V$ and $V_{\mathfrak{k}}^{(m-p|n-q)} \otimes V$.
\end{defn}

In this section, we have given the definitions of pseudoforms and cohomology of pseudoforms in the algebraic context. In the following sections, we will show how the Koszul-Hochschild-Serre spectral sequence can be extended to these new objects. In particular, by its very construction, we will see that the filtrations (as we will explain in the following section we will use two inequivalent filtrations associated with a given sub-(super)algebra) used to introduce the spectral sequence are made of finite-dimensional spaces, even when dealing with pseudoforms, thus greatly simplifying the task. We will use an explicit example as a guide and then extract general results that allow us to extend Thm.\ref{KHStheorem}.

\section{Cohomology via Spectral Sequences: $\mathfrak{osp}(2|2)$}\label{Cohomology via Spectral Sequences}

In this section, we generalize the Koszul spectral sequence to compute pseudoform cohomology of the Lie superalgebra $\mathfrak{osp}(2|2)$ via the associated supercoset $\mathfrak{osp}(2|2) / \mathfrak{osp}(1|2)$. We will use this guiding example to infer some general constructions which will be introduced in the next section.

By denoting with $Z,H,E^{\pm\pm}$ the bosonic generators and with $F^\pm, \bar{F}^\pm$ the fermionic ones, the non trivial (anti-)commutation relations of $\mathfrak{osp}(2|2)$ (see e.g., \cite{Frappat}\footnote{With respect to the fermionic generators of \cite{Frappat}, we used the combinations $\displaystyle \frac{1}{2} \left( F^\pm \pm \bar{F}^\pm \right) $ and $\displaystyle \frac{1}{2} \left( F^\pm \mp \bar{F}^\pm \right) $ in order to make the $\mathfrak{osp}(1|2)$ sub-superalgebra manifest.}) are
\begin{eqnarray}
	\label{VSPA} && \left[ H, E^{\pm\pm} \right] = \pm E^{\pm\pm} \ , \ \left[ E^{++} , E^{--} \right] = 2 H \ , \ \left[ H , F^\pm \right] = \pm \frac{1}{2} F^\pm \ , \\
	\label{VSPB} && \left[ E^{\pm\pm} , F^\mp \right] = - F^\pm \ , \ \left\lbrace F^\pm , F^\pm \right\rbrace = \pm \frac{1}{2} E^{\pm\pm} \ , \ \left\lbrace F^+ , F^- \right\rbrace = H \ , \\
	\label{VSPC} && \left[ Z, \bar{F}^\pm \right] = \frac{1}{2} F^\pm \ , \ \left\lbrace \bar{F}^\pm , \bar{F}^\pm \right\rbrace = \mp \frac{1}{2} E^{\pm\pm} \ , \\
	\label{VSPD} && \left[ Z , F^\pm \right] = \frac{1}{2} \bar{F}^\pm \ , \ \left[ H , \bar{F}^\pm \right] = \pm \frac{1}{2} \bar{F}^\pm \ , \ \left[ E^{\pm\pm} , \bar{F}^\mp \right] = - \bar{F}^\pm \ , \ \left\lbrace \bar{F}^\pm , F^\mp \right\rbrace = \mp \frac{1}{2} Z \ .
\end{eqnarray}
By denoting with $\mathfrak{g} = \mathfrak{osp}(2|2)$, $\mathfrak{h}= \mathfrak{osp}(1|2)$ and $\mathfrak{k}= \mathfrak{g}/\mathfrak{h}$ (again, as a quotient of vector spaces as in Def.\ref{Relative Lie (super)algebra cohomology}), we notice that the lines 
\eqref{VSPA} and \eqref{VSPB} represent the sub-(super)algebra $\mathfrak{h}$ and, schematically, we have
\begin{equation}\label{VSPE}
	\left[ \mathfrak{h} , \mathfrak{h} \right] \subseteq \mathfrak{h} \ , \ \left[ \mathfrak{h} , \mathfrak{k} \right] \subseteq \mathfrak{k} \ , \ \left[ \mathfrak{k} , \mathfrak{k} \right] \subseteq \mathfrak{h} \ .
\end{equation}
Then, the sub-(super)algebra $\mathfrak{h}$ is \emph{reductive} in $\mathfrak{g}$ and the (super)coset $\mathfrak{k}$ is \emph{homogeneous}. 

We can now move on to the dual forms in $\Pi \mathfrak{g}^*$ by translating the commutation relations into MC equations. We denote by $\mathfrak{g}^*, 
\mathfrak{h}^*, \mathfrak{k}^*$ the dual spaces (the parity inversion is understood) to $\mathfrak{g}, \mathfrak{h}, \mathfrak{k}$: given a basis vector $\mathcal{Y}_a$, its dual $\mathcal{Y}^{*a}$ satisfies $\iota_{\mathcal{Y}_a} \mathcal{Y}^{*b} =\delta_a^b$.

The forms dual to the generators are denoted as
\begin{equation}\label{VSPF}
	H \leftrightsquigarrow V^0 \ , \ E^{\pm\pm} \leftrightsquigarrow V^{\pm\pm} \ , \ Z \leftrightsquigarrow  U \ , \ F^\pm \leftrightsquigarrow \psi^\pm \ , \ \bar{F}^\pm \leftrightsquigarrow \bar{\psi}^\pm \ ,
\end{equation}
and the MC equations read
\begin{eqnarray}
	\label{VSPG} d V^0 &=& 2 V^{++} \wedge V^{--} + \psi^+ \wedge \psi^- \ , \\
	\label{VSPH} d V^{\pm\pm} &=& \pm V^0 \wedge V^{\pm\pm} \pm \frac{1}{2} \left( \psi^\pm\wedge  \psi^\pm\right) \mp \frac{1}{2} \left( \bar{\psi}^\pm \wedge \bar{\psi}^\pm \right) \ , \\
	\label{VSPI} d \psi^\pm &=& \pm \frac{1}{2} V^0 \wedge \psi^\pm - V^{\pm\pm} \wedge \psi^\mp + \frac{1}{2} U \wedge \bar{\psi}^\pm \ , \\
	\label{VSPJ} d U &=& - \frac{1}{2} \bar{\psi}^+ \wedge \psi^- + \frac{1}{2} \bar{\psi}^- \wedge \psi^+ \ , \\
	\label{VSPK} d \bar{\psi}^\pm &=& \frac{1}{2} U \wedge \psi^\pm \pm \frac{1}{2} V^0 \wedge \bar{\psi}^\pm - V^{\pm\pm} \wedge \bar{\psi}^\mp \ .
\end{eqnarray}
When taking the coset w.r.t. the sub-superalgebra $\mathfrak{osp}(1|2)$, generated by $H, E^{\pm \pm}, F^\pm$ as in \eqref{VSPA} and \eqref{VSPB}, the CE differential is modified in a covariant derivative as
\begin{equation}\label{VSPL}
	d \to \nabla_{\mathfrak{k}} = d - A \ ,
\end{equation}
where $A$ is a $3\times 3$ supermatrix whose entries are the $\mathfrak{osp}(1|2)$ forms, acting on the vector $(U, \bar{\psi}^+ , \bar{\psi}^-)$. In particular, we 
can rewrite the MC equations in terms of the curvatures of $\mathfrak{osp}(1|2)$:
\begin{eqnarray}\label{VSPM}
	&&R^0 = 0 \ , \	R^{\pm\pm} = \mp \frac{1}{2} \bar{\psi}^\pm \wedge \bar{\psi}^\pm \ , \ \nabla_{\mathfrak{k}}   \psi^\pm = \frac{1}{2} U \wedge \bar{\psi}^\pm \ , \nonumber \\
	&&\nabla_{\mathfrak{k}} U = 0 \ , \ \nabla_{\mathfrak{k}} \bar{\psi}^\pm = 0 \ .
\end{eqnarray} 
This shows that the operator $\nabla_{\mathfrak{k}}$ has a trivial action on any form in $\mathfrak{osp} \left( 2|2 \right) / \mathfrak{osp}(1|2)$, so that they are (covariantly) closed. 

The cohomology of the supercoset $\mathfrak{k}$ is the \emph{relative cohomology} $H^\bullet \left( \mathfrak{g} , \mathfrak{h} \right)$ as introduced in Def.\ref{Relative Lie (super)algebra cohomology}.

Going back to the example under examination, we have to look for invariant forms. Since every form in $\mathfrak{k}$ is closed, the invariant ones will automatically generate the relative cohomology. It is easy to prove that there are no invariants among superforms, except for the constants:
\begin{equation}\label{VSPP}
	H^p \left( \mathfrak{g} , \mathfrak{h} \right) = \begin{cases}
		\mathbb{K} \ , \ \text{ if } p = 0 \ , \\
		\left\lbrace 0 \right\rbrace \ , \ \text{ else.}
	\end{cases}
\end{equation}
The integral form cohomology of $\mathfrak{k}$ is easily computed by using the isomorphism introduced in Prop.\ref{Berezinian Complement Quasi-Isomorphism} (which trivially extends to the (super) coset):
\begin{equation}\label{VSPQ}
	\star : H^{\bullet} \left( \mathfrak{g} , \mathfrak{h} \right) \overset{\cong}{\underset{}{\longrightarrow}} \Pi H^{1-\bullet}_{integral} \left( \mathfrak{g} , \mathfrak{h} \right) \equiv \Pi H^{(1-\bullet|2)} \left( \mathfrak{g} , \mathfrak{h} \right) \ .
\end{equation}
In particular, this leads to
\begin{equation}\label{VSPR}
	H^p_{integral} \left( \mathfrak{g} , \mathfrak{h} \right) \equiv H^{(p|2)} \left( \mathfrak{g} , \mathfrak{h} \right) = \begin{cases}
		\Pi \mathbb{K} \ , \ \text{ if } p = 1 \ , \\
		\left\lbrace 0 \right\rbrace \ , \ \text{ else.}
	\end{cases}
\end{equation}
The class $\displaystyle \left[ \mathpzc{B}er_{\mathfrak{k}} \right] \in H^1_{integral} \left( \mathfrak{g} , \mathfrak{h} \right)$ corresponds to the Berezinian class of the supercoset $\mathfrak{k}$.

\subsection{The Berezinian $\mathcal{D}_{\mathfrak{k}}$: the Distributional Realisation}

We can check that $\displaystyle \mathcal{D}_{\mathfrak{k}} \in \mathpzc{B}er \left( \mathfrak{k} \right)$ is invariant by using the realisation of integral forms via Dirac deltas.

A representative of the integral form in the class $\displaystyle \mathcal{D}_{\mathfrak{k}}$ is
\begin{equation}\label{AECA}
	\mathcal{D}_{\mathfrak{k}} = U \wedge \delta \left( \bar{\psi}^+ \right) \wedge \delta \left( \bar{\psi}^- \right) \ .
\end{equation}
In order to verify that $\mathcal{D}_{\mathfrak{k}}$ is invariant w.r.t. the sub-superalgebra $\mathfrak{h}$, we have to verify that
\begin{equation}\label{AECB}
	\iota_{\mathcal{Y}_a} \mathcal{D}_{\mathfrak{k}} = 0 \ , \ \mathcal{L}_{\mathcal{Y}_a} \mathcal{D}_{\mathfrak{k}} = 0 \ , \ \forall \mathcal{Y}_a \in \mathfrak{h} \ .
\end{equation}
While the first condition is trivially satisfied, the second one follows from $\displaystyle d \mathcal{D}_{\mathfrak{k}} = 0$ or, explicitly,
\begin{eqnarray}
	 \nonumber d \left[ U \wedge \delta \left( \bar{\psi}^+ \right) \wedge \delta \left( \bar{\psi}^- \right) \right] &=& - \frac12 U \wedge V^0 \wedge \bar{\psi}^+ \wedge \bar{\iota}_+ \delta \left( \bar{\psi}^+ \right) \wedge \delta \left( \bar{\psi}^- \right) + \\
	\nonumber &-& \frac{1}{2} U \wedge \delta \left( \bar{\psi}^+ \right) \wedge V^0 \wedge \bar{\psi}^- \wedge \bar{\iota}_- \delta \left( \bar{\psi}^- \right) = \\
	\label{AECC} &=& \frac{1}{2} U \wedge V^0 \wedge \delta \left( \bar{\psi}^+ \right) \wedge \delta \left( \bar{\psi}^- \right) + \frac{1}{2} U \wedge \delta \left( \bar{\psi}^+ \right) \wedge V^0 \wedge \delta \left( \bar{\psi}^- \right) = 0 \ ,
\end{eqnarray}
where we used the properties in \eqref{IFLSF} and the action of CE differential on this realisation of integral forms defined in \eqref{IFLSP}. Hence, $\mathcal{L}_{\mathcal{Y}_a} \mathcal{D}_{\mathfrak{k}} = d \iota_{\mathcal{Y}_a} \mathcal{D}_{\mathfrak{k}} + (-1)^{|\mathcal{Y}_a|} \iota_{\mathcal{Y}_a} d \mathcal{D}_{\mathfrak{k}} = 0 , \forall \mathcal{Y}_a \in \mathfrak{h}$.

With this realisation, we can write the Berezinian $\displaystyle \mathcal{D}_{\mathfrak{g}} \in \mathpzc{B}er \left( \mathfrak{g} \right)$ of the superalgebra ${\mathfrak{g}}$, which represents the top integral form, as
\begin{eqnarray}
\label{newBERA}
\mathcal{D}_{\mathfrak{g}} = \mathcal{D}_{\mathfrak{k}} \wedge \mathcal{D}_{\mathfrak{h}} = U  \delta \left( \bar{\psi}^+ \right)  \delta \left( \bar{\psi}^- \right)  \wedge    V^0 V^{++} V^{--}  \delta \left({\psi}^+ \right) \delta \left({\psi}^- \right) \ ,
\end{eqnarray}
where the top form $\mathcal{D}_{\mathfrak{k}}$ of the coset \eqref{AECA} multiplies the top form $ \mathcal{D}_{\mathfrak{h}}$ of the Lie sub-(super)algebra. 
$\mathcal{D}_{\mathfrak{g}}$ is a top form in $C^4_{int} \left( \mathfrak{g} \right) \equiv C^{(4|4)} \left( \mathfrak{g} \right)$, then it is closed and not exact. Moreover, notice that, since ${\mathfrak{h}}$ is a Lie sub-(super)algebra, the integral form $ \mathpzc{B}er_{\mathfrak{h}}$, i.e., its Berezinian top form, is an element of the cohomology $H^\bullet_{int} \left( \mathfrak{h} \right)$ as a consequence of the duality in Prop.\ref{Berezinian Complement Quasi-Isomorphism}.

\subsection{The Spectral Sequence}

Let us now focus on calculating the pseudoform cohomology by means of Koszul spectral sequences. As discussed in Sec.\ref{Pseudoforms as Infinite-Dimensional Representations}, the choice of the sub-(super)algebra $\mathfrak{h}$ in $\mathfrak{g}$ induces pseudoforms as integral forms in $\mathfrak{h}$ and $\mathfrak{k}$ or, in other terms, at picture number $\dim \mathfrak{h}_1$ and $\dim \mathfrak{k}_1$. In the specific example under examination, we have pseudoforms at picture number 2 both from $\mathfrak{h}$ and from $\mathfrak{k}$. To construct page 0 of the spectral sequence we consider the filtration defined in Def.\ref{Koszul-Hochschild-Serre Spectral Sequence}, but considering picture-number 2 forms:
\begin{equation}\label{TSSA}
	F^p C^{(q|2)} \left( \mathfrak{g} \right) = \left\lbrace \omega \in C^{(q|2)} \left( \mathfrak{g} \right) : \forall \mathcal{Y}_a \in \mathfrak{h} , \iota_{\mathcal{Y}_{a_1}} \ldots \iota_{\mathcal{Y}_{a_{q+1-p}}} \omega = 0 \right\rbrace \ .
\end{equation}
It is not difficult to verify that
\begin{align}
	\label{TSSB} F^{q+n} C^{(q|2)} \left( \mathfrak{g} \right) &= 0 \ , \ \forall n \in \mathbb{N} \setminus \left\lbrace 0 \right\rbrace , q \in \mathbb{Z} \ , \\
	\label{TSSC} F^{p+1} C^{(q|2)} \left( \mathfrak{g} \right) &\subseteq F^{p} C^{(q|2)} \left( \mathfrak{g} \right) \ , \ \forall p,q \in \mathbb{Z} \ .
\end{align}
In order to verify that \eqref{TSSA} correctly defines a filtration, we have to check that
\begin{equation}\label{TSSD}
	d F^p C^{(q|2)} \left( \mathfrak{g} \right) \subseteq F^p C^{(q+1|2)} \left( \mathfrak{g} \right) \ , \ \forall p,q \in \mathbb{Z} \ .
\end{equation}
Let us verify this for $p=q$ (the generalisation is straightforward). We have
\begin{eqnarray}
	\label{TSSE} F^q C^{(q|2)} \left( \mathfrak{g} \right) &=& \left\lbrace \omega \in C^{(q|2)} \left( \mathfrak{g} \right) : \forall \mathcal{Y}_a \in \mathfrak{h} , \iota_{\mathcal{Y}_a} \omega = 0 \right\rbrace \ , \\
	\label{TSSEA} F^q C^{(q+1|2)} \left( \mathfrak{g} \right) &=& \left\lbrace \omega \in C^{(q+1|2)} \left( \mathfrak{g} \right) : \forall \mathcal{Y}_a \in \mathfrak{h} , \iota_{\mathcal{Y}_{a_1}} \iota_{\mathcal{Y}_{a_2}} \omega = 0 \right\rbrace \ .
\end{eqnarray}
Thus, $d \omega \in F^q C^{(q+1|2)} \left( \mathfrak{g} \right)$ iff $\iota_{\mathcal{Y}_{a_1}} \iota_{\mathcal{Y}_{a_2}} d\omega = 0, \forall \mathcal{Y}_{a_1} , \mathcal{Y}_{a_2} \in \mathfrak{h}$. In particular, we have
\begin{equation}\label{TSSF}
	\iota_{\mathcal{Y}_{a_1}} \iota_{\mathcal{Y}_{a_2}} d\omega = \iota_{\mathcal{Y}_{a_1}} \iota_{\mathcal{Y}_{a_2}} d\omega + \left( -1 \right)^{|\xi_2|+1} \iota_{\mathcal{Y}_{a_1}} d \iota_{\mathcal{Y}_{a_2}} \omega = $$ $$ = \iota_{\mathcal{Y}_{a_1}} \mathcal{L}_{\mathcal{Y}_{a_2}} \omega = \iota_{\mathcal{Y}_{a_1}} \mathcal{L}_{\mathcal{Y}_{a_2}} \omega + \left( -1 \right)^{|\xi_1||\xi_2|+1} \mathcal{L}_{\mathcal{Y}_{a_2}} \iota_{\mathcal{Y}_{a_1}} \omega = \iota_{\left[ \mathcal{Y}_{a_1} , \mathcal{Y}_{a_2} \right]} \omega = 0 \ ,
\end{equation}
where we systematically added trivial terms and used the definitions \eqref{TSSE} and \eqref{TSSEA}. The extension to any $p$ leads to the same type of manipulations. Hence we verified that \eqref{TSSA} correctly defines a filtration on $C^{(\bullet|2)} \left( \mathfrak{g} \right)$.

There are major differences between the spectral sequence used for conventional Lie algebras and the one used for pseudoforms for Lie superalgebras. For the former we have $q \in \left\lbrace 0 , 1 , \ldots , \text{dim} \, \mathfrak{g} \right\rbrace$, for the latter $q \in \mathbb{Z}$. This is a consequence of the fact that the complex of superforms is unbounded from above and the complex of integral forms is unbounded from below. 
Since the pseudoforms induced from the filtration \eqref{TSSA} arise as (see \eqref{PIDRF} $\div$ \eqref{PIDRI})
\begin{equation}\label{TSSFA}
	C^{(\bullet|2)} \left( \mathfrak{g} \right) = \bigoplus C_{int}^\bullet \left( \mathfrak{k} \right) \otimes C^\bullet \left( \mathfrak{h} \right) \ ,
\end{equation}
the complex is unbounded both from above and from below. 

In addition, for Lie algebras $\mathfrak{g}$, one always has
	\begin{equation}\label{TSSG}
		F^p C^q  \left( \mathfrak{g} \right) = C^q \left( \mathfrak{g} \right) \ , \ \forall p \leq 0 \ ,
	\end{equation}
since the contraction operator $\iota_\xi$ is odd for any $\xi \in \mathfrak{g}$. On the contrary, for superalgebras \eqref{TSSG} does not hold.

It is convenient to use the following notation for the spaces of the filtration \eqref{TSSA}
\begin{equation}\label{TSSH}
	F^{p} C^{(q|2)}\left( \mathfrak{g} \right) \equiv C^{(q|2)}_{q-p} \left( \mathfrak{g} \right) \defeq \bigoplus_{i=0}^{q-p} C^{(q-i|2)}(\mathfrak{k}) \otimes  C^{(i|0)}(\mathfrak{h}) \ ,
\end{equation}
i.e., the space $(q|2)$-forms depending \emph{at most} on $(q-p)$-superforms in $\mathfrak{h}^*$. For example, 
\begin{eqnarray}
	\label{AN_B} C^{(q|2)}_{1} \left( \mathfrak{g} \right) &=& C^{(q|2)} \left( \mathfrak{k} \right) \oplus \Big( C^{(q-1|2)} \left( \mathfrak{k} \right) \otimes C^{(1|0)} \left( \mathfrak{h}\right) \Big) \ , \nonumber \\
	\nonumber C^{(q|2)}_{2} \left( \mathfrak{g} \right) &=& C^{(q|2)} \left( \mathfrak{k} \right) \oplus \Big( C^{(q-1|2)} \left( \mathfrak{k} \right) \otimes C^{(1|0)} \left( \mathfrak{h}\right) \Big) \oplus \Big( C^{(q-2|2)} \left( \mathfrak{k} \right) \otimes C^{(2|0)} \left( \mathfrak{h} \right) \Big) \ .
\end{eqnarray}
This is one of the main strengths of this technique to calculate pseudoform cohomology. Even though we start from infinite-dimensional spaces, the filtration introduced in \eqref{TSSA} allows to consider finite-dimensional subspaces of pseudoforms, only. In the present case where $\text{dim} \mathfrak{k} = (1|2)$, namely, when $C^{(q|2)} \left( \mathfrak{k} \right) = \left\lbrace 0 \right\rbrace$ for $q >1$, the definition in \eqref{TSSA} implies the further simplifications
\begin{eqnarray}\label{TSSI}
	&&F^p C^{(q|2)} \left( \mathfrak{g} \right) = 0 \ , \ \forall p \geq 2 \ , \ q \geq 1 \ , \nonumber \\
	&&F^1 C^{(q|2)} \left( \mathfrak{g} \right) = C_{q-1}^{(q|2)} \left( \mathfrak{g} \right) = C^{(1|2)} \left( \mathfrak{k} \right) \otimes C^{(q-1|0)} \left( \mathfrak{h} \right) \ , \ \forall q \geq 1 \ .
\end{eqnarray}

We can now define page 0 of the spectral sequence for the pseudoforms of $\mathfrak{g}$ associated to the filtration \eqref{TSSA} as
\begin{equation}\label{TSSJ}
	E_0^{m,n} \defeq F^m C^{(m+n|2)} \left( \mathfrak{g} \right) / F^{m+1} C^{(m+n|2)} \left( \mathfrak{g} \right) \ .
\end{equation}
In Table \ref{TableTSSB} we collect the whole page at picture number 2 for the example under examination.
\vskip .3cm
\begin{table}[ht!]
\centering
\begin{tabular}{cccccc}
 \multicolumn{1}{c}~                               
&  \multicolumn{1}{c}0                                                                  
&  \multicolumn{1}{c}1                                                                                                                     
&  \multicolumn{1}{c}2                                                                                                                     
&  \multicolumn{1}{c}3                                                                                                                     
& $\ldots$ 
\\ \cline{2-6} 
\multicolumn{1}{c|}{$\vdots$} 
& \multicolumn{1}{c|}{$\ldots$}                                      
& \multicolumn{1}{c|}{$\ldots$}                                                                                         
& \multicolumn{1}{c|}{$\ldots$}                                                                                         
& \multicolumn{1}{c|}{$\ldots$}                                                                                         
& $\ldots$ 
\\ \cline{2-6} 
\multicolumn{1}{c|}{2}        
& \multicolumn{1}{c|}{0}                                             
& \multicolumn{1}{c|}{0}                                                                                                
& \multicolumn{1}{c|}{0}                                                                                                
& \multicolumn{1}{c|}{0}                                                                                                
& $\ldots$ 
\\ \cline{2-6} 
\multicolumn{1}{c|}{1}        
& \multicolumn{1}{c|}{$C^{(1|2)} \left( \mathfrak{k} \right)$}  
& \multicolumn{1}{c|}{$C^{(1|2)} \left( \mathfrak{k} \right) \otimes C^{(1|0)} \left( \mathfrak{h} \right)$}  
& \multicolumn{1}{c|}{$C^{(1|2)} \left( \mathfrak{k} \right) \otimes C^{(2|0)} \left( \mathfrak{h} \right)$}  
& \multicolumn{1}{c|}{$C^{(1|2)} \left( \mathfrak{k} \right) \otimes C^{(3|0)} \left( \mathfrak{h} \right)$}  
& $\ldots$ 
\\ \cline{2-6} 
\multicolumn{1}{c|}{0}        
& \multicolumn{1}{c|}{$C^{(0|2)} \left( \mathfrak{k} \right)$}  
& \multicolumn{1}{c|}{$C^{(0|2)} \left( \mathfrak{k} \right) \otimes C^{(1|0)} \left( \mathfrak{h} \right)$}  
& \multicolumn{1}{c|}{$C^{(0|2)} \left( \mathfrak{k} \right) \otimes C^{(2|0)} \left( \mathfrak{h} \right)$}  
& \multicolumn{1}{c|}{$C^{(0|2)} \left( \mathfrak{k} \right) \otimes C^{(3|0)} \left( \mathfrak{h} \right)$}  
& $\ldots$ 
\\ \cline{2-6} 
\multicolumn{1}{c|}{-1}       
& \multicolumn{1}{c|}{$C^{(-1|2)} \left( \mathfrak{k} \right)$} 
& \multicolumn{1}{c|}{$C^{(-1|2)} \left( \mathfrak{k} \right) \otimes C^{(1|0)} \left( \mathfrak{h} \right)$} 
& \multicolumn{1}{c|}{$C^{(-1|2)} \left( \mathfrak{k} \right) \otimes C^{(2|0)} \left( \mathfrak{h} \right)$} 
& \multicolumn{1}{c|}{$C^{(-1|2)} \left( \mathfrak{k} \right) \otimes C^{(3|0)} \left( \mathfrak{h} \right)$} 
& $\ldots$ 
\\ \cline{2-6} 
\multicolumn{1}{c|}{-2}       
& \multicolumn{1}{c|}{$C^{(-2|2)} \left( \mathfrak{k} \right)$} 
& \multicolumn{1}{c|}{$C^{(-2|2)} \left( \mathfrak{k} \right) \otimes C^{(1|0)} \left( \mathfrak{h} \right)$} 
& \multicolumn{1}{c|}{$C^{(-2|2)} \left( \mathfrak{k} \right) \otimes C^{(2|0)} \left( \mathfrak{h} \right)$} 
& \multicolumn{1}{c|}{$C^{(-2|2)} \left( \mathfrak{k} \right) \otimes C^{(3|0)} \left( \mathfrak{h} \right)$} 
& $\ldots$ 
\\ \cline{2-6} 
\multicolumn{1}{c|}{-3}       
& \multicolumn{1}{c|}{$C^{(-3|2)} \left( \mathfrak{k} \right)$} 
& \multicolumn{1}{c|}{$C^{(-3|2)} \left( \mathfrak{k} \right) \otimes C^{(1|0)} \left( \mathfrak{h} \right)$} 
& \multicolumn{1}{c|}{$C^{(-3|2)} \left( \mathfrak{k} \right) \otimes C^{(2|0)} \left( \mathfrak{h} \right)$} 
& \multicolumn{1}{c|}{$C^{(-3|2)} \left( \mathfrak{k} \right) \otimes C^{(3|0)} \left( \mathfrak{h} \right)$} 
& $\ldots$ 
\\ \cline{2-6} 
\multicolumn{1}{c|}{$\vdots$} 
& \multicolumn{1}{c|}{$\ldots$}                                      
& \multicolumn{1}{c|}{$\ldots$}                                                                                         
& \multicolumn{1}{c|}{$\ldots$}                                                                                         
& \multicolumn{1}{c|}{$\ldots$}                                                                                         
& $\ldots$
\end{tabular}
\caption{$E_0^{m,n}$ as defined in \eqref{TSSJ}. The integers $n$ and $m$ are spanned on the horizontal and vertical axes, respectively.}\label{TableTSSB}
\end{table}

In order to proceed with the construction of the spectral sequence, we have to define the differentials to move from one page to the other. First of all, from the MC equations (\ref{VSPG}) 
$\div$ (\ref{VSPK}), we can schematically write the CE differential for $\mathfrak{g} = \mathfrak{osp} (2|2)$ and $\mathfrak{h} = \mathfrak{osp} (1|2)$ as
\begin{equation}\label{TSSK}
	d = \mathcal{Y}^*_{\mathfrak{h}} \mathcal{Y}^*_{\mathfrak{h}} \iota_{\mathcal{Y}_\mathfrak{h}} + \mathcal{Y}^*_{\mathfrak{k}} \mathcal{Y}^*_{\mathfrak{k}} \iota_{\mathcal{Y}_\mathfrak{h}} +\mathcal{Y}^*_{\mathfrak{k}} \mathcal{Y}^*_{\mathfrak{h}} \iota_{\mathcal{Y}_\mathfrak{k}} \ .
\end{equation}
The first differential $d_0$ of the spectral sequence is the \emph{horizontal differential} induced by $d$, as indicated in \eqref{KHSSSD}; it acts on objects in \eqref{TSSJ} as
\begin{eqnarray}\label{TSSL}
	d_0 : E_0^{m,n} &\longrightarrow& E_0^{m,n+1} \nonumber \\
	\omega &\mapsto& d_0 \omega \defeq \left( \mathcal{Y}^*_{\mathfrak{k}} \mathcal{Y}^*_{\mathfrak{h}} \iota_{\mathcal{Y}_\mathfrak{k}} + \mathcal{Y}^*_{\mathfrak{h}} \mathcal{Y}^*_{\mathfrak{h}} \iota_{\mathcal{Y}_\mathfrak{h}} \right) \omega \ \ .
\end{eqnarray}
Page 1 of the spectral sequence is defined as the cohomology of page 0 with respect to  the differential $d_0$:
\begin{equation}\label{TSSM}
	E_1^{m,n} \defeq H \left( E_0^{m,n} , d_0 \right) \ .
\end{equation}
We can explicitly calculate page 1 in the case under consideration: given $d_0$ as in \eqref{TSSL} (in particular, because of the reductivity), we can treat the two factors in the tensor product of $E_0^{m,n}$ separately. We then have
\begin{equation}\label{TSSN}
	H \left( C^{(m|2)} \left( \mathfrak{k} \right) \otimes C^{(n|0)} \left( \mathfrak{h} \right) , \mathcal{Y}^*_{\mathfrak{h}} \mathcal{Y}^*_{\mathfrak{h}} \iota_{\mathcal{Y}_\mathfrak{h}} \right) = C^{(m|2)} \left( \mathfrak{k} \right) \otimes H^{(n|0)} \left( \mathfrak{h} \right) \ ,
\end{equation}
since this part of the differential selects the cohomology of superforms in the Lie sub-superalgebra $\mathfrak{h}$ and it does not act on the first factor. 

For the other part of the differential \eqref{TSSL}, we can express its action as
\begin{equation}\label{TSSO}
	\mathcal{Y}^*_{\mathfrak{k}} \mathcal{Y}^*_{\mathfrak{h}} \iota_{\mathcal{Y}_\mathfrak{k}} \left( \omega_{\mathfrak{k}} \otimes \omega_{\mathfrak{h}} \right) = \left( \mathcal{Y}^*_{\mathfrak{k}} \mathcal{Y}^*_{\mathfrak{h}} \iota_{\mathcal{Y}_\mathfrak{k}} \otimes 1 \right) \left( \omega_{\mathfrak{k}} \otimes \omega_{\mathfrak{h}} \right) = 0 \ \iff \ \left( \mathcal{L}_{\mathfrak{h}} \otimes 1 \right) \left( \omega_{\mathfrak{k}} \otimes \omega_{\mathfrak{h}} \right) = 0 \ ,
\end{equation}
on the form $\omega_{\mathfrak{k}} \otimes \omega_{\mathfrak{h}} \in C^{(m|2)} \left( \mathfrak{k} \right)  \otimes C^{(n|0)} \left( \mathfrak{h} \right)$, where with $\mathcal{L}_{\mathfrak{h}}$ we formally denote the Lie derivative along any vector in $\mathfrak{h}$. The double implication follows from the equivalence $d_0 \omega_{\mathfrak{k}} = d \omega_{\mathfrak{k}}$, which follows 
%
since the other two terms of \eqref{TSSK} vanish.  
This means that
\begin{equation}\label{TSSQ}
	H \left( C^{(m|2)} \left( \mathfrak{k} \right) \otimes C^{(n|0)} \left( \mathfrak{h} \right) , \mathcal{Y}^*_{\mathfrak{k}} \mathcal{Y}^*_{\mathfrak{h}} \iota_{\mathcal{Y}_\mathfrak{k}} \right) = \left( C^{(m|2)} \left( \mathfrak{k} \right) \right)^{\mathfrak{h}} \otimes C^{(n|0)} \left( \mathfrak{h} \right) \ ,
\end{equation}
i.e., we have to take $\mathfrak{h}$-\emph{invariant} forms in $C^{(m|2)} \left( \mathfrak{k} \right)$. Finally, we get
\begin{equation}\label{TSSR}
	E_1^{m,n} = \left( C^{(m|2)} \left( \mathfrak{k} \right) \right)^{\mathfrak{h}} \otimes H^{(n|0)} \left( \mathfrak{h} \right) \ .
\end{equation}
The following page of the spectral sequence is defined as
\begin{eqnarray}\label{TSSS}
	E_2^{m,n} \defeq H \left( E_1^{m,n} , d_1 \right) \ , 
\end{eqnarray}
where  $d_1$ is the \emph{vertical operator} $\displaystyle d_1 : E_1^{m,n} \to  E_1^{m+1,n}$, which increases the form number in the coset direction by one. In this case $d_1$ is trivial, as one can readily see from \eqref{TSSK}.

In a general setting, the CE differential reads
\begin{equation}\label{TSST}
	d = \mathcal{Y}^*_{\mathfrak{h}} \mathcal{Y}^*_{\mathfrak{h}} \iota_{\mathcal{Y}_\mathfrak{h}} + \mathcal{Y}^*_{\mathfrak{k}} \mathcal{Y}^*_{\mathfrak{k}} \iota_{\mathcal{Y}_\mathfrak{h}} + \mathcal{Y}^*_{\mathfrak{k}} \mathcal{Y}^*_{\mathfrak{h}} \iota_{\mathcal{Y}_\mathfrak{k}} + \mathcal{Y}^*_{\mathfrak{k}} \mathcal{Y}^*_{\mathfrak{k}} \iota_{\mathcal{Y}_\mathfrak{k}} + \mathcal{Y}^*_{\mathfrak{k}} \mathcal{Y}^*_{\mathfrak{h}} \iota_{\mathcal{Y}_\mathfrak{h}} \ ,
\end{equation}
so that $d_1$ in general reads
\begin{equation}\label{TSSU}
	d_1 = \mathcal{Y}^*_{\mathfrak{k}} \mathcal{Y}^*_{\mathfrak{k}} \iota_{\mathcal{Y}_\mathfrak{k}} + \mathcal{Y}^*_{\mathfrak{k}} \mathcal{Y}^*_{\mathfrak{h}} \iota_{\mathcal{Y}_\mathfrak{h}} \ .
\end{equation}
The first term amounts to the CE differential which is used to calculate the relative cohomology, defined in Def.\ref{Relative Lie (super)algebra cohomology} and in \eqref{VSPL} for the specific case where it is trivial as shown in \eqref{VSPM}; the second term is zero for reductive sub-(super)algebras in the ambient superalgebra, which is true in the case under examination. Then we get
\begin{equation}\label{TSSV}
	\left( C^{(p|2)} \left( \mathfrak{k} \right) \right)^{\mathfrak{h}} = H^{(p|2)} \left( \mathfrak{g} , \mathfrak{h} \right) \ \ \implies \ \ E_2^{m,n} = H^{(m|2)} \left( \mathfrak{g} , \mathfrak{h} \right) \otimes H^{(n|0)} \left( \mathfrak{h} \right) = E_1^{m,n} \ .
\end{equation}
Page 2 for the guiding example is reported in Table \ref{TableTSSC}.
\begin{table}[ht!]
\centering
\begin{tabular}{ccccccccc}
\multicolumn{1}{c|}{$\ldots$} & \multicolumn{1}{c|}{$\ldots$} & \multicolumn{1}{c|}{$\vdots$} & \multicolumn{1}{c|}{$\ldots$}                                                            & \multicolumn{1}{c|}{$\ldots$} & \multicolumn{1}{c|}{$\ldots$} & \multicolumn{1}{c|}{$\ldots$}                                                                                                        & \multicolumn{1}{c|}{$\ldots$} & $\ldots$ \\ \cline{1-2} \cline{4-9} 
\multicolumn{1}{c|}{$\ldots$} & \multicolumn{1}{c|}{0}        & \multicolumn{1}{c|}{2}        & \multicolumn{1}{c|}{0}                                                                   & \multicolumn{1}{c|}{0}        & \multicolumn{1}{c|}{0}        & \multicolumn{1}{c|}{0}                                                                                                               & \multicolumn{1}{c|}{0}        & $\ldots$ \\ \cline{1-2} \cline{4-9} 
\multicolumn{1}{c|}{$\ldots$} & \multicolumn{1}{c|}{0}        & \multicolumn{1}{c|}{1}        & \multicolumn{1}{c|}{$H^{(1|2)} \left( \mathfrak{g} , \mathfrak{h} \right)$} & \multicolumn{1}{c|}{0}        & \multicolumn{1}{c|}{0}        & \multicolumn{1}{c|}{$H^{(1|2)} \left( \mathfrak{g} , \mathfrak{h} \right) \otimes H^{(3|0)} \left( \mathfrak{h} \right)$} & \multicolumn{1}{c|}{0}        & $\ldots$ \\ \cline{1-2} \cline{4-9} 
\multicolumn{1}{c|}{$\ldots$} & \multicolumn{1}{c|}{0}        & \multicolumn{1}{c|}{0}        & \multicolumn{1}{c|}{0}                                                                   & \multicolumn{1}{c|}{0}        & \multicolumn{1}{c|}{0}        & \multicolumn{1}{c|}{0}                                                                                                               & \multicolumn{1}{c|}{0}        & $\ldots$ \\ \cline{1-2} \cline{4-9} 
$\ldots$                      & -1                            &                               & 0                                                                                        & 1                             & 2                             & 3                                                                                                                                    & 4                             & $\ldots$ \\ \cline{1-2} \cline{4-9} 
\multicolumn{1}{c|}{$\ldots$} & \multicolumn{1}{c|}{0}        & \multicolumn{1}{c|}{-1}       & \multicolumn{1}{c|}{0}                                                                   & \multicolumn{1}{c|}{0}        & \multicolumn{1}{c|}{0}        & \multicolumn{1}{c|}{0}                                                                                                               & \multicolumn{1}{c|}{0}        & $\ldots$ \\ \cline{1-2} \cline{4-9} 
\multicolumn{1}{c|}{$\ldots$} & \multicolumn{1}{c|}{$\ldots$} & \multicolumn{1}{c|}{$\vdots$} & \multicolumn{1}{c|}{$\ldots$}                                                            & \multicolumn{1}{c|}{$\ldots$} & \multicolumn{1}{c|}{$\ldots$} & \multicolumn{1}{c|}{$\ldots$}                                                                                                        & \multicolumn{1}{c|}{$\ldots$} & $\ldots$
\end{tabular}
\caption{$E_2^{m,n}$ as obtained in \eqref{TSSV}. The integers $n$ and $m$ are spanned on the horizontal and vertical axes, respectively.}\label{TableTSSC}
\end{table}

We should now proceed with the construction of the higher pages until the spectral sequence converges. This is done by considering the differentials $d_s$:
\begin{equation}\label{TSSW}
	d_s: E_s^{m,n} \to E_s^{m+s,n-s+1} \ ,
\end{equation}
induced by the Koszul differential $d$. In particular, we notice that $d_s$ moves vertically by $s$ and horizontally by $1-s$. In the example under examination, all the higher differentials $d_s$ are trivial:
\begin{equation}\label{TSSX}
	d_s = 0 \ , \ \forall s \geq 2 \ .
\end{equation}
This means that $E_1^{m,n} = E_2^{m,n} = E_3^{m,n} = \ldots = E_\infty^{m,n}$, and the non-trivial cohomology spaces are 
\begin{equation}\label{TSSY}
	H^{1|2} \left( \mathfrak{g} \right) = \Pi \mathbb{K} \ , \ H^{4|2} \left( \mathfrak{g} \right) = \mathbb{K} \ ,
\end{equation}
which are pseudoform cohomology spaces for $\mathfrak{g}$, i.e., with non-zero and non-maximal picture number.

In Section \ref{Pseudoforms as Infinite-Dimensional Representations} we have shown how pseudoforms can be constructed in an algebraic context. In the previous paragraphs, we have shown how to extend the spectral sequence computations of Koszul and Hochschild-Serre to the complex of pseudoforms induced by the sub-(super)algebra $\mathfrak{osp}(1|2)$ in the algebra $\mathfrak{osp}(2|2)$. This is somehow related to the case of sub-supermanifolds with non-trivial odd codimension (see \cite{Witten}).

Second, by considering the case of $\mathfrak{osp} \left( 2|2 \right) $, the superform cohomology (as shown in \cite{Fuks}) is  
 \begin{equation}\label{TSSZ}
	H^{p} \left( \mathfrak{osp} \left( 2|2 \right) \right) \cong H^p \left( \mathfrak{sp} \left( 2 \right) \right) = \begin{cases}
		\Pi^p \mathbb{K} \ , \ \text{if } p = 0,3 \ , \\
		\left\lbrace 0 \right\rbrace \ , \ \text{else.}
	\end{cases} \ ,
\end{equation}
According to \eqref{TSSZ}, the abelian sub-algebra $\mathfrak{so} (2)$ plays no role in $H^p \left( \mathfrak{osp} \left( 2|2 \right) \right)$; what we are going to see is that this sub-algebra (or better, the invariant related to this abelian factor) emerges when considering the other complexes of forms: for that we have to complete the cohomology with pseudoforms and integral forms. 

The pseudoforms we obtained in \eqref{TSSY} are not the only ones for $\mathfrak{osp}(2|2)$. There are additional 
pseudoforms that arise from a new filtration which is inequivalent to \eqref{TSSA}. In \cite{Witten}, superforms and integral forms were introduced as inequivalent modules of Clifford-Weyl algebras; there, it is shown that while for conventional manifolds the irreducible modules are all isomorphic (in other words, there is a single complex of forms related to a manifold), for supermanifolds the modules constructed from a state annihilated by all the contractions (along odd vector fields) and the modules constructed from a state annihilated by form multiplication (by even differential forms) are inequivalent, and they are identified with superforms and integral forms, respectively. In our algebraic setting, this suggests a different way in which we can select a filtration with respect to the one in \eqref{TSSA}, i.e.,
\begin{equation}\label{TSSZA}
	\tilde{F}^p C^{(q|2)} \left( \mathfrak{g} \right) = \left\lbrace \omega \in C^{(q|2)} \left( \mathfrak{g} \right) : \forall \mathcal{Y}^{*a} \in \mathfrak{h}^* , \mathcal{Y}^{*a_1} \wedge \ldots \wedge \mathcal{Y}^{*a_{q+1-p}} \wedge \omega = 0 \right\rbrace \ .
\end{equation}
From \eqref{TSSZA} one can easily verify that
\begin{align}
	\label{TSSZB} \tilde{F}^{q+1} C^{(q|2)} \left( \mathfrak{g} \right) = \tilde{F}^{q+2} C^{(q|2)} \left( \mathfrak{g} \right) = \ldots &= \tilde{F}^{q+n} C^{(q|2)} \left( \mathfrak{g} \right) = 0 \ , \ \forall n \in \mathbb{N} \setminus \left\lbrace 0 \right\rbrace , q \in \mathbb{Z} \ , \\
	\label{TSSZC} \tilde{F}^{p+1} C^{(q|2)} \left( \mathfrak{g} \right) &\subseteq \tilde{F}^{p} C^{(q|2)} \left( \mathfrak{g} \right) \ , \ \forall p,q \in \mathbb{Z} \ ,
\end{align}
analogously to \eqref{TSSB} and \eqref{TSSC}.

Again, we should verify that \eqref{TSSZA} correctly defines a filtration, namely that
\begin{equation}\label{TSSZD}
	d \tilde{F}^p C^{(q|2)} \left( \mathfrak{g} \right) \subseteq \tilde{F}^p C^{(q+1|2)} \left( \mathfrak{g} \right) \ , \ \forall p,q \in \mathbb{Z} \ .
\end{equation}
We can verify this for $p=q$, and by using the same manipulations the general proof follows. We have
\begin{equation}\label{TSSZE}
	\tilde{F}^q C^{(q|2)} \left( \mathfrak{g} \right) = \left\lbrace \omega \in C^{(q|2)} \left( \mathfrak{g} \right) : \forall \mathcal{Y}^{*a} \in  \mathfrak{h}^* , \mathcal{Y}^{*a} \wedge \omega = 0 \right\rbrace \ , $$ $$ \tilde{F}^q C^{(q+1|2)} \left( \mathfrak{g} \right) = \left\lbrace \omega \in C^{(q+1|2)} \left( \mathfrak{g} \right) : \forall \mathcal{Y}^{*a} \in  \mathfrak{h}^* , \mathcal{Y}^{*a_1} \wedge \mathcal{Y}^{*a_2} \wedge \omega = 0 \right\rbrace \ .
\end{equation}
In order to verify that $d \omega \in \tilde{F}^q C^{(q+1|2)} \left( \mathfrak{g} \right)$, we have to check that $\mathcal{Y}^{*a_1} \wedge \mathcal{Y}^{*a_2} \wedge d\omega = 0, \forall \mathcal{Y}^{*a_1} , \mathcal{Y}^{*a_2} \in \mathfrak{h}^*$. In particular, we have
\begin{equation}\label{TSSZF}
	\mathcal{Y}^{*a_1} \wedge \mathcal{Y}^{*a_2} \wedge d\omega = \mathcal{Y}^{*a_1} \wedge \left[ \left( -1 \right)^{\left| \mathcal{Y}^{*a_2} \right|} d \left( \mathcal{Y}^{*a_2} \wedge \omega \right) - \left( -1 \right)^{\left| \mathcal{Y}^{*a_2} \right|} \left( d \mathcal{Y}^{*a_2} \right) \wedge \omega \right] = $$ $$ = \left( -1 \right)^{\left| \mathcal{Y}^{*a_1} \right| + \left| \mathcal{Y}^{*a_2} \right|} \left( d \mathcal{Y}^{*a_2} \right) \wedge \mathcal{Y}^{*a_1} \wedge \omega = 0 \ ,
\end{equation}
where we used the (graded) Leibniz rule $\displaystyle d \left( \mathcal{Y}^{*a_2} \wedge \omega \right) = \left( d \mathcal{Y}^{*a_2} \right) \wedge \omega + \left( -1 \right)^{\left| \mathcal{Y}^{*a_2} \right|} \mathcal{Y}^{*a_2} \wedge \left( d \omega \right)$ and the fact that $\omega \in \tilde{F}^q C^{(q|2)} \left( \mathfrak{g} \right)$. Note that \eqref{TSSZA} correctly defines a filtration on $C^{(\bullet|2)} \left( \mathfrak{g} \right)$.

Analogously to \eqref{TSSH}, we have that the spaces of the filtration \eqref{TSSZA} are explicitly given by
\begin{equation}\label{TSSZG}
	\tilde{F}^{p} C^{(q|2)} \left( \mathfrak{g} \right) \equiv C^{(q|2)}_{* 2q-p-3} \left( \mathfrak{g} \right) \defeq \bigoplus_{i=0}^{q-p} C^{(q-3+i|0)} \left( \mathfrak{k} \right) \otimes C^{(3-i|2)} \left( \mathfrak{h} \right) \ ,
\end{equation}
that is, $\displaystyle C^{(q|2)}_{* 2q-p-3} \left( \mathfrak{g} \right) $ is the (finite-dimensional) sub-space of $\displaystyle C^{(q|2)} \left( \mathfrak{g} \right)$ forms with \emph{at most} $(2q-p-3)$-superforms of 
$\mathfrak{k}^*$. Notice that $C^{(3-i|2)} \left( \mathfrak{h} \right)$ is obtained by acting with $i$ contractions on the Berezinian of the sub-superalgebra $\mathfrak{h}$ (or, in the analogous polyvector realisation, by tensoring $\mathpzc{B}er \left( \mathfrak{h} \right)$ with $i$ parity-changed vectors of $\mathfrak{h}$); in this way we can conveniently interpret $\Omega^{(q|2)}_{\ast 2q-p-3} \left( \mathfrak{g} \right)$ in \eqref{TSSZG} as a space which depends \emph{at most} on $q-p$ contractions along vectors in $\mathfrak{h}$.

The construction of the filtration \eqref{TSSZA} is easily generalisable. We used the fact that $\mathfrak{g} = \mathfrak{osp}(2|2)$ and that $\mathfrak{h}=\mathfrak{osp}(1|2)$ to set the upper bound of $C^{(q|2)} \left( \mathfrak{h} \right)$ to $C^{(3|2)} \left( \mathfrak{h} \right)$ for $q \geq 3$. In the following section we will discuss the general definitions.

From \eqref{TSSZA} we can easily construct page 0 of the spectral sequence. It is more convenient to shift  the spaces in order to display page 0 in an analogous fashion to that of Table \ref{TableTSSB}:
\begin{equation}\label{TSSZI}
	\begin{cases}
		m \to m - 3 +2n \ , \\
		n \to 3-n \ ,
	\end{cases} \ \implies \ \tilde{E}_0^{m,n} \to \tilde{E}_0^{m,n} = \tilde{F}^{m-3+2n} C^{(m+n|2)} \left( \mathfrak{g} \right) / \tilde{F}^{m-2+2n} C^{(m+n|2)} \left( \mathfrak{g} \right) \ .
\end{equation}
Page 0 then reads as in table \ref{TableTSSF}.
\begin{table}[ht!]
\centering 
\begin{tabular}{ccccccc}                            
& \multicolumn{1}{c}0                                                                                                                    
& \multicolumn{1}{c}1                                                                                                                    
& \multicolumn{1}{c}2                                                                                                                    
& \multicolumn{1}{c}3                                                                                                                    
& \multicolumn{1}{c}4                             
& \multicolumn{1}{c}{$\ldots$} 
\\ \cline{2-7} 
 \multicolumn{1}{c|}{$\vdots$} 
& \multicolumn{1}{c|}{$\ldots$}                                                                                        
& \multicolumn{1}{c|}{$\ldots$}                                                                                        
& \multicolumn{1}{c|}{$\ldots$}                                                                                        
& \multicolumn{1}{c|}{$\ldots$}                                                                                       
& \multicolumn{1}{c|}{$\ldots$} 
& $\ldots$ 
\\  \cline{2-7} 
 \multicolumn{1}{c|}{4}        
& \multicolumn{1}{c|}{$C^{(0|2)} \left( \mathfrak{h} \right) \otimes C^{(4|0)} \left( \mathfrak{k} \right)$} 
& \multicolumn{1}{c|}{$C^{(1|2)} \left( \mathfrak{h} \right) \otimes C^{(4|0)} \left( \mathfrak{k} \right)$} 
& \multicolumn{1}{c|}{$C^{(2|2)} \left( \mathfrak{h} \right) \otimes C^{(4|0)} \left( \mathfrak{k} \right)$} 
& \multicolumn{1}{c|}{$C^{(3|2)} \left( \mathfrak{h} \right) \otimes C^{(4|0)} \left( \mathfrak{k} \right)$} 
& \multicolumn{1}{c|}{0}        
& $\ldots$ 
\\ \cline{2-7}  
 \multicolumn{1}{c|}{3}        
& \multicolumn{1}{c|}{$C^{(0|2)} \left( \mathfrak{h} \right) \otimes C^{(3|0)} \left( \mathfrak{k} \right)$} 
& \multicolumn{1}{c|}{$C^{(1|2)} \left( \mathfrak{h} \right) \otimes C^{(3|0)} \left( \mathfrak{k} \right)$} 
& \multicolumn{1}{c|}{$C^{(2|2)} \left( \mathfrak{h} \right) \otimes C^{(3|0)} \left( \mathfrak{k} \right)$} 
& \multicolumn{1}{c|}{$C^{(3|2)} \left( \mathfrak{h} \right) \otimes C^{(3|0)} \left( \mathfrak{k} \right)$} 
& \multicolumn{1}{c|}{0}        
& $\ldots$ 
\\ \cline{2-7}
 \multicolumn{1}{c|}{2}        
& \multicolumn{1}{c|}{$C^{(0|2)} \left( \mathfrak{h} \right) \otimes C^{(2|0)} \left( \mathfrak{k} \right)$} 
& \multicolumn{1}{c|}{$C^{(1|2)} \left( \mathfrak{h} \right) \otimes C^{(2|0)} \left( \mathfrak{k} \right)$} 
& \multicolumn{1}{c|}{$C^{(2|2)} \left( \mathfrak{h} \right) \otimes C^{(2|0)} \left( \mathfrak{k} \right)$} 
& \multicolumn{1}{c|}{$C^{(3|2)} \left( \mathfrak{h} \right) \otimes C^{(2|0)} \left( \mathfrak{k} \right)$} 
& \multicolumn{1}{c|}{0}        
& $\ldots$ 
\\ \cline{2-7} 
\multicolumn{1}{c|}{1}        
& \multicolumn{1}{c|}{$C^{(0|2)} \left( \mathfrak{h} \right) \otimes C^{(1|0)} \left( \mathfrak{k} \right)$} 
& \multicolumn{1}{c|}{$C^{(1|2)} \left( \mathfrak{h} \right) \otimes C^{(1|0)} \left( \mathfrak{k} \right)$} 
& \multicolumn{1}{c|}{$C^{(2|2)} \left( \mathfrak{h} \right) \otimes C^{(1|0)} \left( \mathfrak{k} \right)$} 
& \multicolumn{1}{c|}{$C^{(3|2)} \left( \mathfrak{h} \right) \otimes C^{(1|0)} \left( \mathfrak{k} \right)$} 
& \multicolumn{1}{c|}{0}        
& $\ldots$ 
\\ \cline{2-7} 
 \multicolumn{1}{c|}{0}        
& \multicolumn{1}{c|}{$C^{(0|2)} \left( \mathfrak{h} \right)$}                                                    
& \multicolumn{1}{c|}{$C^{(1|2)} \left( \mathfrak{h} \right)$}                                                    
& \multicolumn{1}{c|}{$C^{(2|2)} \left( \mathfrak{h} \right)$}                                                    
& \multicolumn{1}{c|}{$C^{(3|2)} \left( \mathfrak{h} \right)$}                                                    
& \multicolumn{1}{c|}{0}        
& $\ldots$ 
\\ \cline{2-7}  
 \multicolumn{1}{c|}{-1}       
& \multicolumn{1}{c|}{0}                                                                                               
& \multicolumn{1}{c|}{0}                                                                                               
& \multicolumn{1}{c|}{0}                                                                                               
& \multicolumn{1}{c|}{0}                                                                                               
& \multicolumn{1}{c|}{0}        
& $\ldots$ 
\\ \cline{2-7} 
 \multicolumn{1}{c|}{$\vdots$} 
& \multicolumn{1}{c|}{$\ldots$}                                                                                        
& \multicolumn{1}{c|}{$\ldots$}                                                                                        
& \multicolumn{1}{c|}{$\ldots$}                                                                                        
& \multicolumn{1}{c|}{$\ldots$}                                                                                        
& \multicolumn{1}{c|}{$\ldots$} 
& $\ldots$
\end{tabular}
\caption{$\tilde{E}_0^{m,n}$ as defined in \eqref{TSSZI}. The integers $n$ and $m$ are spanned on the horizontal and vertical axes, respectively.}\label{TableTSSF}
\end{table}

Notice that, by contrast with table \ref{TableTSSB}, table \ref{TableTSSF} has non-trivial entries in the first and second quadrants, as a consequence of the facts that integral forms of $\mathfrak{h}$ are unbounded from below and that superforms of $\mathfrak{k}$ are bounded from below. The differentials that we use to define the next pages of the spectral sequence are the same as before, therefore we can construct page 1 as
\begin{equation}\label{TSSZJ}
	d_0 = \mathcal{Y}^{*}_{\mathfrak{k}} \mathcal{Y}^{*}_{\mathfrak{h}} \iota_{\mathcal{Y}_{\mathfrak{k}}} + \mathcal{Y}^{*}_{\mathfrak{h}} \mathcal{Y}^{*}_{\mathfrak{h}} \iota_{\mathcal{Y}_\mathfrak{h}} \ , \ \implies \ \tilde{E}_1^{m,n} \defeq H \left( \tilde{E}_0^{m,n} , d_0 \right) \ .
\end{equation}
In exact analogy to the previous case, we have
\begin{equation}\label{TSSZK}
	H \left( C^{(p|2)} \left( \mathfrak{h} \right) \otimes C^{(q|0)} \left( \mathfrak{k} \right) , d_0 \right) = H^{(p|2)} \left( \mathfrak{h} \right) \otimes \left( C^{(q|0)} \left( \mathfrak{k} \right) \right)^{\mathfrak{h}} \ .
\end{equation}
In the specific case under examination, we have
\begin{equation}\label{TSSZL}
	H^{(p|2)} \left( \mathfrak{h} \right) = \begin{cases}
		\mathbb{K} \ , \ \text{if} \ p = 0 \ ,\\
		\Pi \mathbb{K} \ , \ \text{if} \ p = 3 \ ,\\
		\left\lbrace 0 \right\rbrace \ , \ \text{else} \ ,
	\end{cases} \ , \ \left( C^{(q|0)} \left( \mathfrak{k} \right) \right)^{\mathfrak{h}} = \begin{cases}
		\mathbb{K} \ , \ \text{if} \ q = 0 \ ,\\
		\left\lbrace 0 \right\rbrace \ , \ \text{else} \ .
	\end{cases}
\end{equation}
In particular, the cohomology spaces $H^{(p|2)} \left( \mathfrak{h} \right)$ can be easily determined starting from the spaces $H^{(p|0)} \left( \mathfrak{h} \right)$, via the map \eqref{IFLSN}:
\begin{equation}\label{TSSZLA}
	H^{(p|0)} \left( \mathfrak{h} \right) = \begin{cases}
		\mathbb{K} \ , \ \text{if} \ p = 0 \ ,\\
		\Pi \mathbb{K} \ , \ \text{if} \ p = 3 \ ,\\
		\left\lbrace 0 \right\rbrace \ , \ \text{else} \ ,
	\end{cases} \ \overset{\star}{\underset{}{\longleftrightarrow}} \ \ H^{(3-p|2)} \left( \mathfrak{h} \right) = \begin{cases}
		\Pi \mathbb{K} \ , \ \text{if} \ p = 0 \ ,\\
		\mathbb{K} \ , \ \text{if} \ p = 3 \ ,\\
		\left\lbrace 0 \right\rbrace \ , \ \text{else} \ ,
	\end{cases} \ .
\end{equation}

As we discussed above, this already gives page 2 of the spectral sequence, since the differential $d_1$ is trivial:
\begin{equation}\label{TSSZLAA}
	\tilde{E}^{m,n}_2 \defeq H \left( \tilde{E}_1^{m,n} , d_1 \right) = \tilde{E}_1^{m,n} \ .
\end{equation}
In Table \ref{TableTSSG} we write page $\tilde{E}_2^{m,n}$:
\begin{table}[ht!]
\centering
\begin{tabular}{cccccccc}
\multicolumn{1}{c|}{$\ldots$} & \multicolumn{1}{c|}{$\vdots$} & \multicolumn{1}{c|}{$\ldots$}                                & \multicolumn{1}{c|}{$\ldots$} & \multicolumn{1}{c|}{$\ldots$} & \multicolumn{1}{c|}{$\ldots$}                                & \multicolumn{1}{c|}{$\ldots$} & $\ldots$ \\ \cline{1-1} \cline{3-8} 
\multicolumn{1}{c|}{$\ldots$} & \multicolumn{1}{c|}{2}        & \multicolumn{1}{c|}{0}                                       & \multicolumn{1}{c|}{0}        & \multicolumn{1}{c|}{0}        & \multicolumn{1}{c|}{0}                                       & \multicolumn{1}{c|}{0}        & $\ldots$ \\ \cline{1-1} \cline{3-8} 
\multicolumn{1}{c|}{$\ldots$} & \multicolumn{1}{c|}{1}        & \multicolumn{1}{c|}{0}                                       & \multicolumn{1}{c|}{0}        & \multicolumn{1}{c|}{0}        & \multicolumn{1}{c|}{0}                                       & \multicolumn{1}{c|}{0}        & $\ldots$ \\ \cline{1-1} \cline{3-8} 
\multicolumn{1}{c|}{$\ldots$} & \multicolumn{1}{c|}{0}        & \multicolumn{1}{c|}{$H^{(0|2)} \left( \mathfrak{h} \right)$} & \multicolumn{1}{c|}{0}        & \multicolumn{1}{c|}{0}        & \multicolumn{1}{c|}{$H^{(3|2)} \left( \mathfrak{h} \right)$} & \multicolumn{1}{c|}{0}        & $\ldots$ \\ \cline{1-1} \cline{3-8} 
$\ldots$                      &                               & 0                                                            & 1                             & 2                             & 3                                                            & 4                             & $\ldots$ \\ \cline{1-1} \cline{3-8} 
\multicolumn{1}{c|}{$\ldots$} & \multicolumn{1}{c|}{-1}       & \multicolumn{1}{c|}{0}                                       & \multicolumn{1}{c|}{0}        & \multicolumn{1}{c|}{0}        & \multicolumn{1}{c|}{0}                                       & \multicolumn{1}{c|}{0}        & $\ldots$ \\ \cline{1-1} \cline{3-8} 
\multicolumn{1}{c|}{$\ldots$} & \multicolumn{1}{c|}{$\vdots$} & \multicolumn{1}{c|}{$\ldots$}                                & \multicolumn{1}{c|}{$\ldots$} & \multicolumn{1}{c|}{$\ldots$} & \multicolumn{1}{c|}{$\ldots$}                                & \multicolumn{1}{c|}{$\ldots$} & $\ldots$
\end{tabular}
\caption{$\tilde{E}_2^{m,n}$ as defined in \eqref{TSSZLAA}. The integers $n$ and $m$ are spanned on the horizontal and vertical axes, respectively.}\label{TableTSSG}
\end{table}

The other differentials all vanish as in the previous case, hence the spectral sequence converges at page $\tilde{E}_1^{m,n} = \tilde{E}_2^{m,n} = \ldots = \tilde{E}_\infty^{m,n}$, and in particular
\begin{equation}\label{TSSZM}
	H^{(0|2)} \left( \mathfrak{g} \right) = \mathbb{K} \ , \ H^{(3|2)} \left( \mathfrak{g} \right) = \Pi \mathbb{K} \ .
\end{equation}

The result can be read in a two-fold way: by using the filtration introduced in \eqref{TSSZA}, we have confirmed that there are non-trivial cohomology classes among pseudoforms but also shown that the classes we found are inequivalent to those found in \eqref{TSSY}. This is consistent with the inequivalence of Clifford-Weyl modules defined in \cite{Witten}, as mentioned before.

We can put all the results together in the following proposition:

\begin{prop}\label{osp2|2}
	The CE cohomology of the Lie (super)algebra $\mathfrak{osp} \left( 2|2 \right)$ in the sectors of superforms, integral forms and picture-2 pseudoforms as induced by the sub-(super)algebra $\mathfrak{osp}(1|2)$ reads
	\begin{align}
		\label{TSSZN} H^p \left( \mathfrak{osp} \left( 2|2 \right) \right) &= \begin{cases}
			\Pi^p \mathbb{K} \ , \ \text{if} \ p=0,3 \ , \\
			\left\lbrace 0 \right\rbrace \ , \ \text{else} \ ,
		\end{cases} \\
		\label{TSSZO} H^p_{pseudo,2} \left( \mathfrak{osp} \left( 2|2 \right) \right) &= \begin{cases}
			\Pi^p \mathbb{K} \ , \ \text{if} \ p=0,1,3,4 \ , \\
			\left\lbrace 0 \right\rbrace \ , \ \text{else} \ ,
		\end{cases} \\
		\label{TSSZP} H^p_{int} \left( \mathfrak{osp} \left( 2|2 \right) \right) &= \begin{cases}
			\Pi^p \mathbb{K} \ , \ \text{if} \ p=1,4 \ , \\
			\left\lbrace 0 \right\rbrace \ , \ \text{else} \ ,
		\end{cases}
	\end{align}
	where we indicated $\Pi^p = \Pi$ for $p=1 \mod 2$, $\Pi^p = \text{id}$ for $p=0 \mod 2$ and the subscript \virgolette pseudo,2" is used to emphasise that the pseudoforms are obtained at picture number 2.
\end{prop}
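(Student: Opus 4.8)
The plan is to assemble \eqref{TSSZN}, \eqref{TSSZO} and \eqref{TSSZP} from the computations already carried out, supplying in each case a short packaging argument. \textbf{The superform sector.} Identity \eqref{TSSZN} is precisely \eqref{TSSZ}, $H^p(\mathfrak{osp}(2|2))\cong H^p(\mathfrak{sp}(2))$, due to \cite{Fuks}; alternatively I would recover it from the Koszul spectral sequence of Def.\ref{Koszul-Hochschild-Serre Spectral Sequence} applied to $\mathfrak{g}=\mathfrak{osp}(2|2)$ and $\mathfrak{h}=\mathfrak{osp}(1|2)$, which is reductive in $\mathfrak{g}$ by \eqref{VSPE}. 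By Thm.\ref{KHStheorem}, $E_2^{p,q}\cong H^q(\mathfrak{osp}(1|2))\otimes H^p(\mathfrak{g},\mathfrak{h})$; feeding in $H^q(\mathfrak{osp}(1|2))$, nonzero only for $q=0,3$ (the picture-$0$ column of \eqref{TSSZLA}), and the relative cohomology \eqref{VSPP}, nonzero only for $p=0$, leaves the page concentrated at $(p,q)=(0,0)$ and $(0,3)$. No differential $d_s$ with $s\geq2$ can act on either bidegree, so $E_2=E_\infty$ and $H^p(\mathfrak{osp}(2|2))=\Pi^p\mathbb{K}$ for $p=0,3$ and vanishes otherwise.

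\textbf{The integral sector.} For \eqref{TSSZP} I would apply the Berezinian complement quasi-isomorphism, Prop.\ref{Berezinian Complement Quasi-Isomorphism}. Since $\mathfrak{osp}(2|2)$ is basic classical with $\dim\mathfrak{g}=(m|n)=(4|4)$, the map \eqref{IFLSN} gives $H^\bullet(\mathfrak{g})\cong\Pi^{8}H^{4-\bullet}_{int}(\mathfrak{g})$, i.e.\ $H^p_{int}(\mathfrak{g})\cong H^{4-p}(\mathfrak{g})$ because $\Pi^{8}=\mathrm{id}$. By the previous step this is nonzero exactly when $4-p\in\{0,3\}$, that is $p\in\{1,4\}$, with $H^1_{int}(\mathfrak{g})\cong H^3(\mathfrak{g})=\Pi\mathbb{K}$ and $H^4_{int}(\mathfrak{g})\cong H^0(\mathfrak{g})=\mathbb{K}$; compactly, $\Pi^p\mathbb{K}$ for $p=1,4$, which is \eqref{TSSZP}.

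\textbf{The picture-$2$ pseudoform sector.} The picture-$2$ pseudoforms induced by $\mathfrak{h}=\mathfrak{osp}(1|2)$ split, by Def.\ref{Pseudoform Cohomology Relative to a sub-(super)algebra}, into the two inequivalent $\mathfrak{g}$-modules $V_{\mathfrak{k}}^{(1|2)}$ (integral forms of $\mathfrak{k}$ tensored with superforms of $\mathfrak{h}$, cf.\ \eqref{PIDRC} and \eqref{TSSFA}), handled by the filtration \eqref{TSSA}, and $V_{\mathfrak{h}}^{(3|2)}$ (integral forms of $\mathfrak{h}$ tensored with superforms of $\mathfrak{k}$, cf.\ \eqref{PIDRB} and \eqref{TSSZG}), handled by the filtration \eqref{TSSZA}. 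The first spectral sequence yielded \eqref{TSSY}, $H^{1|2}(\mathfrak{g})=\Pi\mathbb{K}$ and $H^{4|2}(\mathfrak{g})=\mathbb{K}$; the second yielded \eqref{TSSZM}, $H^{(0|2)}(\mathfrak{g})=\mathbb{K}$ and $H^{(3|2)}(\mathfrak{g})=\Pi\mathbb{K}$. Since the two families of classes are inequivalent (the inequivalence of Clifford--Weyl modules of \cite{Witten}), their union is the full picture-$2$ cohomology, nonzero precisely in degrees $p\in\{0,1,3,4\}$ where it equals $\Pi^p\mathbb{K}$, which is \eqref{TSSZO}.

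\textbf{The main obstacle.} With all this in place the argument is essentially bookkeeping; the only delicate points will be keeping the parity twist $\Pi^{m+n}=\Pi^{8}=\mathrm{id}$ and the shift $m=4$ straight in the integral sector, and recognising in the pseudoform sector that the filtrations \eqref{TSSA} and \eqref{TSSZA} compute cohomology in \emph{distinct} modules, so their outputs add rather than one refining the other. If instead one wants a proof independent of \cite{Fuks}, the real work moves upstream to \eqref{VSPP}: one must show that the only $\mathfrak{osp}(1|2)$-invariant superform on the homogeneous coset $\mathfrak{k}$ (of dimension $(1|2)$) is the constant. Here the Lie-superalgebra representation theory enters — $\mathfrak{k}$ is the $(1|2)$ fundamental of $\mathfrak{osp}(1|2)$, with $\bar{\psi}^{\pm}$ a doublet and $U$ a singlet — and one checks degree by degree in $S^\bullet\langle\bar{\psi}^{+},\bar{\psi}^{-}\rangle\otimes\{1,U\}$ that no nontrivial $\mathfrak{osp}(1|2)$-singlet survives (in particular $U$ alone is not invariant). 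Since $\nabla_{\mathfrak{k}}$ acts trivially by \eqref{VSPM}, the relative cohomology then coincides with this space of invariants, and all three spectral sequences collapse as claimed.
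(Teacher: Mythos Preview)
Your proposal is correct and follows essentially the same route as the paper: the proposition there is stated as a summary (\virgolette We can put all the results together\dots"), with \eqref{TSSZN} taken from \eqref{TSSZ} via \cite{Fuks}, \eqref{TSSZP} obtained from \eqref{TSSZN} through Prop.\ref{Berezinian Complement Quasi-Isomorphism}, and \eqref{TSSZO} assembled from the two spectral-sequence computations \eqref{TSSY} and \eqref{TSSZM}. Your alternative derivation of \eqref{TSSZN} directly from the $\mathfrak{osp}(1|2)$ spectral sequence (rather than citing \cite{Fuks}) is a small bonus not spelled out in the paper, but otherwise the packaging is the same.
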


	From \eqref{TSSZN}, \eqref{TSSZO} and \eqref{TSSZP} we see that the duality between superforms and integral forms given by the \virgolette $\star$" map extends (at least for the given example) to the complex of pseudoforms as well. This is a consequence of the fact that we introduced two inequivalent filtrations and calculated the cohomology in the two cases. We conjecture that this result should hold for any basic Lie superalgebra. This would represent the extension to Lie superalgebras of the Poincar\'e duality: for ordinary Lie algebras, one has
	\begin{equation}
		H^p \left( \mathfrak{g} \right) \cong H^{m-p} \left( \mathfrak{g} \right) \ ,
	\end{equation}
where the top form $\omega \in H^{m} \left( \mathfrak{g} \right)$ is the fulcrum of the duality. In \cite{CCGN2} the authors used the same argument to demonstrate the isomorphism between superform and integral form cohomologies. This shows that, if Poincar\'e duality can be extended to superalgebras, it may involve forms in different complexes, i.e., with different picture number. We then conjecture the existence of an isomorphism
	\begin{equation}
		\star : H^{(\bullet|\bullet)} \left( \mathfrak{g} , \mathbb{K} \right) \overset{\cong}{\underset{}{\longrightarrow}} H^{(m-\bullet|n-\bullet)} \left( \mathfrak{g} , \mathbb{K} \right) \ ,
	\end{equation}
	which would extend the Poincar\'e duality to superalgebras, by considering any form complex. In the following section we will introduce the definition of a generalised page zero that implements both the pseudoforms emerging from $\mathfrak{k}$ and the pseudoforms emerging from $\mathfrak{h}$.

\section{General Constructions}

In the previous sections, we have shown how to introduce two inequivalent filtrations that we have used to calculate the pseudoform cohomology for the specific example $\mathfrak{osp}(2|2)$. We have shown that \emph{both} filtrations should be considered to reproduce all the possible pseudoforms induced by the choice of a sub-(super)algebra. As we emphasised many times, the construction of such filtrations is very general and can be extended to any superalgebra $\mathfrak{g}$ with a sub-superalgebra $\mathfrak{h}$. Here we deal with the general constructions.

We start by generalising the definitions \eqref{TSSA} and \eqref{TSSZA} that keep into account both the sectors that define page 0 of the spectral sequence. The other definitions will follow directly from Def.\ref{Koszul-Hochschild-Serre Spectral Sequence}; the existence, the convergence and all the usual properties of spectral sequences are guaranteed by Def.\ref{Pseudoform Cohomology Relative to a sub-(super)algebra}: dealing with pseudoform complexes simply amount to deal with the $\mathfrak{g}$-modules $V_\bullet^{(\bullet|\bullet)} \otimes V$ instead of $V$.

\begin{defn}[Generalised KHS Spectral Sequence]\label{Generalised KHS Spectral Sequence}

	Given a Lie (super)algebra $\mathfrak{g}$ and a Lie sub-(super)algebra $\mathfrak{h}$ a $\mathfrak{g}$-module $V$\footnote{In the following, we will refer to the trivial module $V = \mathbb{K}$ only.} and denoting $\mathfrak{k}=\mathfrak{g}/\mathfrak{h}$, we define the two inequivalent filtrations
	\begin{eqnarray}
		\label{GCA} F^p C^{(q|l)} \left( \mathfrak{g}, V \right) &\defeq& \left\lbrace \omega \in C^{(q|l)} \left( \mathfrak{g} , V \right) : \forall \mathcal{Y}_a \in \mathfrak{h} , \iota_{\mathcal{Y}_{a_1}} \ldots \iota_{\mathcal{Y}_{a_{q+1-p}}} \omega = 0 \right\rbrace \ , \\
		\label{GCAA} \tilde{F}^{p} C^{(q|l)} \left( \mathfrak{g} , V \right) &\defeq& \left\lbrace \omega \in C^{(q|l)} \left( \mathfrak{g} , V \right) : \forall \mathcal{Y}^{*a} \in \mathfrak{h}^* , \mathcal{Y}^{*a_{1}} \wedge \ldots \wedge \mathcal{Y}^{*a_{q+1-p}} \wedge \omega = 0 \right\rbrace \ ,
	\end{eqnarray}
	where $p,q \in \mathbb{Z}$ and $l \in \left\lbrace 0,\dim \mathfrak{h}_1, \dim \mathfrak{k}_1, \dim \mathfrak{g}_1 \right\rbrace$. For each $l$, associated to \eqref{GCA} and \eqref{GCAA}, there exist a spectral sequence $\displaystyle \left( \mathcal{E}_s^{\bullet,\bullet} , d_s \right)_{s \in \mathbb{N} \cup \left\lbrace 0 \right\rbrace}$ that converges to $H^{(\bullet|l)} \left( \mathfrak{g} , V \right)$. In particular, we can define the \emph{total page zero} of the spectral sequence, for each $l$, as
	\begin{equation}\label{GCB}
		\mathcal{E}_0^{m,n} \defeq E_0^{m,n} \oplus \tilde{E}_0^{m,n} \defeq \frac{F^m C^{(m+n|l)} \left( \mathfrak{g} , V \right)}{F^{m+1} C^{(m+n|l)} \left( \mathfrak{g} , V \right)} \oplus \frac{\tilde{F}^{m+2n-r} C^{(m+n|l)} \left( \mathfrak{g} , V \right)}{\tilde{F}^{m+2n-r+1} C^{(m+n|l)} \left( \mathfrak{g} , V \right)} \ .
	\end{equation}
	The differentials are induced by the Chevalley-Eilenberg differential \eqref{PD} (adapted to act on pseudoforms as in Def.\ref{Pseudoform Cohomology Relative to a sub-(super)algebra}) and, for each picture number $l$, each page of the spectral sequence is defined as the cohomology of the previous one:
	\begin{equation}\label{GCBA}
		d_s : \mathcal{E}^{p,q}_s \to \mathcal{E}^{p+s,q+1-s}_s \ , \ \mathcal{E}^{\bullet,\bullet}_{s+1} \defeq \left( \mathcal{E}^{\bullet,\bullet}_s , d_s \right) \ , \ \mathcal{E}^{\bullet,\bullet}_{\infty} \cong H^{(\bullet | l)} \left( \mathfrak{g} , V \right) \ .
	\end{equation}
	As usual, if $\exists n \in \mathbb{N}\cup \left\lbrace 0 \right\rbrace : d_s = 0 , \forall s \geq n$, then we say that the spectral sequence \emph{converges at page n} and we denote
	\begin{equation}\label{GCBB}
		\mathcal{E}_n^{\bullet,\bullet} = \mathcal{E}_{n+1}^{\bullet,\bullet} = \ldots = \mathcal{E}_\infty^{\bullet,\bullet} \ .
	\end{equation}
\end{defn}

\begin{remark}
	Notice that the procedure actually refers also to the extremal cases of superforms and integral forms. In particular, Def.\ref{Generalised KHS Spectral Sequence} simplifies for $l=0, \dim \mathfrak{g}_1$. In these cases, indeed, one directly sees that either the filtration $F^pC^{(q|l)} \left( \mathfrak{g} \right)$ or the filtration $\tilde{F}^p C^{(q|l)} \left( \mathfrak{g} \right)$ is empty (we are assuming that the sub-superalgebra has non-zero odd dimension):
	\begin{equation}\label{GCC}
		l = 0 \ \implies \ \tilde{F}^p C^{(q|0)} \left( \mathfrak{g} , V \right) = \left\lbrace \omega \in C^{(q|0)} \left( \mathfrak{g} , V \right) : \forall \mathcal{Y}^{*a} \in \mathfrak{h}^* , \mathcal{Y}^{*a_{1}} \wedge \ldots \wedge \mathcal{Y}^{*a_{q+1-p}} \wedge \omega = 0 \right\rbrace = \left\lbrace 0 \right\rbrace \ ,
	\end{equation}
	as a consequence of the fact that there is no top superform;
	\begin{equation}\label{GCD}
		l = \dim \mathfrak{g}_1 \ \implies \ F^p C^{(q|\dim \mathfrak{g}_1)} \left( \mathfrak{g} , V \right) = \left\lbrace \omega \in C^{(q|\dim \mathfrak{g}_1)} \left( \mathfrak{g} , V \right) : \forall \mathcal{Y}_a \in \mathfrak{h} , \iota_{\mathcal{Y}_{a_1}} \ldots \iota_{\mathcal{Y}_{a_{q+1-p}}} \omega = 0 \right\rbrace = \left\lbrace 0 \right\rbrace \ ,
	\end{equation}
	as a consequence of the fact that there is no bottom integral form. In these two cases, page zero in \eqref{GCB} then simplifies as:
	\begin{align}
		\label{GCE} l = 0 \ \implies \ & \mathcal{E}_0^{m,n} = E_0^{m,n} = \frac{F^m C^{(m+n|0)} \left( \mathfrak{g} , V \right)}{F^{m+1} C^{(m+n|0)} \left( \mathfrak{g} , V \right)} \ , \\
		\label{GCF} l = \dim \mathfrak{g}_1 \ \implies \ & \mathcal{E}_0^{m,n} = \tilde{E}_0^{m,n} = \frac{\tilde{F}^{m+2n-r} C^{(m+n|\dim \mathfrak{g}_1)} \left( \mathfrak{g} , V \right)}{\tilde{F}^{m+2n-r+1} \Omega^{(m+n|\dim \mathfrak{g}_1)} \left( \mathfrak{g} , V \right)} \ .
	\end{align}
	This in particular shows that integral forms are kept into account when using spectral sequences via the second filtration \eqref{GCAA}.
	
	Analogous simplifications occur when dealing with pseudoforms complexes: if $\dim \mathfrak{h}_1 \neq \dim \mathfrak{k}_1$ we have
	\begin{equation}\label{GCFA}
		l = \dim \mathfrak{h}_1 \ \implies \ F^p C^{(q|\dim \mathfrak{h}_1)} \left( \mathfrak{g} , V \right) = \left\lbrace 0 \right\rbrace \ ,
	\end{equation}
	as in this case we are considering pseudoforms built out of integral forms in $\mathfrak{h}$, hence there is no lower bound when acting with contraction along vectors in $\mathfrak{h}$. On the other hand, we have
	\begin{equation}\label{GCFB}
		l = \dim \mathfrak{k}_1 \ \implies \ \tilde{F}^p C^{(q|\dim \mathfrak{k}_1)} \left( \mathfrak{g} , V \right) = \left\lbrace 0 \right\rbrace \ ,
	\end{equation}
	since now we are considering pseudoforms built out of superforms in $\mathfrak{h}^*$, hence there is no upper bound when multiplying by forms in $\mathfrak{h}^*$. On the other hand, if $\dim \mathfrak{h}_1 = \dim \mathfrak{k}_1$ it follows that both filtrations are non-empty, as we have shown for the example of $\mathfrak{g} = \mathfrak{osp}(2|2)$ and $\mathfrak{h} = \mathfrak{osp}(1|2)$.
\end{remark}

\begin{remark}
In \eqref{TSSZA} and in the paragraphs above, we emphasised that the complementary filtration related to the modules obtained by contraction operators and multiplication by forms are inequivalent. In particular, this is true for sub-superalgebras with non-trivial odd dimensions. On the other hand, if one introduces a purely even sub-algebra $\mathfrak{h}$, the filtrations $F^p C^{(q|l)} \left( \mathfrak{g} \right)$ and $\tilde{F}^p C^{(q|l)} \left( \mathfrak{g} \right)$ induce
	\begin{equation}\label{GCG}
		E_0^{m,n} \cong \tilde{E}_0^{m,n} \ ,
	\end{equation}
	 and $\mathcal{E}^{m,n}_0$ counts the spaces twice. The equivalence follows directly from the definitions \eqref{GCA} and \eqref{GCAA}, for any picture number $l$ which in this case can only assume the values $l=0, \dim \mathfrak{g}_1$:
	 \begin{align}
	 	\label{GCGA} F^m C^{(m+n|l)} \left( \mathfrak{g} , V \right) = \bigoplus_{i=0}^n C^{i} \left( \mathfrak{h} \right) \otimes C^{(m+n-i|l)} \left( \mathfrak{k} \right) \otimes V \ &\implies \ E_0^{m,n} = C^{n} \left( \mathfrak{h} \right) \otimes C^{(m|l)} \left( \mathfrak{k} \right) \otimes V \ , \\
	 	\label{GCGB} \tilde{F}^{m+2n-r} C^{(m+n|l)} \left( \mathfrak{g} , V \right) = \bigoplus_{i=n}^{\text{dim} \left( \mathfrak{h} \right)} C^{i} \left( \mathfrak{h} \right) \otimes C^{(m+n-i|l)} \left( \mathfrak{k} \right) \otimes V \ &\implies \ \tilde{E}_0^{m,n} = C^{n} \left( \mathfrak{h} \right) \otimes C^{(m|l)} \left( \mathfrak{k} \right) \otimes V \ .
	 \end{align}
	
	It is interesting to note that this exactly corresponds to what is done for the calculation of the algebraic superform cohomology for Lie superalgebras as in \cite{Fuks}. There, one does not need to introduce the second filtration. The drawback is that since $\mathfrak{h}$ does not have fermionic generators, it does not induce any pseudoform module. In order to calculate pseudoform cohomology classes at a given picture number, one has to calculate them by using an explicit realisation. In the following section, we show how to use the distributional realisation for pseudoforms and integral forms to calculate the cohomology spaces explicitly in the case of a bosonic sub-algebra for the example $\mathfrak{osp}(2|2)$.
	
	In \cite{Fuks}, the author uses $\mathfrak{h} = \mathfrak{g}_0$, and since the super-coset is purely odd, the commutation relations can be formally written as
	\begin{equation}\label{GCH}
		\left[ \mathfrak{h} , \mathfrak{h} \right] \subseteq \mathfrak{h} \ , \ \left[ \mathfrak{k} , \mathfrak{k} \right] \subseteq \mathfrak{h} \ , \ \left[ \mathfrak{h} , \mathfrak{k} \right] \subseteq \mathfrak{k} \ ,
	\end{equation}
	indicating that $\mathfrak{h}$ is reductive and $\mathfrak{k}$ is homogeneous. For the sake of clarity, we quickly review the superform cohomology (with values in the trivial module) in this case, for the example of the previous section $\mathfrak{osp}(2|2)$. We have that page zero is given by
	\begin{equation}\label{GCHA}
		E_0^{m,n} = C^{n} \left( \mathfrak{h} \right) \otimes C^{(m|0)} \left( \mathfrak{k} \right) \ .
	\end{equation}
	It is not difficult to prove (see Sec.\ref{Cohomology via Spectral Sequences}) that, because of the structure \eqref{GCH}, page 2 of the spectral sequence at picture number 0 is given by
	\begin{equation}\label{GCHB}
		E_2^{m,n} = H^{n} \left( \mathfrak{h} \right) \otimes H^{(m|0)} \left( \mathfrak{g} , \mathfrak{h} \right) \ .
	\end{equation}
	In particular, one has
	\begin{equation}\label{GCHC}
		H^{n} \left( \mathfrak{h} \right) = H^{n} \left( \mathfrak{so}(2) \oplus \mathfrak{sp}(2) \right) = \begin{cases}
			\mathbb{K} \ , \ \text{if} \ n = 0,4 \ , \\
			\Pi \mathbb{K} \ , \ \text{if} \ n = 1,3 \ , \\
			\left\lbrace 0 \right\rbrace \ , \ \text{else}.
		\end{cases}
	\end{equation}
	On the other hand, one has that the relative cohomology of the coset $\mathfrak{k}$ is infinitely generated as (in the following section we will give explicit expressions for the generators)
	\begin{equation}\label{GCHD}
		H^{(m|0)} \left( \mathfrak{g} , \mathfrak{h} \right) = \begin{cases}
			\mathbb{R} \ , \ \text{if} \ m = 0,2,4,\ldots \ , \\
			\left\lbrace 0 \right\rbrace \ , \ \text{if} \ \text{else} \ .
		\end{cases}
	\end{equation}
	We report \eqref{GCHB} in Table \eqref{TableGCsuper}.
	\begin{table}[]
\centering
\resizebox{1.07\textwidth}{!}{\hspace{-1.4cm}
\begin{tabular}{cccccccccc}
\multicolumn{1}{c|}{$\ldots$} & \multicolumn{1}{c|}{$\ldots$} & \multicolumn{1}{c|}{$\vdots$} & \multicolumn{1}{c|}{$\ldots$}                                                                              & \multicolumn{1}{c|}{$\ldots$}                                                                              & \multicolumn{1}{c|}{$\ldots$} & \multicolumn{1}{c|}{$\ldots$}                                                                              & \multicolumn{1}{c|}{$\ldots$}                                                                              & \multicolumn{1}{c|}{$\ldots$} & $\ldots$ \\ \cline{1-2} \cline{4-10} 
\multicolumn{1}{c|}{$\ldots$} & \multicolumn{1}{c|}{0}        & \multicolumn{1}{c|}{4}        & \multicolumn{1}{c|}{$H^{(0|0)} \left( \mathfrak{h} \right) \otimes H^{(4|0)} \left( \mathfrak{k} \right)$} & \multicolumn{1}{c|}{$H^{(1|0)} \left( \mathfrak{h} \right) \otimes H^{(4|0)} \left( \mathfrak{k} \right)$} & \multicolumn{1}{c|}{0}        & \multicolumn{1}{c|}{$H^{(3|0)} \left( \mathfrak{h} \right) \otimes H^{(4|0)} \left( \mathfrak{k} \right)$} & \multicolumn{1}{c|}{$H^{(4|0)} \left( \mathfrak{h} \right) \otimes H^{(4|0)} \left( \mathfrak{k} \right)$} & \multicolumn{1}{c|}{0}        & $\ldots$ \\ \cline{1-2} \cline{4-10} 
\multicolumn{1}{c|}{$\ldots$} & \multicolumn{1}{c|}{0}        & \multicolumn{1}{c|}{3}        & \multicolumn{1}{c|}{0}                                                                                     & \multicolumn{1}{c|}{0}                                                                                     & \multicolumn{1}{c|}{0}        & \multicolumn{1}{c|}{0}                                                                                     & \multicolumn{1}{c|}{0}                                                                                     & \multicolumn{1}{c|}{0}        & $\ldots$ \\ \cline{1-2} \cline{4-10} 
\multicolumn{1}{c|}{$\ldots$} & \multicolumn{1}{c|}{0}        & \multicolumn{1}{c|}{2}        & \multicolumn{1}{c|}{$H^{(0|0)} \left( \mathfrak{h} \right) \otimes H^{(2|0)} \left( \mathfrak{k} \right)$} & \multicolumn{1}{c|}{$H^{(1|0)} \left( \mathfrak{h} \right) \otimes H^{(2|0)} \left( \mathfrak{k} \right)$} & \multicolumn{1}{c|}{0}        & \multicolumn{1}{c|}{$H^{(3|0)} \left( \mathfrak{h} \right) \otimes H^{(2|0)} \left( \mathfrak{k} \right)$} & \multicolumn{1}{c|}{$H^{(4|0)} \left( \mathfrak{h} \right) \otimes H^{(2|0)} \left( \mathfrak{k} \right)$} & \multicolumn{1}{c|}{0}        & $\ldots$ \\ \cline{1-2} \cline{4-10} 
\multicolumn{1}{c|}{$\ldots$} & \multicolumn{1}{c|}{0}        & \multicolumn{1}{c|}{1}        & \multicolumn{1}{c|}{0}                                                                                     & \multicolumn{1}{c|}{0}                                                                                     & \multicolumn{1}{c|}{0}        & \multicolumn{1}{c|}{0}                                                                                     & \multicolumn{1}{c|}{0}                                                                                     & \multicolumn{1}{c|}{0}        & $\ldots$ \\ \cline{1-2} \cline{4-10} 
\multicolumn{1}{c|}{$\ldots$} & \multicolumn{1}{c|}{0}        & \multicolumn{1}{c|}{0}        & \multicolumn{1}{c|}{$H^{(0|0)} \left( \mathfrak{h} \right) \otimes H^{(0|0)} \left( \mathfrak{k} \right)$} & \multicolumn{1}{c|}{$H^{(1|0)} \left( \mathfrak{h} \right) \otimes H^{(0|0)} \left( \mathfrak{k} \right)$} & \multicolumn{1}{c|}{0}        & \multicolumn{1}{c|}{$H^{(3|0)} \left( \mathfrak{h} \right) \otimes H^{(0|0)} \left( \mathfrak{k} \right)$} & \multicolumn{1}{c|}{$H^{(4|0)} \left( \mathfrak{h} \right) \otimes H^{(0|0)} \left( \mathfrak{k} \right)$} & \multicolumn{1}{c|}{0}        & $\ldots$ \\ \cline{1-2} \cline{4-10} 
$\ldots$                      & -1                            &                               & 0                                                                                                          & 1                                                                                                          & 2                             & 3                                                                                                          & 4                                                                                                          & 5                             & $\ldots$ \\ \cline{1-2} \cline{4-10} 
\multicolumn{1}{c|}{$\ldots$} & \multicolumn{1}{c|}{0}        & \multicolumn{1}{c|}{-1}       & \multicolumn{1}{c|}{0}                                                                                     & \multicolumn{1}{c|}{0}                                                                                     & \multicolumn{1}{c|}{0}        & \multicolumn{1}{c|}{0}                                                                                     & \multicolumn{1}{c|}{0}                                                                                     & \multicolumn{1}{c|}{0}        & $\ldots$ \\ \cline{1-2} \cline{4-10} 
\multicolumn{1}{c|}{$\ldots$} & \multicolumn{1}{c|}{$\ldots$} & \multicolumn{1}{c|}{$\vdots$} & \multicolumn{1}{c|}{$\ldots$}                                                                              & \multicolumn{1}{c|}{$\ldots$}                                                                              & \multicolumn{1}{c|}{$\ldots$} & \multicolumn{1}{c|}{$\ldots$}                                                                              & \multicolumn{1}{c|}{$\ldots$}                                                                              & \multicolumn{1}{c|}{$\ldots$} & $\ldots$
\end{tabular}
}
\caption{$E_2^{m,n}$ as defined in \eqref{GCHB}. The integers $n$ and $m$ are spanned on the horizontal and vertical axes, respectively.}\label{TableGCsuper}
\end{table}
	The differential $d_2$ formally reads as
	\begin{equation}\label{GCHE}
		d_2 = V_{\mathfrak{k}} V_{\mathfrak{k}} \iota_\mathfrak{h} \ ,
	\end{equation}
	hence moving in Table \eqref{TableGCsuper} vertically by two and horizontally on the left by one. Page three of the spectral sequence is defined as
	\begin{equation}\label{GCHF}
		E_3^{m,n} \defeq H \left( E_2^{m,n} , d_2 \right) \ ,
	\end{equation}
	and it coincides with the convergence of the spectral sequence since all the higher differentials are trivial. One directly verifies that
	\begin{equation}
		H^{(p|0)} \left( \mathfrak{osp}(2|2) \right) \equiv H^p \left( \mathfrak{osp}(2|2) \right) = H^p \left( \mathfrak{sp}(2) \right) = \begin{cases}
			\mathbb{K} \ , \ \text{if} \ p=0 \ , \\
			\Pi \mathbb{K} \ , \ \text{if} \ p=3 \ , \\
			\left\lbrace 0 \right\rbrace \ , \ \text{else},
		\end{cases}
	\end{equation}
	as demonstrated in \cite{Fuks}.
\end{remark}

We are now ready to extend Thm.\ref{KHStheorem} to the case of superalgebras with the two filtrations and on pseudoforms with picture number $l$. We will deal to the trivial module $V = \mathbb{K}$ case only. In particular, we introduced the generalised page zero \eqref{GCB} that keeps both the modules described above into account. As we commented in \eqref{GCFA} and \eqref{GCFB}, the general page zero \eqref{GCB} can be simplified: if $\dim \mathfrak{h}_1 \neq \dim \mathfrak{k}_1$
\begin{eqnarray}
	\label{GCOA} l = \dim \mathfrak{h}_1 \ &\implies& \mathcal{E}_0^{m,n} = \tilde{E}_0^{m,n} = C^{(m|0)} \left( \mathfrak{k} \right) \otimes C^{(n|\dim \mathfrak{h}_1)} \left( \mathfrak{h} \right) \ , \\
	\label{GCOB} l = \dim \mathfrak{k}_1 \ &\implies& \mathcal{E}_0^{m,n} = E_0^{m,n} = C^{(m|\dim \mathfrak{k}_1)} \left( \mathfrak{k} \right) \otimes C^{(n|0)} \left( \mathfrak{h} \right) \ .
\end{eqnarray}
If $\dim \mathfrak{h}_1 = \dim \mathfrak{k}_1 = \dim \mathfrak{g}_1/2$, we have
\begin{equation}\label{GCOC}
	\mathcal{E}_0^{m,n} = \left[ C^{(m|\dim \mathfrak{g}_1/2)} \left( \mathfrak{k} \right) \otimes C^{(n|0)} \left( \mathfrak{h} \right) \right] \oplus \left[ C^{(m|0)} \left( \mathfrak{k} \right) \otimes C^{(n|\dim \mathfrak{g}_1/2)} \left( \mathfrak{h} \right) \right] \ .
\end{equation}

The computation of the following pages follows in exact analogy to the example presented in the previous section. Hence, the following Proposition:
\begin{prop}\label{KSHtheoremextended}
	Let $\mathfrak{g}$ be a Lie (super)algebra over a (characteristic-zero) field $\mathbb{K}$ and let $\mathfrak{h}$ be a Lie sub-(super)algebra reductive in $\mathfrak{g}$, $\dim \mathfrak{h}_1 \neq 0$, and denote $\mathfrak{k}=\mathfrak{g}/\mathfrak{h}$. Then, at picture number $l=\dim \mathfrak{h}_1,\dim \mathfrak{k}_1$, if $\dim \mathfrak{h}_1 \neq \dim \mathfrak{k}_1$, the first pages of the extended spectral sequence read
	\begin{eqnarray*}
		\label{GCOD} l = \dim \mathfrak{h}_1 \ &\implies& \ \mathcal{E}_1^{m,n} = \left( C^{(m|0)} \left( \mathfrak{k} \right) \right)^{\mathfrak{h}} \otimes H^{(n|\dim \mathfrak{h}_1)} \left( \mathfrak{h} \right) \ , \ \mathcal{E}_2^{m,n} = H^{(m|0)} \left( \mathfrak{g} , \mathfrak{h} \right) \otimes H^{(n|\dim \mathfrak{h}_1)} \left( \mathfrak{h} \right) \ , \\
		\label{GCOE} l = \dim \mathfrak{k}_1 \ &\implies& \ \mathcal{E}_1^{m,n} = \left( C^{(m|\dim \mathfrak{k}_1)} \left( \mathfrak{k} \right) \right)^{\mathfrak{h}} \otimes H^{(n|0)} \left( \mathfrak{h} \right) \ , \ \mathcal{E}_2^{m,n} = H^{(m|\dim \mathfrak{k}_1)} \left( \mathfrak{g} , \mathfrak{h} \right) \otimes H^{(n|0)} \left( \mathfrak{h} \right) \ .
	\end{eqnarray*}
	If $\dim \mathfrak{h}_1 = \dim \mathfrak{k}_1 = \dim \mathfrak{g}_1/2$, then the first two pages at picture number $l = \dim \mathfrak{g}_1/2$ read
	\begin{eqnarray*}
		\label{GCOF} \mathcal{E}_1^{m,n} &=& \left[ \left( C^{(m|0)} \left( \mathfrak{k} \right) \right)^{\mathfrak{h}} \otimes H^{(n|\dim \mathfrak{g}_1/2)} \left( \mathfrak{h} \right) \right] \oplus \left[ \left( C^{(m|\dim \mathfrak{g}_1/2)} \left( \mathfrak{k} \right) \right)^{\mathfrak{h}} \otimes H^{(n|0)} \left( \mathfrak{h} \right) \right] \ , \\
		\label{GCOG} \mathcal{E}_2^{m,n} &=& \left[ H^{(m|0)} \left( \mathfrak{g} , \mathfrak{h} \right) \otimes H^{(n|\dim \mathfrak{g}_1/2)} \left( \mathfrak{h} \right) \right] \oplus \left[ H^{(m|\dim \mathfrak{g}_1/2)} \left( \mathfrak{g} , \mathfrak{h} \right) \otimes H^{(n|0)} \left( \mathfrak{h} \right) \right] \ .
	\end{eqnarray*}
\end{prop}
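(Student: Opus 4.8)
The plan is to run, in full generality, the same three-step spectral-sequence computation that was carried out explicitly for $\mathfrak{g}=\mathfrak{osp}(2|2)$, $\mathfrak{h}=\mathfrak{osp}(1|2)$ in Section~\ref{Cohomology via Spectral Sequences}, using reductivity of $\mathfrak{h}$ in $\mathfrak{g}$ to pass from $\mathcal{E}_0$ to $\mathcal{E}_1$ to $\mathcal{E}_2$. Page zero is already recorded: by \eqref{GCOA}--\eqref{GCOB}, when $\dim\mathfrak{h}_1\neq\dim\mathfrak{k}_1$ exactly one of the two filtrations of Def.~\ref{Generalised KHS Spectral Sequence} is non-empty at picture number $l=\dim\mathfrak{h}_1$ (resp.\ $l=\dim\mathfrak{k}_1$), so $\mathcal{E}_0^{m,n}$ is a single tensor product $C^{(m|0)}(\mathfrak{k})\otimes C^{(n|\dim\mathfrak{h}_1)}(\mathfrak{h})$ (resp.\ $C^{(m|\dim\mathfrak{k}_1)}(\mathfrak{k})\otimes C^{(n|0)}(\mathfrak{h})$); when $\dim\mathfrak{h}_1=\dim\mathfrak{k}_1=\dim\mathfrak{g}_1/2$, \eqref{GCOC} gives $\mathcal{E}_0^{m,n}$ as the direct sum of the two. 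Since by construction (Def.~\ref{Generalised KHS Spectral Sequence}) the generalised spectral sequence is literally the direct sum of the two ordinary spectral sequences attached to $F^\bullet$ and $\tilde F^\bullet$, one of which is zero in the unequal case, the whole computation splits into independent summands and it suffices to treat a single tensor-product sector; the final direct sums in the statement then follow.

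For each sector, the first step is to identify $d_0$. Writing the Chevalley--Eilenberg differential in the schematic form \eqref{TSST} — sorting its structure-constant terms according to whether the two dual forms lie in $\mathfrak{h}^*$ or $\mathfrak{k}^*$ and whether the contraction is along $\mathfrak{h}$ or $\mathfrak{k}$ — and recalling that the term $\mathcal{Y}^*_{\mathfrak{k}}\mathcal{Y}^*_{\mathfrak{h}}\iota_{\mathcal{Y}_{\mathfrak{h}}}$ vanishes because $[\mathfrak{h},\mathfrak{k}]\subseteq\mathfrak{k}$ (reductivity), the associated-graded differential on $\mathcal{E}_0^{m,n}$ is $d_0=\mathcal{Y}^*_{\mathfrak{h}}\mathcal{Y}^*_{\mathfrak{h}}\iota_{\mathcal{Y}_{\mathfrak{h}}}+\mathcal{Y}^*_{\mathfrak{k}}\mathcal{Y}^*_{\mathfrak{h}}\iota_{\mathcal{Y}_{\mathfrak{k}}}$, exactly as in \eqref{TSSL}. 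The first summand is a copy of the CE differential of $\mathfrak{h}$ acting only on the $\mathfrak{h}$-factor, with cohomology $H^{(n|\cdot)}(\mathfrak{h})$ (cf.\ \eqref{TSSN}); the second, by the identity \eqref{TSSO}, amounts to imposing $\mathcal{L}_{\mathfrak{h}}$-invariance on the $\mathfrak{k}$-factor. Because $\mathfrak{h}$ is reductive in $\mathfrak{g}$, its adjoint action on $\mathfrak{k}$ — hence on $\Pi\mathfrak{k}$, on the (graded) symmetric and exterior powers, and on the Berezinian twists entering the pseudoform modules of Def.~\ref{Pseudoform Cohomology Relative to a sub-(super)algebra} — is completely reducible, so each $C^{(m|\cdot)}(\mathfrak{k})$ decomposes into $\mathfrak{h}$-isotypic components and a Künneth-type argument gives $H(\mathcal{E}_0^{m,n},d_0)=(C^{(m|\cdot)}(\mathfrak{k}))^{\mathfrak{h}}\otimes H^{(n|\cdot)}(\mathfrak{h})$ (cf.\ \eqref{TSSQ}, \eqref{TSSR}). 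This is the claimed $\mathcal{E}_1^{m,n}$, with the picture numbers distributed as dictated by page zero.

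The second step is to compute $\mathcal{E}_2$. The induced differential $d_1$ is read off from \eqref{TSSU}: $d_1=\mathcal{Y}^*_{\mathfrak{k}}\mathcal{Y}^*_{\mathfrak{k}}\iota_{\mathcal{Y}_{\mathfrak{k}}}+\mathcal{Y}^*_{\mathfrak{k}}\mathcal{Y}^*_{\mathfrak{h}}\iota_{\mathcal{Y}_{\mathfrak{h}}}$. On the $\mathfrak{h}$-invariant $\mathfrak{k}$-cochains that survive to $\mathcal{E}_1$ the second term is essentially $\mathcal{Y}^*_{\mathfrak{k}}\otimes\mathcal{L}_{\mathfrak{h}}$ and hence vanishes (again by reductivity), while the first term is precisely the relative differential $\nabla_{\mathfrak{k}}$ of Def.~\ref{Relative Lie (super)algebra cohomology} and acts only on the $\mathfrak{k}$-factor. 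Taking $d_1$-cohomology therefore replaces $(C^{(m|\cdot)}(\mathfrak{k}))^{\mathfrak{h}}$ by $H^{(m|\cdot)}(\mathfrak{g},\mathfrak{h})$ and leaves $H^{(n|\cdot)}(\mathfrak{h})$ untouched, giving $\mathcal{E}_2^{m,n}=H^{(m|\cdot)}(\mathfrak{g},\mathfrak{h})\otimes H^{(n|\cdot)}(\mathfrak{h})$ as in \eqref{TSSV} (summed over the two sectors in the equal-dimension case). Note that we need not analyse the higher differentials, in particular $d_2=\mathcal{Y}^*_{\mathfrak{k}}\mathcal{Y}^*_{\mathfrak{k}}\iota_{\mathcal{Y}_{\mathfrak{h}}}$, since the statement only concerns pages one and two; existence and convergence of the spectral sequence, together with the fact (stressed after \eqref{TSSH}) that every filtered piece is finite-dimensional even for pseudoforms, are guaranteed by Def.~\ref{Generalised KHS Spectral Sequence}.

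The step I expect to be the main obstacle is the reductivity argument at page one. One must be careful to use ``$\mathfrak{h}$ reductive in $\mathfrak{g}$'' in the strong sense — that the adjoint action of $\mathfrak{h}$ on $\mathfrak{g}$ is semisimple, not merely that $\mathfrak{h}$ is a reductive superalgebra in isolation — and, more delicately in the super setting where complete reducibility can fail, to check that this semisimplicity is inherited by the graded symmetric powers and Berezinian twists that make up the infinite-dimensional pseudoform modules, so that the $d_0$-cohomology genuinely factors as $\mathfrak{h}$-invariants tensored with $H^{(\bullet|\cdot)}(\mathfrak{h})$. This is exactly the hypothesis that is available for the basic classical examples (e.g.\ $\mathfrak{osp}(1|2)\subset\mathfrak{osp}(2|2)$ and $\mathfrak{g}_0\subset\mathfrak{g}$), and once it is in hand everything else is the same bookkeeping as in the worked example of Section~\ref{Cohomology via Spectral Sequences}.
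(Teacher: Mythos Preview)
Your proposal is correct and follows essentially the same route as the paper's own proof: identify page zero via \eqref{GCOA}--\eqref{GCOC}, read off $d_0$ from the decomposition \eqref{TSST} with the $\mathcal{Y}^*_{\mathfrak{k}}\mathcal{Y}^*_{\mathfrak{h}}\iota_{\mathcal{Y}_{\mathfrak{h}}}$ term killed by reductivity, use the two surviving pieces of $d_0$ to extract $H^{(n|\cdot)}(\mathfrak{h})$ on one factor and $\mathfrak{h}$-invariants on the other, and then observe that $d_1$ reduces to the relative differential $\nabla_{\mathfrak{k}}$ on the $\mathfrak{k}$-factor. If anything, you are slightly more careful than the paper in flagging that the K\"unneth-type factorisation at page one relies on semisimplicity of the $\mathfrak{h}$-action on the (infinite-dimensional) pseudoform modules, a point the paper's proof takes for granted.
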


\begin{proof}
	The proof actually does not differ significantly to the argument we used for the specific example of Sec.\ref{Cohomology via Spectral Sequences}. For the sake of clarity, here we assume $\dim \mathfrak{h}_1 = \dim \mathfrak{k}_1 = \dim \mathfrak{g}_1/2$, the proof in the complementary case follows analogously. The general CE differential reads as in \eqref{TSST}, but the reductivity hypothesis excludes the last term. The first differential for the construction of the spectral sequence, the horizontal one, formally reads as
	\begin{equation}\label{GCP}
		d_0 = \mathcal{Y}^*_{\mathfrak{h}} \mathcal{Y}^*_{\mathfrak{h}} \iota_{\mathcal{Y}_{\mathfrak{h}}} + \mathcal{Y}^*_{\mathfrak{k}} \mathcal{Y}^*_{\mathfrak{h}} \iota_{\mathcal{Y}_{\mathfrak{k}}} \ .
	\end{equation}
	Page 1 of the spectral sequence is naturally defined as in Def.\ref{Generalised KHS Spectral Sequence}:
	\begin{equation}\label{GCQ}
		\mathcal{E}_1^{m,n} \defeq H \left( \mathcal{E}_0^{m,n} , d_0 \right) \ .
	\end{equation}
	We directly see that the part $\mathcal{Y}^*_{\mathfrak{h}} \mathcal{Y}^*_{\mathfrak{h}} \iota_{\mathcal{Y}_{\mathfrak{h}}}$ of $d_0$ selects the cohomology spaces $\displaystyle H^{(n|\bullet)} \left( \mathfrak{h} \right) , \bullet = 0 , \dim \mathfrak{g}_1/2 $. On the other hand, the part $\mathcal{Y}^*_{\mathfrak{k}} \mathcal{Y}^*_{\mathfrak{h}} \iota_{\mathcal{Y}_{\mathfrak{k}}}$ selects the invariant forms among the forms in $\mathfrak{k}$, i.e., $\displaystyle \left( C^{(m|\bullet)} \left( \mathfrak{k} \right) \right)^{\mathfrak{h}} , \bullet = 0 , \dim \mathfrak{g}_1/2$. Then we have
	\begin{equation}\label{GCR}
		\mathcal{E}_1^{m,n} = \left[ \left( C^{(m| \dim \mathfrak{g}_1/2)} \left( \mathfrak{k} \right) \right)^{\mathfrak{h}} \otimes H^{(n|0)} \left( \mathfrak{h} \right) \right] \oplus \left[ \left( C^{(m|0)} \left( \mathfrak{k} \right) \right)^{\mathfrak{h}} \otimes H^{(n| \dim \mathfrak{g}_1/2)} \left( \mathfrak{h} \right) \right] \ .
	\end{equation}
	Notice that both spaces of \eqref{GCR} are zero for $n < 0$, if $\mathfrak{h}$ is a classical basic Lie (super)algebra: $H^{(n|0)} \left( \mathfrak{h} \right) = \left\lbrace 0 \right\rbrace , \forall n < 0$ follows trivially from the fact that superforms are bounded from below; $H^{(n|\dim \mathfrak{g}_1/2)} \left( \mathfrak{h} \right) = \left\lbrace 0 \right\rbrace , \forall n < 0$ follows from the fact that $H^{(n|0)} \left( \mathfrak{h} \right) \neq \left\lbrace 0 \right\rbrace \iff n \leq \text{dim}_0 \left( \mathfrak{h} \right)$ for any classical basic Lie superalgebra $\mathfrak{g}$, then, from the Berezinian complement isomorphism in Prop.\ref{Berezinian Complement Quasi-Isomorphism} it follows that $H^{(n|\dim \mathfrak{g}_1/2)} \left( \mathfrak{h} \right) \neq \left\lbrace 0 \right\rbrace \iff n \geq 0$. This means that in the tabular representation of page 1, the second and third quadrants are empty.

	The following step is the definition of page 2: the differential $d_1$, i.e., the vertical differential, for reductive sub-(super)algebras formally reads as
	\begin{equation}\label{GCS}
		d_1 = \mathcal{Y}^*_{\mathfrak{k}} \mathcal{Y}^*_{\mathfrak{k}} \iota_{\mathcal{Y}_{\mathfrak{k}}} \ ,
	\end{equation}
	and page 2 is defined as
	\begin{equation}\label{GCT}
		\mathcal{E}_2^{m,n} \defeq H \left( \mathcal{E}_1^{m,n} , d_1 \right) \ .
	\end{equation}
	The computation of page 2 is an hard task in general (as emphasised in \cite{Koszul,HocSer}), but it simplifies greatly in our case, since the second differential simply corresponds to the relative differential for the coset $\mathfrak{k}$. For reductive Lie sub-superalgebras we have that $d_1$ acts on $\displaystyle \left( C^{(m|0)} \left( \mathfrak{k} \right) \right)^{\mathfrak{h}} $ and $\displaystyle \left( C^{(m|\dim \mathfrak{g}_1/2)} \left( \mathfrak{k} \right) \right)^{\mathfrak{h}}$ only, giving the relative cohomologies. Hence, page 2 reads as
	\begin{equation}\label{GCU}
		\mathcal{E}_2^{m,n} = \left[ H^{(m|\dim \mathfrak{g}_1/2)} \left( \mathfrak{g} , \mathfrak{h} \right) \otimes H^{(n|0)} \left( \mathfrak{h} \right) \right] \oplus \left[ H^{(m|0)} \left( \mathfrak{g} , \mathfrak{h} \right) \otimes H^{(n|s)} \left( \mathfrak{h} \right) \right] \ .
	\end{equation}
	The cases when $\dim \mathfrak{h}_1 \neq \dim \mathfrak{k}_1$ follow with analogous manipulations.
\end{proof}

We want to emphasise that the description of page zero and the results obtained in Prop.\ref{KSHtheoremextended} mirror directly the results reported in theorems 15.1, 15.2 and 15.3 of \cite{Koszul}, respectively. Really, for reductive Lie sub-superalgebras, we have that \eqref{GCU} is the super Lie algebra extension of page 2 as calculated for Lie algebras (w.r.t. reductive Lie sub-algebras), constructed with the relative cohomology spaces and the cohomology spaces of the sub-algebra.

\section{Cohomology via Distributional Realisation:$\mathfrak{osp}(2|2)$}

After the formal discussion on the cohomology, we present here complete and explicit computations of it for the example of $\mathfrak{osp}(2|2)$ in terms of the distributional realisation described in Sec.\ref{Preliminaries}. We first discuss the result using the associated Poincar\'e polynomials and then we provide an explicit construction of the class representatives.

\subsection{Poincar\'e Polynomials}

We briefly review the definition of Poincar\'e series and Poincar\'e polynomials. This is not strictly necessary for the example under examination, but it paves the way for a future article \cite{In preparation}, where we will discuss how to extend Molien-Weyl integral formula (see, e.g., \cite{FultonHarris}) to the super-setting, in order to calculate the Poincar\'e series \emph{a priori}, and then use it as a guide to infer cohomology classes, for any complex of forms.

\begin{defn}
	Given $X$ a \emph{graded} $\mathbb{K}$-vector space over a (characteristic-zero) field $\mathbb{K}$ with direct decomposition into $p$-degree homogeneous subspaces given by $X = \bigoplus_{p \in \mathbb{Z}} X_p$, we call the formal series 
	\begin{eqnarray}
	\label{POIA}
		\mathpzc{P}_X(t) = \sum_{p\in \mathbb{Z}} ({\rm dim}_k \, X_p) (-t)^p
	\end{eqnarray}
	the \emph{Poincar\'e series} of $X$. Notice that we have implicitly assumed that $X$ is of \emph{finite type}, \emph{i.e.} its homogeneous subspaces $X_p$ are finite dimensional for every $p.$ The unconventional sign in $(-t)^p$ takes into account the \emph{parity} of $X_p$, which takes values in $\mathbb{Z}_2$ and it is given by $p \, \mbox{mod}\, 2$; this is particularly useful for super algebras.
\end{defn}

	If we assume that the pair $(M, \delta)$ is a differential (graded) complex for $M=\oplus_{p\in\mathbb{Z}} M_p$ and $\delta : M_p \rightarrow M_{p+1}$ for any $p$, then $H^\bullet (M , \delta) = \bigoplus_{p\in \mathbb{Z}} H^p(M , \delta)$ is a graded space. We call the numbers $b_p (M) \defeq \dim_\mathbb{K} H^p (M , \delta)$ \emph{Betti numbers}, in alaogy to the usual Betti numbers of a given manifold and its de Rham cohomology. In our algebraic context, given a Lie (super)algebra, we denote with these numbers the dimension of its Chevalley-Eilenberg $p$-cohomology groups $b_p (\mathfrak{g}) = \dim_\mathbb{K} H^p (\mathfrak{g}),$ so that the Poincar\'e series of the Lie (super)algebra $\mathfrak{g}$ is the generating function of its Betti numbers:
	\bear
		\mathpzc{P}_{\mathfrak{g}} (t) = \sum_p b_p (\mathfrak{g}) (-t)^p.
	\eear
	Notice that we used the word \virgolette series" on purpose: indeed, $H^\bullet (\mathfrak{g})$ is not in general finite-dimensional for a generic Lie superalgebra $\mathfrak{g}$ (see, e.g., \cite{CCGN2}). In Sec.\ref{Pseudoforms as Infinite-Dimensional Representations}, we introduced the notion of \virgolette picture number", as an analogous way to specify with which kind of forms (superforms, pseudoforms, integral forms) one is dealing, or, in other words, which $\mathfrak{g}$-module one is considering. To include the picture number in the Poincar\'e series, it is useful to introduce a second grading. In this case, the cohomology groups are indicated as $H^{(p|q)} \left( \mathfrak{g} \right)$, where $p$ is the form number and $q$ is the picture number, so that the Poincar\'e series reads
\begin{eqnarray}
\label{POIE}
\mathpzc{P}_{\mathfrak{g}}(t,\tilde t) = \sum_{p,q} (-t)^p (-\tilde t)^q {\rm dim} H^{(p|q)} \left( \mathfrak{g} \right) \ .
\end{eqnarray}
This allows an easy identification of cohomology spaces, as $t$ counts the form number and $\tilde{t}$ counts the picture number.

First, let us recollect some results from \cite{CE}. If one considers the bosonic sub-algebra $\mathfrak{so}(2) \oplus \mathfrak{sp}(2)$ of 
$\mathfrak{osp}(2|2)$, its Poincar\'e polynomial is factorized into a product of two polynomials (as follows from the K\"unneth formula) 
\begin{eqnarray}
\label{PPA}
\mathpzc{P}_{\mathfrak{so}(2) \oplus \mathfrak{sp}(2)}(t) = (1- t) (1-t^3) \ ,
\end{eqnarray}
counting both the cohomology classes of the abelian factor $\mathfrak{so}(2)$ and those of the non-abelian one $\mathfrak{sp}(2)$. In the same 
way, for the (super)algebra $\mathfrak{osp}(1|2)$ we have 
\begin{eqnarray}
\label{PPB}
\mathpzc{P}_{\mathfrak{osp}(1|2)}(t,\tilde t) = (1-t^3) (1 + \tilde t^2)   = (1-t^3) +  (1-t^3) \tilde t^2\,,  
\end{eqnarray}
where the polynomial $(1-t^3)$ counts the cohomology spaces of superforms $H^{(\bullet|0)}(\mathfrak{osp}(1|2))$ while 
$(1-t^3) \tilde t^2$ takes into account the cohomology spaces of integral forms $H^{(\bullet|2)} (\mathfrak{osp}(1|2))$. 

Poincar\'e series turn out to be a particularly useful tool when dealing with cosets: in \cite{GHV} there is a theorem that allows to calculate the Poincar\'e series of certain cosets: given a Lie algebra $\mathfrak{g}$ with Poincar\'e polynomial $\mathpzc{P}_\mathfrak{g} (t) = \sum_i \left( 1 - t^{c^{\mathfrak{g}}_i} \right)$, where $c^{\mathfrak{g}}_i$ are the usual exponents in the factorised form of the polynomial, and $\mathfrak{h}$ a Lie sub-algebra of $\mathfrak{g}$, of the same rank, with Poincar\'e polynomial given by $\mathpzc{P}_\mathfrak{h} (t) = \sum_i \left( 1 - t^{c^{\mathfrak{h}}_i} \right)$, then the Poincar\'e polynomial for the coset $\mathfrak{g}/\mathfrak{h}$ will be given by
\bear \label{cosetp}
\mathpzc{P}_{\mathfrak{g}/\mathfrak{h}}(t) = \frac{\prod_{i} (1-t^{c^\mathfrak{g}_i+1})}{\prod_{j} (1-t^{c^\mathfrak{h}_j+1})} \ .
\eear
Actually, since we are dealing with superalgebras, we have to adapt the previous notions in order to keep into account for the picture number. This means that the Poincar\'e polynomial of $\mathfrak{g}$ and $\mathfrak{h}$ will read in general $\mathpzc{P}_\mathfrak{g} (t, \tilde{t}) = \sum_i \left( 1 - t^{c^{\mathfrak{g}}_i} \tilde{t}^{p^{\mathfrak{g}}_i} \right)$ and $\mathpzc{P}_\mathfrak{h} (t, \tilde{t}) = \sum_i \left( 1 - t^{c^{\mathfrak{h}}_i} \tilde{t}^{p^{\mathfrak{h}}_i} \right)$, respectively. \eqref{cosetp} is extended to superalgebras as
\begin{equation}\label{cosetpA}
	\mathpzc{P}_{\mathfrak{g}/\mathfrak{h}}(t, \tilde{t}) = \frac{\prod_{i} \left( 1-t^{c^\mathfrak{g}_i+1} \tilde{t}^{p^{\mathfrak{g}}_i} \right)}{\prod_{j} \left(1-t^{c^\mathfrak{h}_j+1} \tilde{t}^{p^{\mathfrak{g}}_j} \right)} \ .
\end{equation}
This product formula is very helpful since it provides information regarding the dimension of the cohomology groups of the coset, as we will show for our main example in the following. In \cite{In preparation} we will also show how to use this formula in the opposite direction: from the cohomology of the coset $\mathfrak{g}/\mathfrak{h}$ and of the sub-(super)algebra $\mathfrak{h}$, we can reconstruct the Poincar\'e series of the algebra $\mathfrak{g}$ and hence have directly the information about the dimension of its cohomology groups.

\subsection{Explicit Construction}

Let us now move to the main example. Collecting the results of the Sec.\ref{Cohomology via Spectral Sequences} displayed in Prop.\ref{osp2|2} into the Poincar\'e polynomial of $\mathfrak{osp}(2|2)$, we get 
\begin{eqnarray}
\label{PPD}
\mathpzc{P}_{\mathfrak{osp}(2|2)}(t,\tilde t) &=& (1- t^3) (1 - t \tilde t^2) (1 + \tilde t^2) = \nonumber \\&=& (1- t^3) - (1 - t^3) t \tilde t^2 + (1-t^3) \tilde t^2  - (1-t^3) t \tilde t^4 \ .
\end{eqnarray}
Interpreting the polynomial in the second line, the first parenthesis represents superform cohomology $H^{(p|0)} \left( \mathfrak{osp} \left( 2|2 \right) \right)$ with $p=0,3$; the second term $- t \tilde t^2 + t^4 \tilde t^2$ corresponds to the pseudoform cohomology $H^{(p|2)}$, with $p=1,4$, discussed in \eqref{TSSY}
and selected with the filtration \eqref{TSSA}. The third term $\tilde t^2 -t^3 \tilde t^2$ corresponds to the pseudoform cohomology selected by the inequivalent 
filtration \eqref{TSSZA} and discussed in \eqref{TSSZM}. Finally, the last polynomial $-  t \tilde t^4 + t^4\tilde t^4$ counts the cohomology of 
integral forms $H^{(p|4)} \left( \mathfrak{g} \right)$. In addition, we have noticed that 
the Berezinian complement duality holds among all form complexes: $\star H^{(p|q)} \left( \mathfrak{g} \right) = H^{(4-p|4-q)} \left( \mathfrak{g} \right)$. This translates in the polynomial as
\begin{equation}
	\mathpzc{P}_{\mathfrak{osp}(2|2)}(t,\tilde t) = \mathpzc{P}_{\mathfrak{osp}(2|2)}(1/t,1/\tilde{t}) t^4 \tilde{t}^4 \ .
\end{equation}

Let us fix the sub-algebra $\mathfrak{h} = \mathfrak{g}_0 = \mathfrak{so}(2) \times \mathfrak{sp}(2)$. We now study the coset space $\mathfrak{k} = \mathfrak{g/h}$  by using the explicit distributional realisation of pseudoforms and integral forms, in order to describe the cohomology of $\mathfrak{g}$. For that, we compute the Poincar\'e polynomial of $\mathfrak{k}$ using \eqref{cosetpA}:
\begin{eqnarray}
\label{PPE}
\mathpzc{P}_{\mathfrak{k}}(t,\tilde t) 
&=& \frac{(1- t^4) (1 - t^2 \tilde t^2) (1 + \tilde t^2)}{(1-t^2) (1-t^4)} 
=  \frac{(1 - t^2 \tilde t^2) (1 + \tilde t^2)}{(1-t^2)} =
\nonumber \\
&=& 
 \frac{1 + (1-t^2) t^2 \tilde t^2  - t^2 \tilde t^4}{(1-t^2)} = 
\frac{1}{1-t^2} + \tilde t^2 + \frac{1}{1- \frac{1}{t^2}} \tilde t^4 \ .
\end{eqnarray}
The coset space $\mathfrak{k}$ is a $(0|4)$-dimensional space, with no bosonic generators. Therefore the MC forms $\psi^\pm, \bar\psi^\pm$ are covariantly constant, as the covariant differential $\nabla$ is trivial, so that the relative cohomology $H(\mathfrak{g} , \mathfrak{h})$ is easily computed as it is represented by invariant forms in $\mathfrak{k}$ only. The three pieces of the series \eqref{PPE} corresponds to $H^{(\bullet|0)} (\mathfrak{g} ,  \mathfrak{h}), H^{(\bullet|2)} (\mathfrak{g} ,  \mathfrak{h})$ and $H^{(\bullet|4)} (\mathfrak{g} ,  \mathfrak{h})$ and here we discuss them explicitly. 
 
The cohomology of superforms $H_{super}^\bullet(\mathfrak{k})$ is generated  by any power of the $(2|0)$-form (which is basic, i.e., horizontal and $\mathfrak{h}$-invariant)
\begin{eqnarray}
\label{PPF}
K^{(2|0)} = \psi^+\wedge \bar{\psi}^- - \bar{\psi}^+ \wedge \psi^- \ .
\end{eqnarray}
Note that, according to \eqref{VSPJ}, $K^{(2|0)} \propto dU$, so that $\mathfrak{h}$-invariance is straightforwardly verified. Since $K^{(2|0)}$ is an even cohomology representative, we should consider any power of it as well\footnote{The reductivity of $\mathfrak{h}$ in $\mathfrak{g}$ induces a ring structure on the relative cohomology. See, e.g., \cite{Koszul}.}. They are taken into account in the Poincar\'e series as $\sum_{p\leq 0} t^{2p} = 1/(1- t^2)$, corresponding to the first term of \eqref{PPE}.

$K_2$ can also be written as $K_2 = \frac12 \epsilon^{\a\b} \epsilon_{IJ} \psi^I_\alpha \psi^J_\beta$, where $\psi^I_\alpha$ are the MC forms in the real representation ($I,J=1,2$ and 
$\alpha, \beta =1,2$):
\begin{equation}
	\psi^1_1 = \psi^+ \ , \ \psi^1_2 = \psi^- \ , \ \psi^2_1 = \bar{\psi}^+ \ , \ \psi^2_2 = \bar{\psi}^- \ .
\end{equation} 

The cohomology of integral forms is computed by using the Berezinian duality prescription as follows: we start from the 
top integral form of the supercoset $\mathfrak{k}$, which explicitly reads
\begin{eqnarray}
\label{PPG}
\omega^{(0|4)} = \delta(\psi^+) \delta(\psi^-) \delta(\bar\psi^+) \delta(\bar\psi^-) \ .
\end{eqnarray}
\eqref{PPG} is trivially verified to be basic. Again, since the covariant derivative is trivial, we should select the basic integral forms only. By respecting the symmetry of the sub-algebra, we define the double-contraction 
operator $\iota_2$ as
\begin{eqnarray}
\label{PPH}
\iota_2 = \iota_{F^+} \iota_{\bar{F}^-} - \iota_{\bar F^+} \iota_{F^-} \ ,
\end{eqnarray}
where $F^\pm$ and $\bar F^\pm$ are the odd generators of the 
superalgebra. $\iota_2$ is defined to be the formal inverse (modulo multiplication by constants) of $K^{(2|0)}$. We can act with any power $\iota_2^p$ on $\omega^{(0|4)}$ 
to get the infinite number of cohomology representatives 
\begin{eqnarray}
\label{PPI}
\iota^{p}_2 \delta(\psi^+) \delta(\psi^-) \delta(\bar\psi^+) \delta(\bar\psi^-) \ , \forall p \geq 0 \ ,
\end{eqnarray}
which generate the complete integral form cohomology
$H^{(\bullet|4)} \left( \mathfrak{g} , \mathfrak{h} \right)$. Indeed, since $\iota_2^p$ corresponds to 
$t^{-2p}$ (since $\iota_2$ has \virgolette form number" -2) , the Poincar\'e series for integral forms reads 
\begin{eqnarray}
\label{PPJ}
\sum_{p=0}^\infty t^{-2p} \tilde{t}^4 = \frac{1}{(1 - \frac{1}{t^2})} \tilde t^4 \ . 
\end{eqnarray}

Let us come to the final piece of \eqref{PPE}, namely the single term $\tilde t^2$. 
This monomial suggests two possible scenarios: on one hand, it could suggest that there is a single cohomology generator among pseudoforms, with zero form number; on the other hand, it could seem to suggest that there is both such generator and a whole tower of pseudoforms obtained by multiplying (thanks to the ring structure) this generator by $\left( K^{(2|0)} \right)^p , \forall p \geq 0$ or by acting on it with the operator $\iota_2^p , \forall p \geq 0$. In this case, the corresponding term of the Poincar\'e series would read $\displaystyle \frac{\tilde{t}^2}{1-t^2} + \frac{\tilde{t}^2}{1-\frac{1}{t^2}} = \tilde{t}^2 $. As we will explicitly see, the multiplication by $K^{(2|0)}$ or the action of $\iota_2$ on such pseudoform gives zero, so that the whole pseudoform cohomology is generated by a single term.
 
To single out the pseudoform with null form number, we have to construct 
an object which is $\mathfrak{so}(2) \oplus \mathfrak{sp}(2)$ invariant. 
By using the real representation $\psi^I_\alpha$, we observe that the combinations 
\begin{eqnarray}
\label{PPM}
\eta(\psi^1) = \epsilon_{\a\b}\delta(\psi^1_\alpha) \delta(\psi^1_\beta)\,, ~~~~
\eta(\psi^2) = \epsilon_{\a\b}\delta(\psi^2_\alpha) \delta(\psi^2_\beta)\,, ~~~~
\end{eqnarray}
are invariant w.r.t. the sub-algebra $\mathfrak{sp}(2)$: given $X \in \mathfrak{sp}(2)$, we have
\begin{equation}
	\mathcal{L}_X \eta(\psi^1) = \left( \iota_X d + d \iota_X \right) 2 \delta(\psi^1_1) \delta(\psi^1_2) = \iota_X \left[ U \psi^2_1 \iota^1_1 \delta(\psi^1_1) \delta(\psi^1_2) + U \psi^2_2 \iota^1_2 \delta(\psi^1_1) \delta(\psi^1_2) \right] = 0 \ ,
\end{equation}
where we have used \eqref{VSPG} $\div$ \eqref{VSPK}. The same holds true for $\eta(\psi^2)$.

The invariance under $\mathfrak{so}(2)$ is implemented by 
requiring
\begin{eqnarray}
\label{PPN}
\sum_{\alpha=1,2} \Big( 
\psi^1_\alpha \iota_{F^2_\alpha} - \psi^2_\alpha \iota_{F^1_\alpha} 
\Big) \sigma(\psi^1, \psi^2) =0 \ ,
\end{eqnarray}
where $F^I_\alpha$ are the dual vectors to $\psi^I_\alpha$: $ \iota_{F^I_\alpha} \psi^J_\beta = \delta^J_I \delta^\alpha_\beta$. The generalized form $\sigma(\psi^1, \psi^2)$ satisfying \eqref{PPN}
depends on $\psi^I_\alpha$ through the combinations in \eqref{PPM}. The solution of \eqref{PPN}, seen as a differential equation, can be expressed in terms of the formal
$0^{th}$-order Bessel function $J_0(x) = 1 - x^2/4 + x^4/64 + \dots $ as follows:
\begin{eqnarray}
\label{PPP}
\sigma(\psi^1, \psi^2) = J_0
\Big( \sum_{\alpha} \psi^2_\a \iota_{F^1_\alpha} \Big) \eta(\psi^1) = 
\eta(\psi^1) - \frac14  \sum_{\alpha,\beta} \psi^2_\a  \psi^2_\beta \iota_{F^1_\alpha} 
\iota_{F^1_\beta}   \eta(\psi^1) + \dots \ .
\end{eqnarray}
The invariant pseudoform $\sigma(\psi^1, \psi^2)$ can also be expanded around 
$\eta(\psi^2)$, with an analogous expression that exchanges $\psi^1$ with $\psi^2$:
\begin{equation}\label{PPPA}
	\sigma(\psi^1, \psi^2) = J_0
\Big( \sum_{\alpha} \psi^1_\a \iota_{F^2_\alpha} \Big) \eta(\psi^2) = 
\eta(\psi^2) - \frac14  \sum_{\alpha,\beta} \psi^1_\a  \psi^1_\beta \iota_{F^2_\alpha} 
\iota_{F^2_\beta} \eta(\psi^2) + \dots \ .
\end{equation} 
The equivalence of the two representations is consistent with the duality for the coset, since the zero form pseudoform counted by the monomial $\tilde t^2$ in the Poincar\'e polynomial \eqref{PPD} is self-dual, as explicitly shown with the representatives \eqref{PPP} and \eqref{PPPA}. This is not the end of the story: we should verify if it is possible to construct a whole tower of pseudoforms in cohomology by means of the generators \eqref{PPF}, \eqref{PPH} and \eqref{PPP}. Namely, we can multiply $K^{(2|0)}$ by $\sigma(\psi^1, \psi^2)$ and then we have to verify if this reproduces a new cohomology class:
\begin{equation}\label{PPPB}
	K^{(2|0)} \wedge \sigma(\psi^1, \psi^2) = \frac12 \epsilon^{\a\b} \epsilon_{IJ} \psi^I_\alpha \psi^J_\beta\wedge \eta(\psi^1) - \frac18 \epsilon^{\gamma \delta} \epsilon_{IJ} \psi^I_\gamma \psi^J_\delta \wedge \sum_{\alpha,\beta} \psi^2_\a \psi^2_\beta \iota_{F^1_\alpha} \iota_{F^1_\beta} \eta(\psi^1) + \ldots \ .
\end{equation}
The first term in \eqref{PPPB} is trivially zero, all the higher terms vanish independently, thanks to a cancellation due to the minus sign contained in $K^{(2|0)}$. For example, the second term reads
\begin{equation}\label{PPPC}
	- \frac18 \epsilon^{\gamma \delta} \epsilon_{IJ} \psi^I_\gamma \psi^J_\delta \wedge \sum_{\alpha,\beta} \psi^2_\a \psi^2_\beta \iota_{F^1_\alpha} \iota_{F^1_\beta} \eta(\psi^1) = $$ $$ = - \frac14 \left( \psi^1_1 \psi^2_2 - \psi^2_1 \psi^1_2 \right) \wedge \left[ \left( \psi^2_1 \right)^2 \iota^2_{F^1_1} + 2 \psi^2_1 \psi^2_2 \iota_{F^1_1} \iota_{F^1_2} + \left( \psi^2_2 \right)^2 \iota^2_{F^1_2} \right] \eta(\psi^1) = $$ $$ = \frac{1}{2} \left[ \left( \psi^2_1 \right)^2 \psi^2_2 \iota_{F^1_1} + \psi^2_1 \left( \psi^2_2 \right)^2 \iota_{F^1_2} - \left( \psi^2_1 \right)^2 \psi^2_2 \iota_{F^1_1} - \psi^2_1 \left( \psi^2_2 \right)^2 \iota_{F^1_2} \right] \eta(\psi^1) = 0 \ ,
\end{equation}
having used the properties of the deltas, listed in \eqref{IFLSF}. It is easy to see, with analogous manipulations, that all the higher terms of \eqref{PPPB} vanish as well. On the other hand, one could ask if it is possible to generate new cohomology classes by acting with the operator $\iota_2 = \iota_{F^1_1} \iota_{F^2_2} - \iota_{F^2_1} \iota_{F^1_2}$ on $\sigma(\psi^1, \psi^2)$; a cancellation analogous to the one in \eqref{PPPB} and \eqref{PPPC} occurs. Really, one has
\begin{equation}\label{PPPD}
	\iota_2 \sigma(\psi^1, \psi^2) = \left( \iota_{F^1_1} \iota_{F^2_2} - \iota_{F^2_1} \iota_{F^1_2} \right) \eta(\psi^1) - \frac14 \left( \iota_{F^1_1} \iota_{F^2_2} - \iota_{F^2_1} \iota_{F^1_2} \right) \wedge \sum_{\alpha,\beta} \psi^2_\a \psi^2_\beta \iota_{F^1_\alpha} \iota_{F^1_\beta} \eta(\psi^1) + \ldots = 0 \ .
\end{equation}
The vanishing of the first term is trivial, because of the contractions along the vectors $F^2_\bullet$, the higher terms vanish because of the minus sign in $\iota_2$, as for \eqref{PPPC}. For example, the second term of \eqref{PPPD} reads
\begin{equation}\label{PPPE}
	- \frac14 \left( \iota_{F^1_1} \iota_{F^2_2} - \iota_{F^2_1} \iota_{F^1_2} \right) \left[ \left( \psi^2_1 \right)^2 \iota^2_{F^1_1} + 2 \psi^2_1 \psi^2_2 \iota_{F^1_1} \iota_{F^1_2} + \left( \psi^2_2 \right)^2 \iota^2_{F^1_2} \right] \eta(\psi^1) = $$ $$ = - \frac12 \left[ \psi^2_1 \iota_{F^1_1}^2 \iota_{F^1_2} + \psi^2_2 \iota^2_{F^1_2} \iota_{F^1_1} - \psi^2_1 \iota^2_{F^1_1} \iota_{F^1_2} - \psi^2_2 \iota_{F^1_1} \iota_{F^1_2}^2 \right] \eta(\psi^1) = 0 \ ,
\end{equation}
and analogously for all the higher terms. This confirms that actually there is a single cohomology generator for pseudoforms in the coset $\mathfrak{osp}(2|2) / \left( \mathfrak{so}(2) \oplus \mathfrak{sp}(2) \right)$.

The results of this section are non-trivial: we have shown that the explicit distributional realisation of pseudoforms and integral forms is a powerful tool to calculate cohomology representatives and that the explicit results are consistent with those obtained without referring to any realisation through spectral sequences. Nonetheless, we have been able to construct a pseudoform that is invariant w.r.t. $\mathfrak{so}(2) \oplus \mathfrak{sp}(2)$; the explicit realisation, supported by the abstract counterpart, could serve as a starting point for the introduction of pseudoforms in more general contexts, e.g., on supermanifolds: one could think to introduce pseudoforms as integral forms of sub-supermanifolds respecting some isometries (see also \cite{Manin:1988ds}). The general setup for these definitions is once again suggested by the algebraic setting analysed in this paper: relative (co)homology. This will be the subject of future investigations.

\section*{Acknowledgements}
\noindent This work has been partially supported by Universit\`a del Piemonte Orientale research funds. We thank S. Cacciatori, R. Catenacci and S. Noja for many useful discussions. C.A.C. is grateful to B. Jur\v co for many suggestions and discussions.

\end{document}